\newcommand{\x}{{\rm x}}
\newcommand{\Lie}{\ensuremath{\pounds}} 
\newcommand{\R}{\ensuremath{\mathbb{R}}} 
\newcommand{\N}{\ensuremath{\mathbb{N}}} 
\newcommand{\dbar}{\,{\mathchar'26\mkern-12mud}\,}
\newtheorem{thm}{Theorem}
\newtheorem{cor}{Corollary}
\newtheorem{lemma}{Lemma}
\newtheorem{conj}{Conjecture}
\theoremstyle{definition}
\newtheorem{defn}{Definition}
\begin{document}
	\title{The Hadamard condition on a Cauchy surface and the renormalized stress-energy tensor}
\author{Benito A. Ju\'arez-Aubry$^{1,2}$\thanks{{\tt benito.juarezaubry@york.ac.uk}}, Bernard S. Kay$^{2}$\thanks{{\tt bernard.kay@york.ac.uk}}, Tonatiuh Miramontes$^{1,3}$\thanks{{\tt tonatiuhmiramontes@tec.mx}} \\ and Daniel Sudarsky$^{1}$\thanks{\tt sudarsky@nucleares.unam.mx}}
\affil{$^{1}$Departamento de Gravitaci\'on y Teor\'ia de Campos, Instituto de Ciencias Nucleares, Universidad Nacional Aut\'onoma de M\'exico, A. Postal 70-543, Mexico City 045010, Mexico}
\affil{$^{2}$Department of Mathematics, University of York, York YO10 5DD, UK}
\affil{$^{3}$ Departamento de Ciencias, Instituto Tecnol\'ogico y de Estudios Superiores de Monterrey, Campus Santa Fe, Av. Carlos Lazo 100, Santa Fe, La Loma, \'Alvaro Obreg\'on, Mexico City 01389, Mexico}

\date{Revised version: \today \\ Published in JCAP \textbf{10}, 002 (2024)
doi:10.1088/1475-7516/2024/10/002}
	\maketitle
\begin{abstract}
	Given a Cauchy surface in a curved spacetime and a suitably defined quantum state on the CCR algebra of the Klein-Gordon quantum field on that surface, we show, by expanding the squared spacetime geodesic distance and the `$U$' and `$V$' Hadamard coefficients (and suitable derivatives thereof) in sufficiently accurate covariant Taylor expansions on the surface that the renormalized expectation value of the quantum stress-energy tensor on the surface is determined by the geometry of the surface and the first 4 time derivatives of the metric off the surface, in addition to the Cauchy data for the field's two-point function. This result has been anticipated in and is motivated by a previous investigation by the authors on the initial value problem in semiclassical gravity, for which the geometric initial data corresponds {\it a priori} to the metric on the surface and up to 3 time derivatives off the surface, but where it was argued that the fourth derivative can be obtained with aid of the field equations on the initial surface.
\end{abstract}

\section{Introduction}

Semiclassical gravity is a useful approximate theory which describes a regime lying between that where classical general relativity provides  a  good  effective  description,   and that  where  a  full  quantum gravity theory becomes  necessary. In this classical-quantum mixed theory, matter fields are treated as quantum fields that gravitate via the expectation value of their stress-energy tensor, sourcing geometry in the semiclassical Einstein field equations. It is expected that this theory accurately describes regimes in which gravity and quantum effects are important, but for which  scales of energy densities  or  curvatures   are far removed from the Planck scale.\footnote{Often one adds the  caveat   that  for   semiclassical  gravity to  be reliable,  the quantum  uncertainties  in the  energy momentum tensor  ought  to  be  small \cite{KuoFord},  however,  when trying to make   such a statement precise,   serious  obstacles  are encountered.  For instance,  the  case of the Minkowski  vacuum illustrates the fact that   requiring the uncertainties to  be  small  in comparison   with  the  corresponding  expectation values, does not  seem to be a universally desirable criterion. We therefore make no further comments in this direction.} 

In recent years, substantial mathematical advances have been made in order to understand the structural properties of semiclassical gravity, especially in symmetry-reduced situations or by studying simpler semiclassical models \cite{DiezTejedor:2011ci, Eltzner:2010nx, Gottschalk:2018kqt, GottRothSiem, Janssen:2022vux, Juarez-Aubry:2019jbw, Juarez-Aubry:2019jon, Juarez-Aubry:2020uim, Juarez-Aubry:2021tae, Juarez-Aubry:2021abq, JKMS2023:Programmatic, Juarez-Aubry:2023kvl, Meda:2020smb, Meda:2021zdw, Meda:2022hdh, Pinamonti:2010is, Pinamonti:2013wya, Pinamonti:2013zba, Sanders:2020osl}. We highlight \cite{JKMS2023:Programmatic}, in which the authors of this paper have put forth a number of conjectures on the well-posedness of the full (non-symmetry-reduced problem) in the case that matter is modelled by a scalar field satisfying the covariant Klein-Gordon equation.

Central to our conjectures in \cite{JKMS2023:Programmatic} is the \emph{surface Hadamard} condition. This condition characterizes the Hadamard property of a quantum state in terms of its behavior on an initial surface. It is worth emphasizing here, this condition is a ``semiclassical gravity" notion and not a ``quantum field theory in curved spacetime'' notion. (The precise definition is given in  \cite{JKMS2023:Programmatic} and depends on a prior notion of ``preliminary surface Hadamard'' which is a quantum field theory in curved spacetime notion but which will not be particularly relevant to the developments in this paper.)

But suffice it here to say that a state defined on the CCR algebra of a Riemannian 3-manifold, $({\cal C},\, h_{ab})$ together with three further (classical) symmetric two-tensors on that manifold will satisfy the surface Hadamard condition if the 3-metric, $h_{ab}$ and those three further symmetric tensors may be regarded as the restrictions to an initial surface of the metric of a solution to the semiclassical Einstein equations and also of the first 3 time-derivatives of the metric off that surface, while the state may be identified with the state in which the expectation value is taken on the right hand side of those equations.

One of the main conjectures in \cite{JKMS2023:Programmatic} is then (roughly) that for an initial such state and an initial such set of classical initial data that together satisfy the surface Hadamard condition, one can obtain a unique Hadamard solution to semiclassical gravity with that data. Note that the surface Hadamard condition subsumes the usual constraint equations of semiclassical gravity, since data that satisfies it must constitute {\it bona fide} initial data.

It turns out that this surface Hadamard condition involves a tower of further constraints on the initial data. A priori, this tower is infinite but presumably finite cut-offs of the tower will yield states (cf.\ \cite[Footnote 5]{JKMS2023:Programmatic}) whose 2-point functions differ from those of Hadamard states by continuous functions with finitely-many continuous derivatives. In any case, the minimal expectation is that reasonable initial data should be sufficiently {\it close} to being Hadamard, such that the expectation value of the renormalized stress-energy tensor on the initial value surface can be defined. Let us note, in passing, that this is a minimal requirement in order even for the usual constraint equations to make sense. 

The purpose of the present paper is to address the question of {\it how much geometric data is required on an initial surface in order to renormalize the stress-energy tensor  intrinsic to that initial surface.}   While the context for the ``surface Hadamard'' notion and for the initial value conjectures in \cite{JKMS2023:Programmatic} is semi-classical gravity and not quantum field theory in curved spacetime, this specific question can be analyzed in a fixed curved spacetime context and that is what we aim to do in this paper, taking the initial surface to be a Cauchy surface, $\mathcal{C}$, in a given fixed spacetime. We will discuss where the results of the present paper feed into those in \cite{JKMS2023:Programmatic} in Section \ref{sec:SurfaceHadamard} below.

What makes this question especially difficult is that the Hadamard form for the two-point function involves the (square of the) geodesic distance between two points and even when both those points are in a given initial surface, this depends, in general, on the geometry off the surface since the spacetime geodesics connecting two points in the surface will, in general, leave the surface.

This paper analyses this issue in detail and explores a path to deal with it, generating a recipe that yields the renormalized expectation value of the stress-energy tensor on $\mathcal{C}$ in terms of the intrinsic geometry of $\mathcal{C}$ and a choice of initial state on the CCR algebra on $\mathcal{C}$. This ``surface-defined" stress-energy tensor coincides with the restriction of the covariantly-defined {\`a la} Hadamard stress-energy tensor in the spacetime  to the surface $\mathcal{C}$. The strategy will be to obtain asymptotic expansions of the spacetime covariant quantities in the definition of the Hadamard condition (the spacetime geodesic distance, and the Hadamard coefficients) in terms of the spatial geodesic distance, the induced metric, $h_{ab}$, extrinsic curvature, $K_{ab} = (1/2) \pounds h_{ab} $, and derivatives of the extrinsic curvature off the surface, up to a  sufficiently high order, so that the spacetime Hadamard expresion  is  approximated well  enough to yield the correct  value of the stress-energy tensor on the surface.
 
The paper is organized as follows: Section \ref{sec:TRen} briefly reviews the elements of the renormalization for the stress-energy tensor of the Klein-Gordon field in curved spacetimes using Hadamard subtraction, and also serves the purpose of introducing the notation for the paper.  In Section \ref{sec:InitialData} we study how the Hadamard condition can be characterized on a Cauchy surface in terms of intrinsic geometric quantitites on the surface, namely its induced metric, $h_{ab}$, its extrinsic curvature, $K_{ab}$, and the bi-function, $\sigma$, which is defined to be half the square of the geodesic distance  intrinsic to the surface between pairs of points in the surface, together with suitable derivatives of all these geometric quantities both in the surface and normal to it.   We study 3-dimensional covariant Taylor expansions in terms of $\sigma$ and its derivatives in the surface, of the Hadamard coefficients $U$ and $V$ and of half the squared \emph{spacetime} geodesic distance, $\Sigma$, between pairs of points in the surface, to a desirable order of approximation that is sufficiently good so that the expectation value of the renormalized stress-energy tensor can be defined on the surface.

To achieve this, in Section \ref{sec:orderEstimates} we define a general notion of order-$n$ approximation for bi-scalars and show that if the approximation order for $\Sigma$, $U$ and $V$  are $6$, $4$ and $2$, respectively, then the renormalized stress-energy tensor computed using those approximations is exact. Then, in Section \ref{sec:ApproxGen} we present approximations that satisfy the conditions from Section \ref{sec:orderEstimates}. Namely, we first provide covariant Taylor series expansions for $U$ and $V$ in terms of $\Sigma$ and its derivatives.  Subsequently, we present  a Taylor expansion for $\Sigma$ and its first two normal derivatives in terms of $\sigma$ and its derivatives in the surface, yielding effectively expansions for $U$, $V$ (as well as $\Sigma$) in terms of $\sigma$ (and its derivatives).   We compute these expansions in Sections \ref{sec:HypSigma} and \ref{sec:SigmaDerivatives}.  We remark that the results for the derivatives of $\Sigma$ presented here fill an important gap in the arguments given in our previous paper \cite{JKMS2023:Programmatic} which makes conjectures about the well-posedness of the initial value problem in semiclassical gravity  (with a Klein-Gordon field) for initial states satisfying the surface Hadamard condition. The results that we find confirm that the minimal geometric data required to define the expectation value of the renormalized stress-energy tensor on an initial Cauchy surface are the induced metric on the surface and its first four derivatives off the surface as anticipated in \cite{JKMS2023:Programmatic}. In Section \ref{sec:SurfaceHadamard} we give more details on how those results feed into the arguments in \cite{JKMS2023:Programmatic}.  In Section \ref{sec:Discuss} we discuss these results and relate them with the general programme of semiclassical gravity presented in \cite{JKMS2023:Programmatic}. Some technical details and a brief review of the $3+1$ formalism we employ are included in the appendix.

We will be limiting our considerations to a globally hyperbolic spacetime $(M,g_{ab})$ foliated by Cauchy surfaces $\mathcal{C}_T$. We use abstract index notation for indices $\,_{abc\dots}$, and follow the conventions of \cite{WaldGR}. We will also make extensive use of Synge's coincidence-limit notation, whereby the coincidence limit of a bi-scalar $A$ is expressed as
\begin{equation}
	[A](\x) = \lim\limits_{\x'\to\x} A(\x,\x').
\end{equation}
 
 \section{The expectation value of the stress-energy tensor} 
 \label{sec:TRen}

The classical stress-energy tensor, $T_{ab}$, plays an essential role in General Relativity, representing the effects of matter on the spacetime metric that characterizes the \textit{gravitational field}. It acts as the \textit{source} in Einstein's equation,
	\begin{equation}
		R_{ab} - \frac{1}{2}R g_{ab} - \Lambda g_{ab} = 8 \pi G T_{ab}. \label{EinsteinEq}
	\end{equation} 
In semiclassical gravity, one replaces the right hand side of Eq. \eqref{EinsteinEq} by the expectation value of the quantum stress-energy tensor of matter fields evaluated in a suitable quantum state, $\omega$, which we denote by $\omega(T_{ab})$, and could, e.g., arise, when the quantum field is represented on a Hilbert space, $\cal H$, as $\langle\psi | \hat T_{ab}\psi\rangle$ for some vector $\psi$ in $\cal H$, or as $\mathrm{tr}(\hat \rho \hat T_{ab})$ for some density operator, $\hat\rho$, on $\cal H$. This results in the semiclassical Einstein equation,
\begin{equation}
	R_{ab} - \frac{1}{2}R g_{ab} - \Lambda g_{ab} = 8 \pi G \omega(T_{ab}). \label{SemiclassEinsteinEq}
\end{equation} 

An important technical aspect one must face is that in order to make sense of $\omega(T_{ab})$ (and therefore of Eq.\ \eqref{SemiclassEinsteinEq}) it is necessary to perform a renormalization procedure, since a na\"ive evaluation of the expectation value of a \textit{stress-energy observable} involves expectation values of products of operator-valued distributions at the same point and these are infinite or ill-defined. In the present paper we shall confine our attention to a matter model consisting of a single real scalar field obeying the covariant Klein-Gordon equation
\begin{equation}
	(g^{ab}\nabla_a \nabla_b - m^2 - \xi R ) \phi = 0, \label{KleinGordon}
\end{equation}
where $m$ is the field mass and $\xi$ a curvature coupling constant
and, for this,  we shall use the Hadamard renormalization procedure.  First we define the notion of  a Hadamard state $\omega$. This is characterized by a local condition on its  two-point function $G^+(\x, \x') = \omega(\phi(\x)\phi(\x'))$ (see \cite{KayWald} for more details). Namely, in a convex normal neighbourhood the two-point function $G^+$ is the sum of a smooth term, $W \in C^\infty(M \times M)$, and a bi-distribution, $H^\ell: C_0^\infty(M \times M) \to \mathbb{C}$, whose singular structure coincides with that of a locally constructed Hadamard parametrix of the Klein-Gordon operator. More specifically, the Wightman function of a Hadamard state admits the local representation
\begin{equation}
	\omega(\phi(\x)\phi(\x')) = H^\ell (\x, \x') + W (\x, \x'),  \label{TwoPointHad}
\end{equation}
where $H^\ell (\x, \x')$ has the form
\begin{align}
	H^\ell (\x, \x') &= \lim\limits_{\epsilon \to 0^+}\frac{1}{8\pi^2}\left(\frac{ U(\x, \x') }{\Sigma(\x,\x') + 2 i \epsilon (T(\x)-T(\x'))+\epsilon^2} \right.\nonumber\\
	&\qquad \left. + V(\x, \x') \log\left(\frac{\Sigma(\x,\x')+ 2i \epsilon (T(\x)-T(\x'))+\epsilon^2 }{\ell ^2}\right) \right), \label{HadamardSing}
\end{align}
where $T$ is an arbitrary time function, $\Sigma(\x,\x')$ is half the squared geodesic (spacetime) distance from $\x$ to $\x'$, $U$ and $V$ are smooth, symmetric bi-scalar functions, which are non-vanishing in the coincidence limit, and $\ell$ is a constant introduced for dimensional reasons that can be interpreted as a renormalization parameter. It has been shown that the Hadamard property ensures that the state satisfies a generalized positive-energy condition known as the micro-local spectrum condition \cite{Radzikowski1996}.

The bi-scalar functions $U$ and $V$ are determined by the so-called Hadamard recursion relations (see e.g. \cite{DecaniniFolacci}) as asymptotic series in $\Sigma$, and are independent of the details of the state. It can be shown \cite{DeWittBrehme} that $U(\x,\x') = \Delta^{1/2}(\x,\x')$, where
\begin{equation}
	\Delta(\x,\x') = - \frac{\operatorname{det}(-\nabla_\mu \nabla_{\nu'} \Sigma(\x,\x'))}{\sqrt{-g(\x)}\sqrt{-g(\x')}}, \label{VanVleckMoretteClosed}
\end{equation}
is the van Vleck-Morette determinant, with (un)primed derivative indices in \eqref{VanVleckMoretteClosed} acting in the (first) argument, and here $g$ is the determinant of the spacetime metric.

 Therefore, in the context of quantum field theory on a fixed curved spacetime, the singular structure for these states is known in terms of the spacetime metric, the mass and curvature-coupling parameters of the quantum field, allowing
one to construct and directly subtract $H^\ell$ in order to obtain the renormalized quadratic quantities in the stress-energy tensor.

The expectation value of the renormalized stress-energy tensor is then (here we use the formalism of \cite{DecaniniFolacci} which is technically different from the earlier formalism of \cite{Wald78}.)
\begin{equation}
	\omega(T_{ab} (x)) = \lim\limits_{\x' \rightarrow \x} \left(\mathscr{T}_{ab} \left[G^+ (\x,\x') - H^\ell(\x,\x') \right] - \frac{1}{8 \pi} g_{ab} V_1(\x, \x')\right) + \Theta_{ab} (\x), \label{TRenWald}
\end{equation}
where $\mathscr{T}_{ab}$ is given by
\begin{align}
	\mathscr{T}_{ab} &= (1-2\xi) g_{b}\,^{b'}\nabla_a \nabla_{b'} +\left(2\xi - \frac{1}{2}\right) g_{ab}g^{cd'} \nabla_c \nabla_{d'} - \frac{1}{2} g_{ab} m^2 \nonumber\\
	&\quad + 2\xi \Big[ - g_{a}\,^{a'} g_{b}\,^{b'} \nabla_{a'} \nabla_{b'} + g_{ab} g^{c d}\nabla_c \nabla_d + \frac{1}{2}G_{ab} \Big], \label{PointSplitOp}
\end{align}
with $g_a{}^{a'}$ the parallel-transport propagator \cite{DeWittBrehme}. Further, $V_1$ is the second term in the Hadamard (asymptotic) expansion
\begin{equation}
 	V(\x, \x') = \sum_{n=0}^\infty V_n (\x,\x') \Sigma^n(\x, \x'), \label{vSigma}
 \end{equation}
and in the limit equals
\begin{align}
	\lim_{\x'\to\x} V_1(\x,\x') &= \frac{1}{8} m^4 + \frac{1}{4} \left(\xi - \frac{1}{6}\right) m^2 R - \frac{1}{24}\left(\xi - \frac{1}{5} \right) \Box R \nonumber \\
	& \quad + \frac{1}{8} \left(\xi - \frac{1}{6} \right)^2 R^2 - \frac{1}{720} R_{ab} R^{ab} + \frac{1}{720} R_{abcd} R^{abcd}.
	\label{v1c}
\end{align}

The term $\Theta$ is a local ambiguity of the form $\alpha m^4 g_{ab} + \beta m^2 G_{ab} + \gamma H_{ab}$ with $H_{ab}$ a linear combination of terms coming from the scalar actions (cf.\  \cite[section 6.2]{birrell_davies_1982}) with quadratic curvature Lagrangian functions\footnote{A term coming from the Lagrangian $L_3= R_{abcd} R^{abcd}$ is not linearly independent by the Gauss-Bonnet formula.} $L_1=R^2$ and $L_2=R_{ab} R^{ab}$. Absorbing the $\alpha$ and $\beta$ coefficients by suitable renormalizations of the cosmological and Newton constants, we can write
\begin{align} \label{GeneralTheta}
	\Theta_{ab} = & (2 \alpha_1+\alpha_2) \nabla_a \nabla_b R-\frac{1}{2}\left(4 \alpha_1+\alpha_2\right)\Box R g_{ab}-\alpha_2 \Box R_{ab} \nonumber \\
	& + \frac{\alpha_2}{2} R^{cd}R_{cd} g_{ab} +\frac{\alpha_1}{2}R^2 g_{ab} -2 \alpha_1 R R_{ab}-2 \alpha_2 R^{cd}R_{cadb},
\end{align}
where $\alpha_1$ and $\alpha_2$ are dimensionless, arbitrary parameters,which to the best of our knowledge are not yet fixed by experiments. Note that if we chose the values of $\alpha_1$ and $\alpha_2$ to get rid of the fourth order terms in the semiclassical Einstein equations at a fixed renormalisation scale, $\ell$, any change of scale, $\ell \to \ell'$, in the Hadamard renormalisation prescription will yield a term of the general form of $\Theta_{ab}$.

Let us note that, if the spacetime metric is already given, the only input required from quantum theory to compute $\omega(T_{ab})$ up to a local, geometric term, is $G^+ (\x,\x')$  and its first two derivatives. In particular, if one is restricted to a Cauchy surface $\mathcal{C}$, and Gaussian coordinates adapted to it such that the normal direction is parallel to the coordinate vector $(\partial_t)^a$, this translates into the need to have been given initial data for the field in terms of the bi-scalar functions $G^+ (\x,\x')$, $\partial_t G^+(\x,\x')$, $\partial_{t'}G^+(\x,\x')$ and $\partial_{t}\partial_{t'}G^+(\x,\x')$, defined on $\mathcal{C}$.

\section{The Hadamard condition on an initial surface} \label{sec:InitialData}

In this section we construct a sufficiently precise approximation of the Hadamard condition on an initial Cauchy surface, so that we are able to define the expectation value of the renormalized stress-energy tensor on the surface in terms of initial data. The strategy is to obtain suitable approximations of the geodesic distance and the Hadamard coefficients in terms of intrinsic objects defined on the initial surface viewed as a Riemannian manifold, namely in terms of the initial Riemannian metric tensor on the surface, the surface geodesic distance (which differs from the spacetime geodesic distance in general), together with the extrinsic curvature and higher order derivatives of the metric off the surface. This allows one to obtain a suitable bi-distribution on the initial surface that coincides sufficiently well with the induced Hadamard bi-distribution on the surface, and that can be used for the renormalization purposes.

The main result of the section is stated below in Theorem \ref{th:SigmaSurfaceApprox}, which provides the sought after definition of the expectation value of the renormalized stress-energy in terms of certain bi-scalars $\tilde \Sigma$, $\tilde U$ and $\tilde V$ together with bi-scalars corresponding to their derivatives off the surface. We then proceed to explicitly construct these bi-scalars in terms of the intrinsic surface quantities in the subsequent subsections. These results put together achieve the goal of the section.

Furthermore, these results will help us build a bridge between the construction presented in this paper and the notion of the surface Hadamard condition introduced in our previous work \cite{JKMS2023:Programmatic}.
 See the remarks before Conjecture 1 in  section \ref{sec:SurfaceHadamard}.

\subsection{Estimates for approximate initial data} \label{sec:orderEstimates}

The Hadamard coefficients and half the squared geodesic distance $\Sigma(\x,\x')$ in a background spacetime can be computed in a convex normal neighbourhood, in principle, if we are given the full metric description of that spacetime.

However, it is necessary to analyse explicitly how many \textit{time} derivatives of the metric given on a Cauchy surface $\mathcal{C}$ need to be known in order to compute the two-point function of the field and its time-derivatives off the surface to a sufficient degree of accuracy that the stress-energy tensor on  $\mathcal{C}$ can be calculated.

Our approach will be to derive a covariant Taylor expansion of the spacetime half squared geodesic distance, $\Sigma(\x,\x')$, in terms of the half-squared geodesic distance, $\sigma(\x, \x')$, intrinsic to the surface $\mathcal{C}$ (viewed as a 3-dimensional Riemannian manifold in its own right) and its derivatives in the surface, and then to use this to compute Taylor series for the Hadamard coefficients, in terms of their complete $3+1$ projections, so that every object appearing in the resulting prescription, is given explicitly in terms of the metric and its derivatives on the \textit{initial surface} $\mathcal{C}$.

But first, it is necessary to estimate how good such approximations need to be in order to compute $\omega(T_{ab}(\x))$ at $\mathcal{C}$. That is, we propose that instead of $U(\x,\x')$, $V(\x,\x')$ and $\Sigma(\x,\x')$, we construct corresponding \textit{approximate} bi-scalar functions $\tilde{U}(\x,\x')$, $\tilde V (\x,\x')$ and $\tilde{\Sigma}(\x,\x')$ to create an approximate Hadamard singular parametrix such that substracting it from the two point function at the surface yields a bi-scalar function regular enough to compute a renormalized expectation value of the stress-energy tensor at the initial surface by the prescription \eqref{TRenWald} with the Hadamard part substituted by said approximation. 

Let us first define the general notion of an approximation for a bi-scalar function:
\begin{defn}
	Let $A\in C^m (M\times M)$ a bi-scalar function, where $m\in \N$. We say $B\in C^n(M\times M)$ is an order-$n$ split-point approximation (we will often  omit the ``split-point" conditional and simply say ``an order-$n$ approximation") of $A$, with $n \leq m$, if and only if for every $k \in \mathbb{N}_0$, $0\leq k\leq n$, 
	\begin{equation}
		\lim\limits_{\x'\rightarrow \x} \nabla_{a_k} \dots \nabla_{a_1} B(\x,\x') = \lim\limits_{\x'\rightarrow \x} \nabla_{a_k} \dots \nabla_{a_1} A(\x,\x'). \label{ApproximationDerivatives}
	\end{equation}

If such order-$n$ approximation can be identified as a truncated covariant Taylor series of $A$ of the form
\begin{align}\label{CovTaylorGen}
	B(\x,\x') =& \overset{0}{A}(\x) + \overset{1}{A}^b(\x) \nabla_{b} \Sigma (\x,\x') + \frac{1}{2!}\overset{2}{A}^{b_2 b_1}(\x) \nabla_{b_2} \Sigma (\x,\x')\nabla_{b_1} \Sigma (\x,\x')+ \dots \nonumber \\
	&\, + \frac{1}{n!} \overset{n}{A}^{b_{n}\dots b_1}(\x)  \nabla_{b_n} \Sigma (\x,\x') \dots \nabla_{b_1} \Sigma (\x,\x'),
\end{align}
we refer to it as an order-$n$ covariant Taylor truncation of $A$.
\end{defn}

As usual, we can obtain expressions for the coefficients of the Taylor expanison by taking into account the fundamental equation satisfied by $\Sigma(\x,\x')$ \cite{DeWittBrehme},
\begin{equation}
	g^{ab} (\nabla_a \Sigma) (\nabla_b \Sigma) = 2 \Sigma, \label{pdeSigma}
\end{equation}
which lead to the vanishing of the coincidence limits for $\Sigma$, $\nabla_a \Sigma$ and $\nabla_a \nabla_b \nabla_c \Sigma$, and the nonvanishing limit (see appendix \ref{AppSigma})
\begin{equation}\label{SigmaLimits02}
	\lim\limits_{\x'\rightarrow \x} \nabla_a \nabla_b \Sigma(\x,\x') = g_{ab}(\x).
\end{equation}

Then, the first coefficients of the Taylor expansion above are
\begin{subequations}\label{CovTaylorCoefs02}
	\begin{align}
		\overset{0}{A}(\x) &= \lim\limits_{\x'\rightarrow \x} A(\x,\x'),\\
		\overset{1}{A}_{a}(\x) &= \left(\lim\limits_{\x'\rightarrow \x} \nabla_a A(\x,\x')\right) - \nabla_a \overset{0}{A}(\x),\\
		\overset{2}{A}_{a_1 a_0}(\x) &= \left(\lim\limits_{\x'\rightarrow \x} \nabla_{a_1} \nabla_{a_0} A\right) (\x)- \nabla_{a_1} \nabla_{a_0} \overset{0}{A}(\x)-2 \nabla_{(a_1} \overset{1}{A}_{a_0)}(\x),\\
		\overset{3}{A}_{a_2 a_1 a_0} (\x) &= \left(\lim\limits_{\x'\rightarrow \x} \nabla_{(a_2} \nabla_{a_1} \nabla_{a_0)} A\right) (\x)-\nabla_{(a_2} \nabla_{a_1} \nabla_{a_0)} \overset{0}{A}(\x) -3 \nabla_{(a_2} \nabla_{a_1}\overset{1}{A}_{a_0)}(\x) \nonumber \\
		&\qquad - 3 \nabla_{(a_2} \overset{2}{A}_{a_1 a_0)}(\x) . \label{Taylor3}
	\end{align}
\end{subequations}

To recapitulate our obective in terms of the definitions we have just introduced, we want to determine the general order of approximation for $\Sigma(\x,\x')$, $U(\x,\x')$ and $V(\x,\x')$ that permits to construct an approximate Hadamard parametrix good enough for computing the exact renormalized stress-energy tensor on the surface by the prescription \eqref{TRenWald}.

Consider the result in \cite{DecaniniFolacci} for the limit of \eqref{TRenWald}, 
\begin{align}
	\omega(T_{ab}(\x)) = \frac{1}{2(2\pi)^2} &\Big(-\overset{2}{w}_{ab}+\frac{1}{2}(1-2\xi) \nabla_{a}\nabla_{b}\overset{0}{w} + g_{ab} \left(\xi -\frac{1}{4}\right) \Box \overset{0}{w} +\xi \overset{0}{w} R_{ab}\nonumber\\
	&\quad - g_{ab}[V_1]\Big) + \Theta_{ab}, \label{TrenDecaFol}
\end{align}
where $\overset{0}{w}$ and $\overset{2}{w}_{ab}$ are the corresponding terms for the covariant Taylor expansion
\begin{equation}
	W(\x,\x') = \overset{0}{w}(\x) + \overset{1}{w}^{a} (\x) \nabla_a \Sigma(\x,\x') + \overset{2}{w}^{ab} (\x) \nabla_{a} \Sigma(\x,\x') \nabla_{b} \Sigma(\x,\x') +  O(\Sigma^{3/2}), \label{TaylorW}
\end{equation}
with 
\begin{equation}
	W(\x,\x') \equiv G_{\psi}^+(\x,\x') - H^{\ell}(\x,\x'). \label{WDef}
\end{equation}

As we have already anticipated, our approach will be to propose approximations for the coefficients $U(\x,\x')$ and $V(\x,\x')$ of the Hadamard term \eqref{HadamardSing} as well as half the squared geodesic distance $\Sigma(\x,\x')$, and we will show that there is a sufficient condition on the order of approximation for each of these objects so that the Hadamard property is approximated sufficiently well that the renormalization procedure using it will give the exact stress-energy tensor. We begin by proving the following
\begin{thm} \label{Th:OrderEstimates}
	Let $\x$ and $\x'$ be points contained in a convex normal neighborhood $\mathcal{D}\subset M$, where the Hadamard singular part is given by \eqref{HadamardSing} and half the square geodesic distance $\Sigma(\x,\x')$ as well as the Hadamard coefficients $U(\x,\x')$, $V(\x,\x')$ are well defined, and let $\tilde{\Sigma}(\x,\x')$, $\tilde{U}(\x,\x')$ and $\tilde{V}(\x,\x')$ be corresponding approximations of order $n_{\Sigma}\geq 6$, $n_{U}\geq 4$ and $n_{V}\geq 2$, respectively. Then,
	\begin{equation}\label{NullDeltaH}
		\left[\nabla_a \nabla_b \left(H^\ell - \tilde{H}^\ell\right)\right] = 0,
	\end{equation}
	with
	\begin{equation}
		\tilde{H}^\ell (\x,\x') \equiv \frac{1}{2(2\pi)^2} \left(\frac{\tilde U(\x,\x')}{\tilde{\Sigma}(\x,\x')} + \tilde{V} (\x,\x') \log \left(\frac{\tilde{\Sigma}(\x,\x')}{\ell^2}\right) \right).\label{ApproximateHadamard}
	\end{equation}
\end{thm}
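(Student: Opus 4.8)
The plan is to deduce \eqref{NullDeltaH} from the stronger statement that, in a neighbourhood of the diagonal, the difference $\Delta H^\ell:=H^\ell-\tilde H^\ell$ is an ordinary bi-scalar \emph{function} of class $C^2$ whose covariant derivatives of order at most $2$ all have vanishing coincidence limit; in particular $[\Delta H^\ell]=[\nabla_a\Delta H^\ell]=[\nabla_a\nabla_b\Delta H^\ell]=0$, the last of which is \eqref{NullDeltaH}. (A preliminary point: the $i\epsilon$ prescription in $\tilde H^\ell$ is read with the same time function $T$ as in \eqref{HadamardSing}, so the prescriptions cancel in the difference and $\Delta H^\ell$ really is a function near the diagonal rather than a bi-distribution.) The starting point is the algebraic identity obtained by inserting and cancelling cross terms,
\begin{equation*}
	8\pi^{2}\,\Delta H^\ell \;=\; \frac{U-\tilde U}{\Sigma} \;+\; \tilde U\,\frac{\tilde\Sigma-\Sigma}{\Sigma\,\tilde\Sigma} \;+\; (V-\tilde V)\,\log\!\frac{\Sigma}{\ell^{2}} \;+\; \tilde V\,\log\!\frac{\Sigma}{\tilde\Sigma},
\end{equation*}
which reduces the problem to controlling four terms, each built from one of the three ``error'' bi-scalars $U-\tilde U$, $V-\tilde V$, $\Sigma-\tilde\Sigma$ and from factors that are either bounded with non-vanishing coincidence limit ($\tilde U$, $\tilde V$) or share the leading coincidence behaviour of $\Sigma$ (namely $\Sigma$ and $\tilde\Sigma$). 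Since $U$, $V$, $\Sigma$ are smooth in $\mathcal{D}$ and $\tilde U$, $\tilde V$, $\tilde\Sigma$ are the smooth covariant Taylor truncations constructed in the subsequent sections, the defining property \eqref{ApproximationDerivatives} together with \eqref{CovTaylorCoefs02} yields that the three errors are smooth and vanish at coincidence to orders $n_U+1\geq 5$, $n_V+1\geq 3$ and $n_{\Sigma}+1\geq 7$ respectively, while $\Sigma$ and $\tilde\Sigma$ vanish to order exactly $2$ with $[\nabla_a\nabla_b\Sigma]=[\nabla_a\nabla_b\tilde\Sigma]=g_{ab}$ by \eqref{SigmaLimits02}.

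I would then estimate the four terms in turn. Working in Riemann normal coordinates $\xi$ about $\x'$ (in which $2\Sigma$ is a fixed quadratic form in $\xi$, its null zeros handled by the $i\epsilon$ prescription), or equivalently within the covariant Taylor calculus, a smooth bi-scalar vanishing at coincidence to order $m$, divided by $\Sigma$ (of vanishing order $2$) or by $\Sigma\tilde\Sigma$ (of vanishing order $4$), can be written via Taylor's theorem with integral remainder as a sum of terms of the schematic form $r^{\,m-2j}\times(\text{bounded, homogeneous of degree }0)\times(\text{smooth})$, $r=|\xi|$, $j\in\{1,2\}$, which is of class $C^2$ with vanishing second-order coincidence limits precisely when $m-2j\geq 3$ and generically fails to be $C^2$ when $m-2j=2$. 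This disposes of $(U-\tilde U)/\Sigma$ (here $m=n_U+1\geq 5$, $j=1$) and of $\tilde U\,(\tilde\Sigma-\Sigma)/(\Sigma\tilde\Sigma)$ ($m=n_{\Sigma}+1\geq 7$, $j=2$); for $(V-\tilde V)\log(\Sigma/\ell^{2})$ the logarithm costs essentially nothing and order $m=n_V+1\geq 3$ of the prefactor already gives a $C^2$ result; and $\tilde V\log(\Sigma/\tilde\Sigma)=\tilde V\log\!\big(1+(\Sigma-\tilde\Sigma)/\tilde\Sigma\big)$ is even more regular, since $(\Sigma-\tilde\Sigma)/\tilde\Sigma$ vanishes to order $\geq 5$ and the logarithm equals it up to higher-order smooth corrections. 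Summing, $\Delta H^\ell$ is $C^2$ with all coincidence limits of derivatives through order $2$ equal to zero, proving \eqref{NullDeltaH}; the two lower-order identities $[\Delta H^\ell]=0$ and $[\nabla_a\Delta H^\ell]=0$ come along for free, and together with \eqref{NullDeltaH} they are exactly what is needed to conclude that the D\'ecanini--Folacci prescription \eqref{TrenDecaFol} returns the same $\omega(T_{ab}(\x))$ when $H^\ell$ is replaced by $\tilde H^\ell$.

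The step I expect to be the crux is this regularity bookkeeping rather than the power counting: it is not enough to note that each of the four terms is $O(r^{3})$; one must check that dividing the high-order-vanishing errors by $\Sigma$ or $\Sigma\tilde\Sigma$ does not generate, at the level of second derivatives, direction-dependent and hence discontinuous contributions at coincidence. This is precisely why the thresholds are $6$, $4$, $2$ and not $5$, $3$, $1$: with one order less the offending quotient would be $r^{2}$ (respectively $r^{2}\log r$) times a homogeneous-degree-zero factor, whose second derivatives admit no coincidence limit, so $[\nabla_a\nabla_b\Delta H^\ell]$ would itself fail to exist. To carry out the division step cleanly and uniformly I would isolate a short lemma: the covariant Taylor expansion of $F/\Sigma^{k}$ — and of $F\log\Sigma$ — for $F$ a smooth bi-scalar vanishing at coincidence to sufficiently high order, defines a bona fide order-$2$ (or higher) split-point approximation of $0$ in the sense defined above; applied with $k=1$ and $k=2$ this handles all four terms simultaneously.
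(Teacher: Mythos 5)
Your algebraic splitting of $8\pi^{2}(H^\ell-\tilde H^\ell)$ into the four terms is correct, and your power counting does reproduce the thresholds $n_\Sigma\geq 6$, $n_U\geq 4$, $n_V\geq 2$ (the orders of vanishing you assign to the error bi-scalars follow from the definition of an order-$n$ approximation together with Synge's rule). The gap is precisely at the step you single out as the crux: your division lemma is false in Lorentzian signature. In Riemann normal coordinates $2\Sigma$ is a quadratic form of \emph{Lorentzian} signature, so it vanishes on the whole null cone through the base point, not only at the origin. A factor of the form (degree-$m$ Taylor term)$/\Sigma^{j}$ is homogeneous of degree $m-2j$ but is \emph{not} bounded on the sphere of directions, so the schematic form ``$r^{m-2j}\times(\text{bounded, homogeneous of degree }0)\times(\text{smooth})$'' does not hold. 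Concretely, $(U-\tilde U)/\Sigma$ and $\tilde U\,(\tilde\Sigma-\Sigma)/(\Sigma\tilde\Sigma)$ blow up at null-separated pairs arbitrarily close to the diagonal, where $\Sigma=0$ while the numerators are generically nonzero; hence your ``stronger statement'' that $H^\ell-\tilde H^\ell$ is a $C^2$ function on a neighbourhood of the diagonal is false. Relatedly, the $i\epsilon$ prescriptions do not cancel in the difference: the prescription in \eqref{HadamardSing} regulates the zero set of $\Sigma$, whereas the corresponding prescription for $\tilde H^\ell$ would regulate the (different) zero set of $\tilde\Sigma$, and the two singular supports only meet at the diagonal.

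What is true, and what the paper's proof establishes, is the vanishing of the coincidence limit taken along curves whose tangent directions stay away from the null cone (in the eventual application, limits within the spacelike surface $\mathcal{C}$, where distinct nearby points are spacelike separated and $\Sigma>0$): the paper substitutes $\tilde\Sigma=\Sigma-{}^\Sigma\delta$, $\tilde U=U-{}^U\delta$, $\tilde V=V-{}^V\delta$ into $\nabla_\mu\nabla_\nu(H^\ell-\tilde H^\ell)$ and applies L'H\^opital's rule eleven times along a trajectory with tangent $\xi^a$, arriving at \eqref{LHopitalLimit}, which vanishes for every such $\xi^a$ by the hypotheses on the approximation orders. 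Your argument can be repaired in the same spirit: restrict the division lemma to cones of approach directions on which $|\Sigma|\geq c\,r^{2}$ (equivalently, to spacelike approach, which is all that Theorem \ref{th:SigmaSurfaceApprox} needs), or else work distributionally throughout; but as written, the boundedness claim, and with it the $C^2$ conclusion from which you deduce \eqref{NullDeltaH}, does not stand.
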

\begin{proof}
	Let us consider a coordinate chart $\{x^\mu\}$ on $M$ so that we deal with coordinate components, i.e., 
	\begin{equation}
		E_{\mu\nu} \equiv \nabla_\mu \nabla_\nu \left(H^\ell - \tilde{H}^\ell\right). \label{Emunu}
	\end{equation}
	Define the following bi-scalars,
	\begin{subequations}\label{deltas}
		\begin{align}
			\,^\Sigma \delta (\x,\x')&\equiv \Sigma(\x,\x') - \tilde{\Sigma}(\x,\x'),\\
			\,^U \delta(\x,\x') &\equiv U(\x,\x') - \tilde{U}(\x,\x'),\\
			\,^V \delta(\x,\x') &\equiv V(\x,\x') - \tilde{V}(\x,\x'),
		\end{align}
	\end{subequations}
	that represent a measure of the \textit{error} corresponding to each approximate function. Then, by construction, all up to (including) the second derivative of $\,^V \delta(\x,\x')$, the fourth derivative of $\,^U \delta(\x,\x')$ and the sixth derivative of $\,^\Sigma \delta (\x,\x')$ will vanish in the coincidence limit. To explicitly implement these properties, we invert the definitions \eqref{deltas} so that we make the substitutions
	\begin{subequations}\label{Invdeltas}
		\begin{align}
			\tilde{\Sigma}(\x,\x') &= \Sigma(\x,\x') - \,^\Sigma \delta (\x,\x'),\\
			\tilde{U}(\x,\x') &= U(\x,\x') - \,^U \delta(\x,\x'),\\
			\tilde{V}(\x,\x') &\equiv V(\x,\x') - \,^V \delta(\x,\x'),
		\end{align}
	\end{subequations}
	in \eqref{Emunu}, obtaining 
	\small{
		\begin{equation} 
			E_{\mu\nu} = \mathcal{E}_{\mu\nu} + \left(\log \left(\frac{\Sigma -(\,^\Sigma\delta)}{l^2}\right)-\log \left(\frac{\Sigma }{l^2}\right)\right) \nabla _{\mu }\nabla _{\nu }V-\log \left(\frac{\Sigma -(\,^\Sigma\delta)}{l^2}\right) \nabla _{\mu }\nabla _{\nu }(\,^V\delta), \label{DDdelta}
	\end{equation}}
	where 
	\small{
		\begin{align}
			\mathcal{E}_{\mu\nu} \equiv& \Bigg\lbrace (\,^\Sigma\delta)^3 \Big(-2 \Delta^{1/2} \nabla _{\mu }\Sigma \nabla _{\nu }\Sigma - \Sigma ^2 (\nabla _{\mu }\nabla _{\nu }\Delta^{1/2}+\nabla _{\mu }\Sigma \nabla _{\nu }V+\nabla _{\nu }\Sigma \nabla _{\mu }V+V \nabla _{\mu }\nabla _{\nu }\Sigma)\nonumber\\
			&\qquad\qquad +\Sigma (\nabla _{\mu }\Sigma \nabla _{\nu }\Delta^{1/2}+\Delta^{1/2} \nabla _{\mu }\nabla _{\nu }\Sigma +\nabla _{\nu }\Sigma (\nabla _{\mu }\Delta^{1/2}+V \nabla _{\mu }\Sigma))\Big)\nonumber\\
			&+\Sigma (\,^\Sigma\delta)^2 \Big(\Sigma ^2 \big(\nabla _{\mu }\nabla _{\nu }(\,^{U}\delta)+\nabla _{\nu }(\,^\Sigma\delta) (\nabla _{\mu }V-\nabla _{\mu }(\,^V\delta))+\nabla _{\mu }(\,^\Sigma\delta) (\nabla _{\nu }V-\nabla _{\nu }(\,^V\delta))\nonumber\\
			&\qquad\qquad\qquad +\nabla _{\mu }\Sigma (\nabla _{\nu }(\,^V\delta)+2 \nabla _{\nu }V)+\nabla _{\nu }\Sigma (\nabla _{\mu }(\,^V\delta)+2 \nabla _{\mu }V)\nonumber\\
			&\qquad\qquad\qquad +(V-(\,^V\delta)) \nabla _{\mu }\nabla _{\nu }(\,^\Sigma\delta)+((\,^V\delta)+2 V) \nabla _{\mu }\nabla _{\nu }\Sigma +2 \nabla _{\mu }\nabla _{\nu }\Delta^{1/2}\big)\nonumber\\
			&\qquad\qquad-3 \Sigma \big(\nabla _{\mu }\Sigma \nabla _{\nu }\Delta^{1/2}+\Delta^{1/2} \nabla _{\mu }\nabla _{\nu }\Sigma +\nabla _{\nu }\Sigma (\nabla _{\mu }\Delta^{1/2}+V \nabla _{\mu }\Sigma)\big)\nonumber\\
			&\qquad\qquad+6 \Delta^{1/2} \nabla _{\mu }\Sigma \nabla _{\nu }\Sigma \Big)\nonumber\\
			&+ \Sigma ^2 (\,^\Sigma\delta)\Big(-\Sigma ^2 \big(2 \nabla _{\nu }(\,^\Sigma\delta) (\nabla _{\mu }V-\nabla _{\mu }(\,^V\delta))+2 \nabla _{\mu }(\,^\Sigma\delta) (\nabla _{\nu }V-\nabla _{\nu }(\,^V\delta))\nonumber\\
			&\qquad\qquad\qquad +2 \nabla _{\mu }\nabla _{\nu }(\,^{U}\delta)+2 (V-(\,^V\delta)) \nabla _{\mu }\nabla _{\nu }(\,^\Sigma\delta)\nonumber\\
			&\qquad\qquad\qquad +\nabla _{\mu }\Sigma (2 \nabla _{\nu }(\,^V\delta)+\nabla _{\nu }V)+\nabla _{\nu }\Sigma (2 \nabla _{\mu }(\,^V\delta)+\nabla _{\mu }V)\nonumber\\
			&\qquad\qquad\qquad +(2 (\,^V\delta)+V) \nabla _{\mu }\nabla _{\nu }\Sigma +\nabla _{\mu }\nabla _{\nu }\Delta^{1/2}\big)\nonumber\\
			&\qquad\qquad+\Sigma \big(\nabla _{\mu }(\,^\Sigma\delta) (\nabla _{\nu }\Delta^{1/2}-\nabla _{\nu }(\,^{U}\delta))+(\Delta^{1/2}-(\,^{U}\delta)) \nabla _{\mu }\nabla _{\nu }(\,^\Sigma\delta)\nonumber\\
			&\qquad\qquad\qquad+\nabla _{\nu }(\,^\Sigma\delta) (-\nabla _{\mu }(\,^{U}\delta)+((\,^V\delta)-V) (\nabla _{\mu }(\,^\Sigma\delta)-\nabla _{\mu }\Sigma)+\nabla _{\mu }\Delta^{1/2})\nonumber\\
			&\qquad\qquad\qquad+\nabla _{\nu }\Sigma (\nabla _{\mu }(\,^{U}\delta)+(V-(\,^V\delta)) \nabla _{\mu }(\,^\Sigma\delta)+((\,^V\delta)+2 V) \nabla _{\mu }\Sigma +2 \nabla _{\mu }\Delta^{1/2})\nonumber\\
			&\qquad\qquad\qquad+\nabla _{\mu }\Sigma (\nabla _{\nu }(\,^{U}\delta)+2 \nabla _{\nu }\Delta^{1/2})+((\,^{U}\delta)+2 \Delta^{1/2}) \nabla _{\mu }\nabla _{\nu }\Sigma \big)\nonumber\\
			&\qquad\qquad-6 \Delta^{1/2} \nabla _{\mu }\Sigma \nabla _{\nu }\Sigma \Big)\nonumber\\
			&+\Sigma ^3 \Big(2 \big((\,^{U}\delta)-\Delta^{1/2}\big) \nabla _{\mu }(\,^\Sigma\delta) (\nabla _{\nu }(\,^\Sigma\delta)-\nabla _{\nu }\Sigma)\nonumber\\
			&\qquad\qquad\qquad +2 \nabla _{\mu }\Sigma \big((\Delta^{1/2}-(\,^{U}\delta)) \nabla _{\nu }(\,^\Sigma\delta)+(\,^{U}\delta) \nabla _{\nu }\Sigma \big)\nonumber\\
			&\qquad\qquad+\Sigma \big(\nabla _{\mu }(\,^\Sigma\delta) (\nabla _{\nu }(\,^{U}\delta)-\nabla _{\nu }\Delta^{1/2})+((\,^{U}\delta)-\Delta^{1/2}) \nabla _{\mu }\nabla _{\nu }(\,^\Sigma\delta)\nonumber\\
			&\qquad\qquad\qquad +\nabla _{\nu }(\,^\Sigma\delta) (\nabla _{\mu }(\,^{U}\delta)-((\,^V\delta)-V) (\nabla _{\mu }(\,^\Sigma\delta)-\nabla _{\mu }\Sigma)-\nabla _{\mu }\Delta^{1/2})\nonumber\\
			&\qquad\qquad\qquad -\nabla _{\nu }\Sigma (\nabla _{\mu }(\,^{U}\delta)+(V-(\,^V\delta)) \nabla _{\mu }(\,^\Sigma\delta)+(\,^V\delta) \nabla _{\mu }\Sigma)\nonumber\\
			&\qquad\qquad\qquad -\nabla _{\nu }(\,^{U}\delta) \nabla _{\mu }\Sigma -(\,^{U}\delta) \nabla _{\mu }\nabla _{\nu }\Sigma \big)\nonumber\\
			&\qquad\qquad+\Sigma ^2 (\nabla _{\mu }\nabla _{\nu }(\,^{U}\delta)+\nabla _{\nu }(\,^\Sigma\delta) (\nabla _{\mu }V-\nabla _{\mu }(\,^V\delta))+\nabla _{\mu }(\,^\Sigma\delta) (\nabla _{\nu }V-\nabla _{\nu }(\,^V\delta))\nonumber\\
			&\qquad\qquad\qquad +(V-(\,^V\delta)) \nabla _{\mu }\nabla _{\nu }(\,^\Sigma\delta)+\nabla _{\nu }(\,^V\delta) \nabla _{\mu }\Sigma +\nabla _{\mu }(\,^V\delta) \nabla _{\nu }\Sigma \nonumber\\
			&\qquad\qquad\qquad\qquad +(\,^V\delta) \nabla _{\mu }\nabla _{\nu }\Sigma)\Big) \Bigg\rbrace \Bigg/ \left\lbrace \Sigma ^3 (\,^\Sigma\delta-\Sigma)^3 \right\rbrace.
	\end{align}} 
	Note that $\mathcal{E}_{\mu\nu}$ is already arranged so that both numerator and denominator vanishes in the coincidence limit. Assume we take said limit to be along a trajectory with tangent vector $\xi^a$. Let us introduce the notation
	\begin{equation}
		\nabla_\xi^n f(\x) \equiv \xi^{a_n} \nabla_{a_n} (\xi^{a_{n-1}} \nabla_{a_{n-1}}(\dots \xi^{a_{1}} \nabla_{a_{1}}f(\x) \dots)), \label{dirder}
	\end{equation}
	for the $n^{\text{th}}$ directional derivative along the vector field $\xi^a$. Then we can successively use L'H\^opital rule with derivatives $\nabla_\xi^n$ on $\mathcal{E}_{\mu\nu}$ to obtain, at the eleventh stage,
	\begin{align}\label{LHopitalLimit}
		[\mathcal{E}_{\mu\nu}] = -\frac{1}{90}\Big\lbrace &2 [\nabla_\xi^6\,\,^\Sigma\delta](g_{\mu \nu}-6 \xi_{\mu } \xi_{\nu }) +24 \xi_{(\mu } [\nabla_\xi^5 \nabla _{\nu) } \,^\Sigma\delta]-15 [\nabla_\xi^4 \nabla _{\mu } \nabla _{\nu } \,^\Sigma\delta]\nonumber\\
		&-15 [\nabla_\xi^4\,\,^{U}\delta] (g_{\mu \nu} - 4 \xi_{\mu } \xi_{\nu }) -120 \xi_{(\mu } [\nabla_\xi^3 \nabla_{\nu) } \,^{U}\delta] +90 [\nabla_\xi^2 \nabla _{\mu }\nabla _{\nu }\,^{U}\delta] \nonumber\\
		&+90 [\nabla_\xi^2\,\,^V\delta] (g_{\mu \nu}-2 \xi_{\mu } \xi_{\nu })+360 \xi_{(\mu } [\nabla_\xi\nabla _{\nu) } \,^V\delta]\Big\rbrace,
	\end{align}
	which vanishes in the coincidence limit, given that the numerator comprises terms multiplying limits of derivatives of $\delta$'s of order $n_{\Sigma}\geq 6$, $n_{U}\geq 4$ and $n_{V}\geq 2$. Note that this limit is unique and trajectory-independent given that it vanishes for all $\xi^a$ thanks to the order of approximation of $\tilde{\Sigma}$, $\tilde{U}$ and $\tilde V$. 
	
	Putting together the logarithmic terms in \eqref{DDdelta} that multiply $\nabla _{\mu }\nabla _{\nu }V$, we have
	\begin{equation}
		\left[\log \left(\frac{\Sigma -(\,^\Sigma\delta)}{\Sigma}\right) \nabla _{\mu }\nabla _{\nu }V\right] = \log \left(\left[\frac{\Sigma -(\,^\Sigma\delta)}{\Sigma}\right]\right) [\nabla _{\mu }\nabla _{\nu }V] = 0,
	\end{equation}
	by direct application of L'H\^opital rule to the argument of the logarithm. The remaining logarithmic term in \eqref{DDdelta} vanishes given that $[\nabla_{\mu }\nabla _{\nu }(\,^V\delta)]$ vanishes for $n_{V}\geq 2$. Therefore,
	\begin{equation}
		[E_{\mu\nu}] = 0.
	\end{equation}
\end{proof}

\begin{cor} \label{Th:ExactD2W}
	Let $\x,\x' \in \mathcal{D}\subset M$, $\tilde{\Sigma}(\x,\x')$, $\tilde{U}(\x,\x')$, $V(\x,\x')$ and $\tilde{H}^\ell$ as in Theorem \ref{Th:OrderEstimates}, and
	\begin{equation}
		\tilde{W}(\x,\x') \equiv G_{\psi}^+(\x,\x')-\tilde H^\ell(\x,\x'),  \label{TildeW}
	\end{equation}
	then,
	\begin{subequations}
		\begin{align}
			\overset{0}{w}(\x)&= [\tilde{W}](\x), \label{W0approx}\\
			\overset{2}{w}_{ab}(\x)&= [\nabla_a \nabla_b \tilde{W}](\x),\label{W2approx}
		\end{align}
	\end{subequations}
	where $\overset{0}{w}(\x)$ and $\overset{2}{w}_{ab}(\x)$ are the zero and second order Taylor coefficients of the regular part of the Wightman two point function, cf. \eqref{WDef} and \eqref{TaylorW}.
\end{cor}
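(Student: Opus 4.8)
The corollary is Theorem~\ref{Th:OrderEstimates} read off at the level of Taylor coefficients. The plan is: (i) reduce to the difference of the two ``regular parts''; (ii) note that this difference is exactly the quantity controlled by Theorem~\ref{Th:OrderEstimates}; (iii) supplement Theorem~\ref{Th:OrderEstimates} with its (easier) zeroth- and first-order counterparts; (iv) conclude via \eqref{CovTaylorCoefs02}.

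For (i)--(ii): from \eqref{WDef} and \eqref{TildeW},
\begin{equation*}
	\tilde W(\x,\x') - W(\x,\x') = H^\ell(\x,\x') - \tilde{H}^\ell(\x,\x'),
\end{equation*}
so $\tilde W$ differs from the smooth regular part $W$ only by the error of the approximate Hadamard parametrix, which by \eqref{Invdeltas} can be rewritten as
\begin{equation*}
	H^\ell - \tilde{H}^\ell = \frac{1}{8\pi^2}\left( \frac{\,^U\delta\,\Sigma - U\,^\Sigma\delta}{\Sigma\,\tilde\Sigma} + V\log\frac{\Sigma}{\tilde\Sigma} + \,^V\delta\,\log\frac{\tilde\Sigma}{\ell^2} \right).
\end{equation*}
Since $\overset{0}{w}$ and $\overset{2}{w}_{ab}$ are the order-$0$ and order-$2$ covariant Taylor coefficients of the smooth bi-scalar $W$ (cf.\ \eqref{TaylorW}), they are fixed through \eqref{CovTaylorCoefs02} by the coincidence limits $[W]$, $[\nabla_a W]$ and $[\nabla_a\nabla_b W]$ (and ordinary covariant derivatives of these). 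Hence it suffices to show that $[H^\ell-\tilde{H}^\ell]$, $[\nabla_a(H^\ell-\tilde{H}^\ell)]$ and $[\nabla_a\nabla_b(H^\ell-\tilde{H}^\ell)]$ all vanish.

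The third of these is precisely \eqref{NullDeltaH}. The first two I would obtain by running the proof of Theorem~\ref{Th:OrderEstimates} at lower differential order: substituting \eqref{Invdeltas} recasts the relevant quantity as a rational function of $\Sigma$ and the error bi-scalars $\,^\Sigma\delta$, $\,^U\delta$, $\,^V\delta$ --- with numerator and denominator both vanishing at coincidence --- plus the logarithmic pieces above, and iterated application of L'H\^opital's rule along an arbitrary trajectory with tangent $\xi^a$ terminates, after fewer L'H\^opital applications than were needed for \eqref{LHopitalLimit}, in a sum of coincidence limits of $\nabla_\xi$-derivatives of $\,^\Sigma\delta$, $\,^U\delta$, $\,^V\delta$ of orders at most $4,2,0$ for $[H^\ell-\tilde{H}^\ell]$, and at most $5,3,1$ for $[\nabla_a(H^\ell-\tilde{H}^\ell)]$; all of these vanish because $n_{\Sigma}\ge 6$, $n_{U}\ge 4$, $n_{V}\ge 2$, and the limit is trajectory-independent exactly as in Theorem~\ref{Th:OrderEstimates}. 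The logarithmic pieces are handled as there: $[\log(\Sigma/\tilde\Sigma)]=0$ by L'H\^opital on the argument of the logarithm, so the $V\log(\Sigma/\tilde\Sigma)$ contribution drops out, while the $\,^V\delta\log(\tilde\Sigma/\ell^2)$-type contributions vanish since the relevant coincidence limits of $\,^V\delta$ (undifferentiated, or differentiated up to the needed order) are zero and $\log\tilde\Sigma$ diverges only logarithmically. (Heuristically: an order-$n$ approximation differs from the exact bi-scalar by a remainder of size $\Sigma^{(n+1)/2}$, so each term in the displayed formula for $H^\ell-\tilde{H}^\ell$, together with its first two derivatives, tends to $0$ at coincidence for the stated orders.)

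Thus $[\tilde W]=[W]$, $[\nabla_a\tilde W]=[\nabla_a W]$ and $[\nabla_a\nabla_b\tilde W]=[\nabla_a\nabla_b W]$ --- in particular these coincidence limits of $\tilde W$ exist, even though $\tilde W$ need not itself be smooth, which is the whole point of Theorem~\ref{Th:OrderEstimates}. Feeding these equalities into \eqref{CovTaylorCoefs02} shows that the order-$0$ and order-$2$ covariant Taylor coefficients of $W$ may equally well be computed from $\tilde W$, which is \eqref{W0approx}--\eqref{W2approx}. The only real work is the bookkeeping of the lower-order L'H\^opital computation; since it is strictly lighter than the one already carried out for Theorem~\ref{Th:OrderEstimates}, no genuinely new obstacle arises.
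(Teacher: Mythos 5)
Your argument is correct and takes essentially the same route as the paper: reduce to $\tilde W - W = H^\ell - \tilde H^\ell$, rewrite it in terms of the error bi-scalars $\,^\Sigma\delta$, $\,^U\delta$, $\,^V\delta$, kill the zeroth-order coincidence limit by the four-fold L'H\^opital computation (the paper's eq.\ \eqref{DeltaW}) and the second-order one by Theorem \ref{Th:OrderEstimates}. The only departure is that you also establish $[\nabla_a(H^\ell-\tilde H^\ell)]=0$, which the paper defers to the proof of Theorem \ref{th:SigmaSurfaceApprox} (the quantity $[\Upsilon_\mu]$) and does not actually need here, since the symmetry of $W$ together with Synge's rule gives $\overset{2}{w}_{ab}=[\nabla_a\nabla_b W]$, so matching the bare second-derivative coincidence limits already yields \eqref{W2approx}.
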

\begin{proof}
	Consider the \textit{difference}
	\begin{align*}
		\,^W\delta(\x,\x') &\equiv W(\x,\x') - \tilde{W}(\x,\x')\\
		&= G_{\psi}^+(\x,\x')- H^\ell(\x,\x') - G_{\psi}^+(\x,\x') + \tilde H^\ell(\x,\x') \\
		&= \tilde H^\ell(\x,\x') -H^\ell(\x,\x').
	\end{align*}
	Putting this in terms of $(\,^\Sigma \delta)$, $(\,^U\delta)$ and $(\,^V\delta)$ by means of \eqref{Invdeltas}, we have
	\begin{equation}
		\,^W\delta = \frac{1}{2(2\pi)} \left(\frac{(U-(\,^U\delta))\Sigma-U(\Sigma-(\,^\Sigma\delta))}{\Sigma(\Sigma - (\,^\Sigma\delta))}+ V \log\left(\frac{\Sigma - (\,^\Sigma\delta)}{\Sigma}\right) - (\,^V\delta) \log\left(\frac{\Sigma-(\,^\Sigma\delta)}{\ell^2}\right)\right).
	\end{equation}
	In the coincidence limit, the logarithmic terms vanish just like in Theorem \ref{Th:OrderEstimates}, while the quotient term requires to apply L'H\^opital rule four times to obtain
	\begin{equation}
		[\,^W\delta] = \frac{1}{2(2\pi)^2} \left(\frac{6 [\nabla_\xi^2\,^U\delta]-[\nabla_\xi^4 \,^\Sigma\delta]}{-6}\right)=0. \label{DeltaW}
	\end{equation}
	Therefore, for $\x\in\mathcal{D}$,
	\begin{equation}
		[\tilde{W}] (\x)= \overset{0}{w}(\x).
	\end{equation}
	Now consider
	\begin{equation*}
		\nabla_a \nabla_b W(\x,\x') - \nabla_a \nabla_b \tilde{W} (\x,\x') = \nabla_a \nabla_b \left(\tilde H^\ell(\x,\x') -H^\ell(\x,\x')\right).
	\end{equation*}
	Then, by Theorem \ref{Th:OrderEstimates}, in the coincidence limit for all $\x' \in \mathcal{D}$,
	\begin{equation}
		\overset{2}{w}_{ab}(\x) - [\nabla_a \nabla_b \tilde{W}] (\x) = 0,
	\end{equation}
	where it follows \eqref{W2approx}.
\end{proof}

With these results at hand, it is straightforward to prove the following

\begin{cor}\label{Cor:ExactRSET}
	Let $\x$ and $\x'$ be points contained in a convex normal neighborhood $\mathcal{D}\subset M$, and $\tilde{\Sigma}(\x,\x')$, $\tilde{U}(\x,\x')$ and $\tilde{V}(\x,\x')$ be corresponding approximations of order $n_{\Sigma}\geq 6$, $n_{U}\geq 4$ and $n_{V}\geq 2$ for half the squared geodesic distance $\Sigma(\x,\x')$ and the Hadamard coefficients $U(\x,\x')$ and $V(\x,\x')$, respectively. Let $\tilde{H}^\ell(\x,\x')$ and $\tilde{W}(\x,\x')$ biscalars given by \eqref{ApproximateHadamard} and \eqref{TildeW}, respectively. Then, the renormalized stress-energy tensor in $\mathcal{D}$ is \textit{exactly} given by
	\begin{align}
		\omega(\tilde T_{ab}(\x)) \equiv \frac{1}{2(2\pi)^2} &\Big(-\overset{2}{\tilde{w}}_{ab}+\frac{1}{2}(1-2\xi) \nabla_{a}\nabla_{b}\overset{0}{\tilde{w}} + g_{ab} \left(\xi -\frac{1}{4}\right) \Box \overset{0}{\tilde{w}} +\xi \overset{0}{\tilde{w}} R_{ab}\nonumber\\
		&\quad - g_{ab}[V_1]\Big) + \Theta_{ab}, \label{TrenByApprox1}
	\end{align}
	where
	\begin{subequations}
		\begin{align}
			\overset{0}{\tilde{w}}(\x) &\equiv [\tilde{W}](\x),\\
			\overset{2}{\tilde{w}}_{ab}(\x) &\equiv [\nabla_a \nabla_b \tilde{W}](\x).
		\end{align}
	\end{subequations}
\end{cor}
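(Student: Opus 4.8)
The proof is a short corollary of the two preceding results, amounting essentially to substituting one pair of tensor fields for an equal pair inside a known formula. The plan is to begin from the Decanini--Folacci expression \eqref{TrenDecaFol} for the exact renormalized stress-energy tensor, noting that its right-hand side depends on the spacetime only through the purely local/geometric quantities $R_{ab}(\x)$, the differential operators $\nabla_a\nabla_b$ and $\Box$ at $\x$, the coincidence limit $[V_1](\x)$ of \eqref{v1c}, and the ambiguity $\Theta_{ab}(\x)$ of \eqref{GeneralTheta}, and on the state \emph{only} through the two covariant Taylor coefficients $\overset{0}{w}(\x) = [W](\x)$ and $\overset{2}{w}_{ab}(\x) = [\nabla_a\nabla_b W](\x)$ of the smooth remainder $W = G^+_\psi - H^\ell$ (cf.\ \eqref{WDef}, \eqref{TaylorW}); in particular neither $\overset{1}{w}$ nor any higher coefficient enters.

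Next I would invoke Corollary \ref{Th:ExactD2W}: since $\tilde\Sigma$, $\tilde U$, $\tilde V$ are approximations of orders $n_\Sigma\geq 6$, $n_U\geq 4$, $n_V\geq 2$, the approximate remainder $\tilde W = G^+_\psi - \tilde H^\ell$ satisfies $[\tilde W](\x) = \overset{0}{w}(\x)$ and $[\nabla_a\nabla_b\tilde W](\x) = \overset{2}{w}_{ab}(\x)$ at every $\x\in\mathcal D$. By the very definitions $\overset{0}{\tilde w}\equiv[\tilde W]$ and $\overset{2}{\tilde w}_{ab}\equiv[\nabla_a\nabla_b\tilde W]$ in the statement, this reads $\overset{0}{\tilde w} = \overset{0}{w}$ and $\overset{2}{\tilde w}_{ab} = \overset{2}{w}_{ab}$ as fields on $\mathcal D$. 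Feeding these equalities into the right-hand side of \eqref{TrenDecaFol} term by term --- the geometric pieces being untouched, and the derivative terms $\nabla_a\nabla_b\overset{0}{w}$, $\Box\overset{0}{w}$, $\overset{0}{w}R_{ab}$ going over to $\nabla_a\nabla_b\overset{0}{\tilde w}$, $\Box\overset{0}{\tilde w}$, $\overset{0}{\tilde w}R_{ab}$ because $\overset{0}{\tilde w}$ agrees with the smooth function $\overset{0}{w}$ as a function of $\x$ --- turns \eqref{TrenDecaFol} into \eqref{TrenByApprox1}. Hence $\omega(\tilde T_{ab}(\x)) = \omega(T_{ab}(\x))$ for all $\x\in\mathcal D$, which is the asserted exactness.

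I do not expect any real obstacle at this stage; the analytic content --- the repeated L'H\^opital/coincidence-limit argument yielding $[\mathcal E_{\mu\nu}] = 0$ and $[\,^W\delta] = 0$ --- has already been discharged in Theorem \ref{Th:OrderEstimates} and Corollary \ref{Th:ExactD2W}. The only things to keep honest are bookkeeping: that $\tilde\Sigma$, $\tilde U$, $\tilde V$ are chosen in $C^n$ classes with $n$ large enough that the two derivatives and the coincidence limit in \eqref{TrenDecaFol} are legitimate operations on $\tilde W$, and that $\tilde W$ is being used in \eqref{TrenDecaFol} in a way consistent with \eqref{TaylorW}, i.e.\ that $\tilde W$ admits a covariant Taylor expansion whose order-$0$ and order-$2$ coefficients are $\overset{0}{\tilde w}$ and $\overset{2}{\tilde w}_{ab}$ --- which is precisely what Corollary \ref{Th:ExactD2W} supplies. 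Assembling these remarks gives \eqref{TrenByApprox1}.
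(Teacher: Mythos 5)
Your proposal is correct and follows essentially the same route as the paper: both arguments reduce the corollary to Corollary \ref{Th:ExactD2W}, using that $\overset{0}{\tilde w}=\overset{0}{w}$ and $\overset{2}{\tilde w}_{ab}=\overset{2}{w}_{ab}$ hold as identities on all of $\mathcal{D}$ (so that the derivative terms also agree), the only cosmetic difference being that the paper writes the difference $\omega(T_{ab})-\omega(\tilde T_{ab})$ and shows it vanishes, while you substitute the equalities directly into \eqref{TrenDecaFol}.
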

\begin{proof}
By taking the difference in expectation values between the exact and approximate stress energy tensor, and using the formula \eqref{TrenDecaFol}
we have that
	\begin{align*}
		 \omega(T_{ab}(\x))-\omega(\tilde{T}_{ab}(\x)) = \frac{1}{2(2\pi)^2} &\Big(\overset{2}{\tilde{w}}_{ab}-\overset{2}{w}_{ab}+\frac{1}{2}(1-2\xi) \nabla_{a}\nabla_{b}(\overset{0}{w}-\overset{0}{\tilde{w}}) \\
		&\quad + g_{ab} \left(\xi -\frac{1}{4}\right) \Box( \overset{0}{w}-\overset{0}{\tilde{w}}) +\xi R_{ab}(\overset{0}{w}-\overset{0}{\tilde{w}}) \Big),
	\end{align*}
	where, by Corollary \ref{Th:ExactD2W}, we have
	\begin{equation}
			\omega(T_{ab}(\x)) = \omega(\tilde{T}_{ab}(\x)).
	\end{equation}
\end{proof}

Note that by demmanding the approximations $\tilde{\Sigma}$, $\tilde{U}$ and $\tilde V$ to be of respective orders $n_{\Sigma}\geq 6$, $n_{U}\geq 4$ and $n_{V}\geq 2$, we are imposing a condition that ensures that in the most general case, the exact expectation value for the renormalized stress-energy tensor is obtained, which might be too stringent. That is because in special situations, for example where some symmetry is present, lower order approximations for $\Sigma$, $U$ and $V$ might still yield a correct result for the expectation value of the renormalized stress-energy tensor. However, as we do not wish to restrict consideration to any particular situation, we take this to be a minimal condition for the general case.

Now we proceed to implement this in the context of initial data.

To do so, we need to write these results in terms of data on a given spacelike surface $\mathcal{C}$ so we will employ the $3+1$ formalism (see Appendix \ref{App:3p1}). As initial data for the field will be given as two point functions defined on $\mathcal{C}$, the following notion is useful,
\begin{defn}
	Let $A\in C^m(\mathcal{C}\times \mathcal{C})$ be a surface bi-scalar function. We say $B\in C^n(\mathcal{C}\times \mathcal{C})$ is an order-$n$ surface approximation of $A$, with $n \leq m$, if and only if for every $k\in \N$, $0\leq k\leq n$, 
	\begin{equation}
		\lim\limits_{\x'\rightarrow \x} D_{a_k} \dots D_{a_1} B(\x,\x') = \lim\limits_{\x'\rightarrow \x} D_{a_k} \dots D_{a_1} A(\x,\x'),
	\end{equation}
	where $D_a$ is the derivative operator associated to the surface induced metric $h_{ab}$ such that $D_a h_{bc}=0$, which is given by
	\begin{equation}
		D_{a} T^{b_k \dots b_1 }\,_{c_l \dots c_1} = h^{b_k}\,_{d_k} \dots h^{b_1}\,_{d_1} h_{c_l}\,^{e_l} \dots h_{c_1}\,^{e_1} h_{a}\,^{f} \nabla_f T^{d_k \dots d_1 }\,_{e_l \dots e_1}.
	\end{equation}
\end{defn}

In particular, $A$ might be the restriction to $\mathcal{C}$ of some bi-scalar function defined on $M\times M$, and we will imply the said restriction whenever the term surface approximation is used for a bi-scalar defined on spacetime. Note that normal (\textit{time}) derivatives are not defined for surface approximations as they are defined only in terms of points within a given spatial surface. The following result will help us to connect the notion of (spacetime) approximation with surface approximations:

\begin{lemma}\label{Lemma:SurfaceReconstruction}
	Let $A\in C^m(M\times M)$ be a bi-scalar function and $B\in C^m(M\times M)$ be an order-$k$ approximation for $A$, with $k \leq m$. Then,  given a time function, $T$, for which $T= T_0$ defines a spatial Cauchy surface, $\mathcal{C}$, with future-directed normal $n^a$, $B$ and its $k$ covariant derivatives, restricted to the spatial surface $\mathcal{C}$ can be reconstructed in terms of surface-approximations for $A$, $\dbar A$, $\dbar^2 A$, \dots, $\dbar^k A$ of order $k, k-1, \dots, 0$, respectively, where the operator $\dbar$ is defined in Eq. \eqref{NormalDerivative}.
\end{lemma}
\begin{proof}
	Consider the $3+1$ expansion of the coincidence limits of the first $k$ covariant derivatives of $A$ (cf. Appendix \ref{App:3p1}):
	\begin{subequations}\label{3p1Coinc}
	\begin{align}
		[\nabla_a A] &= [D_a A + n_a \dbar A] = [D_a A] + n_a [\dbar A],\label{3p1CoincDA}\\
		[\nabla_{a_1} \nabla_{a_0} A] &= [D_{a_1} D_{a_0} A] + K_{a_1 a_0} [\dbar A] + n_{a_0} \left([D_{a_1} \dbar A] + K_{a_1}\,^{b} [D_b A] \right) \nonumber \\
		& \quad + n_{a_1} \left([D_{a_0} \dbar A] + K_{a_0}\,^{b} [D_b A] \right) + n_{a_1} n_{a_0} \left([\dbar^2 A] + u^b [D_b A] \right),  \label{3p1CoincD2A}\\
		&\vdots \nonumber\\
		[\nabla_{a_{k-1}} \dots \nabla_{a_0} A] &= [D_{a_{k-1}} \dots D_{a_0} A] + \dots + n_{a_{k-1}} \dots n_{a_0} \Big( [\dbar^k A] + \dots \Big),
	\end{align}
\end{subequations}
where $K_{ab}$ is the extrinsic curvature and $u_a \equiv -n^b \nabla_b n_a$. Let  $\mathcal{B}$, $\,^1\mathcal{B}$, $\,^2 \mathcal{B}$, \dots, $\,^k \mathcal{B}$  be the surface approximations for $A$, $\dbar A$, $\dbar^2 A$, \dots, $\dbar^k A$ of order $k, k-1, \dots, 0$, respectively. Then, substituting the definition of surface approximation for $\mathcal{B}$, $\,^1 \mathcal{B}$, $\,^2 \mathcal{B}$, \dots, $\,^k \mathcal{B}$, as well as the definition of order $k$ approximation for $B$, we have
	\begin{align*}
		[\nabla_a B] &= [D_a \mathcal{B}] + n_a [\,^1 \mathcal{B}],\\
		[\nabla_{a_1} \nabla_{a_0} B] &= [D_{a_1} D_{a_0} \mathcal{B}] + K_{a_1 a_0} [\,^1 \mathcal{B}] + n_{a_0} \left([D_{a_1} \,^1 \mathcal{B}] + K_{a_1}\,^{b} [D_b \mathcal{B}] \right) \\
		& \quad + n_{a_1} \left([D_{a_0} \,^1 \mathcal{B}] + K_{a_0}\,^{b} [D_b \mathcal{B}] \right) + n_{a_0} n_{a_1} \left([\,^2 \mathcal{B}] + u^b [D_b \mathcal{B}] \right),\\
		&\vdots\\
		[\nabla_{a_{k-1}} \dots \nabla_{a_0} B] &= [D_{a_{k-1}} \dots D_{a_0}\mathcal{B}] + \dots + n_{a_{k-1}} \dots n_{a_0} \Big( [\,^k \mathcal{B}] + \dots \Big).
	\end{align*}
	Therefore, the limits that define $B$ are given in terms of the surface approximations for $A$ and each one of its first $n$ normal derivatives, of orders $n, n-1, \dots, 0$.	
\end{proof}
In other words, an order-$n$ (spacetime) approximation amounts to the information of $n+1$ hypersurface approximations. However, in the case of half the squared geodesic distance $\Sigma$, we will not require all this information in order to compute $\omega(T_{ab})$ on $\mathcal{C}$. We can see this is the case with the help of the following lemma:
\begin{lemma}\label{lemmaSurface}
	Let $\x$ and $\x'$ be points on a spacelike surface $\mathcal{C}$ of the spacetime $(M, g_{ab})$ and contained in a convex normal neighbourhood. Then, one can choose the unique geodesic trajectory along $\mathcal{C}$ going from $\x'$ to $\x$ in order to compute coincidence limits for functions and tensors, so that spacetime directional derivatives coincide  with $\mathcal{C}$-tangent directional derivatives.
\end{lemma}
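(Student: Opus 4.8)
The plan is to exploit the path–independence of Synge‑type coincidence limits and then to select the most convenient limiting trajectory. First I would recall the two classical facts that do the work: for a smooth bi‑scalar — or, working componentwise in a parallel frame along the chosen curve, a smooth bi‑tensor — $A$, the coincidence limits $[\nabla_{a_k}\cdots\nabla_{a_1}A]$ exist, define genuine tensor fields on $M$, and are independent of the trajectory along which $\x'\to\x$ is taken; moreover each such limit may be computed by restricting $\x'=\gamma(\lambda)$ to \emph{any} smooth curve with $\gamma(0)=\x$ and iterating L'H\^opital's rule in the single parameter $\lambda$, exactly the device already used in the proof of Theorem~\ref{Th:OrderEstimates} (see also Appendix~\ref{AppSigma} and the directional‑derivative notation \eqref{dirder}). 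Hence we are genuinely free to choose the curve.

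Next I would choose that curve to be the intrinsic geodesic of $(\mathcal{C},h_{ab})$. Since $\x$ and $\x'$ both lie on the Riemannian manifold $\mathcal{C}$, and every point of a Riemannian manifold admits arbitrarily small geodesically convex neighbourhoods (Whitehead), for $\x'$ sufficiently close to $\x$ there is a unique $\mathcal{C}$‑geodesic $\gamma$ joining $\x'$ to $\x$ and lying entirely in $\mathcal{C}$; for $\x'$ close enough this $\gamma$ also stays inside the given convex normal neighbourhood of $M$, so it is an admissible limiting trajectory. Its tangent $\xi^a=\dot\gamma^a$ is everywhere tangent to $\mathcal{C}$, i.e. $n_a\xi^a=0$ and the projector onto $T\mathcal{C}$ fixes $\xi^a$.

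The remaining geometric observation is that, along such a $\gamma$, the derivative that actually enters the L'H\^opital computation is the intrinsic one. For any scalar $f$ on $M$ one has $\xi^a\nabla_a f=\xi^a h_a{}^b\nabla_b f=\xi^a D_a(f|_{\mathcal{C}})$ on $\gamma$; and because $\gamma$ has vanishing intrinsic acceleration, $\xi^aD_a\xi^b=0$, iterating produces no extra connection terms, so $\tfrac{d^n}{d\lambda^n}(f\circ\gamma)=\xi^{a_n}\cdots\xi^{a_1}D_{a_n}\cdots D_{a_1}(f|_{\mathcal{C}})$ along $\gamma$. The same identity holds componentwise for bi‑tensors once the free indices are parallel‑transported along $\gamma$. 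Consequently every coincidence limit needed in the construction — those of $\tilde W$, of $\Sigma$ and its derivatives, etc. — with both base points on $\mathcal{C}$ can be evaluated from surface‑covariant derivatives $D_a$ contracted with the surface‑tangent vector $\xi^a$, which is precisely the assertion; path‑independence from the first step guarantees the answer so obtained is the correct one.

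The point to be careful about in writing out the details is the bookkeeping that relates the iterated \emph{spacetime} covariant derivative $\xi^{a_1}\cdots\xi^{a_k}\nabla_{a_1}\cdots\nabla_{a_k}f$ to the total $\lambda$‑derivative $\tfrac{d^k}{d\lambda^k}(f\circ\gamma)$ that L'H\^opital's rule manipulates: along a $\mathcal{C}$‑geodesic the \emph{spacetime} acceleration $\xi^a\nabla_a\xi^b$ does not vanish, being normal to $\mathcal{C}$ and proportional to $(K_{ab}\xi^a\xi^b)\,n^c$ by the Gauss--Weingarten relations. These correction terms are exactly the extrinsic‑curvature and normal‑derivative contributions already recorded in the $3+1$ decompositions \eqref{3p1Coinc}, so they are accounted for there rather than being an error; the content of this lemma is simply that adopting the intrinsic geodesic of $\mathcal{C}$ as the limiting trajectory is legitimate and makes the directional derivative that appears in the computation the $\mathcal{C}$‑tangential one.
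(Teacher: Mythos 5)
Your proposal is correct and essentially reproduces the paper's argument: both choose the unique intrinsic geodesic of $(\mathcal{C},h_{ab})$ and show by iteration that the directional derivatives entering the coincidence-limit (L'H\^opital) computations reduce to $\xi^{a_n}\cdots\xi^{a_1}D_{a_n}\cdots D_{a_1}f$, the only cosmetic difference being that you invoke the vanishing intrinsic acceleration $\xi^a D_a\xi^b=0$, while the paper's induction keeps the spacetime connection and kills the purely normal acceleration $(K_{cd}\xi^c\xi^d)n^b$ against the tangent tensor $D_{a_n}\cdots D_{a_1}f$. Your closing caveat about the fully contracted spacetime covariant derivative is tangential but harmless, since the object the lemma concerns is the iterated directional derivative \eqref{dirder}, which is exactly the total $\lambda$-derivative you control.
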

\begin{proof}
	Let $\xi^a$ be the unit vector in $\mathcal{C}$ tangent to the unique surface-geodesic going from $\x'$ to $\x$. Proceeding by induction, we verify
	\begin{align*}
		\nabla_\xi^1 f &= \xi^{a_1} \nabla_{a_1} f\\
		&= \xi^{b_1} h_{b_1}\,^{a_1} \nabla_{a_1} f\\
		&= \xi^{a_1} D_{a_1} f,
	\end{align*}
	and take the inductive hypothesis to be
	\begin{equation}
		\label{InductiveHyp}
		\nabla_\xi^{n} f =\xi^{a_n} \xi^{a_{n-1}} \dots \xi^{a_1}D_{a_n} D_{a_{n-1}} \dots D_{a_1} f.
	\end{equation}
	
	The identity for surface-geodesic vectors\footnote{Cf. equation \eqref{3p1GeodesicId} in Appendix \ref{App:3p1}.},
	\begin{equation}
		\xi^a \nabla_a \xi^b = (K_{ac}\xi^a \xi^c)n^b,
	\end{equation}
	along with the inductive hypothesis, yield
	\begin{align*}
		\nabla_\xi^{n+1} f &= \xi^{a_{n+1}} \nabla_{a_{n+1}} (\xi^{a_n} \xi^{a_{n-1}} \dots \xi^{a_1}D_{a_n} D_{a_{n-1}} \dots D_{a_1} f)\\
		&= (\xi^{a_{n+1}} \nabla_{a_{n+1}} \xi^{a_n}) \xi^{a_{n-1}} \dots \xi^{a_1} D_{a_n} D_{a_{n-1}} \dots D_{a_1} f \\
		&\quad + \xi^{a_{n}} (\xi^{a_{n+1}} \nabla_{a_{n+1}} \xi^{a_{n-1}}) \xi^{a_{n-2}}\dots \xi^{a_1} D_{a_n} D_{a_{n-1}} \dots D_{a_1} f + \dots \\
		&\quad + \xi^{a_n} \xi^{a_{n-1}} \dots \xi^{a_1} \xi^{a_{n+1}} \nabla_{a_{n+1}} D_{a_n} D_{a_{n-1}} \dots D_{a_1} f \\
		&= (\xi^{a_{n+1}} K_{a_{n+1}c} \xi^c n^{a_n}) \xi^{a_{n-1}} \dots \xi^{a_1} D_{a_n} D_{a_{n-1}} \dots D_{a_1} f \\
		&\quad + \xi^{a_{n}} (\xi^{a_{n+1}} K_{a_{n+1}c} \xi^c n^{a_{n-1}}) \xi^{a_{n-2}}\dots \xi^{a_1} D_{a_n} D_{a_{n-1}} \dots D_{a_1} f + \dots \\
		&\quad + \xi^{a_{n+1}} \xi^{a_n} \xi^{a_{n-1}} \dots \xi^{a_1} \nabla_{a_{n+1}} D_{a_n} \dots D_{a_1} f \\
		&= \xi^{a_{n+1}} \xi^{a_n} \xi^{a_{n-1}} \dots \xi^{a_1} D_{a_{n+1}} D_{a_n} \dots D_{a_1} f.
	\end{align*}
\end{proof}
With these results at hand, we can prove the following result:
\begin{thm}\label{th:SigmaSurfaceApprox}
	Let $\x$ and $\x'$ points contained on a spacelike surface $\mathcal{C}$ of the spacetime $(M,g_{ab})$ and also contained in a convex normal neighborhood $\mathcal{D}$. Let $\tilde{U}(\x,\x')$, $\dbar \tilde{U}(\x,\x')$, $\dbar \dbar' \tilde{U}(\x,\x')$ and $\dbar^2 \tilde{U}(\x,\x')$ be approximations of order $n_{U}\geq 4$, $n_{\dbar U}\geq 3$, $n_{\dbar \dbar' U}\geq 2$, $n_{\dbar^2 U}\geq 2$ respectively, for the Hadamard coefficient $U(\x,\x')$ and its corresponding normal derivatives, as well as for the Hadamard coefficient $V(\x,\x')$ let $\tilde{V}(\x,\x')$, $\dbar \tilde{V}(\x,\x')$, $\dbar \dbar' \tilde{V}(\x,\x')$ and $\dbar^2 \tilde{V}(\x,\x')$ be its corresponding approximations of order $n_{V}\geq 2$, $\dbar n_{V}\geq 1$, $n_{\dbar\dbar' V}\geq 0$ and $n_{\dbar^2 V}\geq 0$, and let $\tilde{\Sigma}$, $\dbar \tilde{\Sigma}$, $\dbar \dbar'\tilde{\Sigma}$ and $\dbar^2 \tilde{\Sigma}$ be respectively order $n_{\Sigma}\geq 6$, $n_{\dbar \Sigma}\geq 5$, $n_{\dbar \dbar' \Sigma} \geq 4$ and $n_{\dbar^2\Sigma} \geq 4$ approximations for $\Sigma(\x,\x')$, $\dbar \Sigma (\x,\x')$, $\dbar \dbar' \Sigma (\x,\x')$ and $\dbar^2 \Sigma(\x,\x')$. Let $\phi(\x)$ a scalar quantum field satisfying the Klein-Gordon equation \eqref{KleinGordon} in a Hadamard state $\psi$ with initial data given by $G^+_\psi(\x,\x')$, $\dbar G^+_\psi (\x,\x')= -\frac{1}{N(\x)}\partial_t G^+_\psi(\x,\x')$, $\dbar' G^+_\psi (\x',\x) = -\frac{1}{N(\x')}\partial_{t'} G^+_\psi(\x,\x')$ and $\dbar \dbar' G^+_\psi (\x',\x)= \frac{1}{N(\x)N(\x')} \partial_{t} \partial_{t'} G^+_\psi(\x,\x')$, and let $\tilde{H}^\ell(\x,\x')$ and $\tilde{W}(\x,\x')$ biscalars given by \eqref{ApproximateHadamard} and \eqref{TildeW}, respectively. Then, the corresponding renormalized stress-energy tensor in $\mathcal{C}$ is \textit{exactly} given by
	\begin{align}
		\omega(T_{ab}(\x)) \equiv \frac{1}{2(2\pi)^2} &\Big(-\overset{2}{\tilde{w}}_{ab}+\frac{1}{2}(1-2\xi) \mathcal{W}_{ab} + g_{ab} \left(\xi -\frac{1}{4}\right) g^{cd}\mathcal{W}_{cd} +\xi \overset{0}{\tilde{w}} R_{ab}\nonumber\\
		&\quad - g_{ab}[V_1]\Big) + \Theta_{ab}, \label{TrenByApprox2}
	\end{align}
	where  
	\begin{subequations}\label{InitialDataTerms1}
		\begin{align}
			\overset{0}{\tilde{w}}(\x) &\equiv [\tilde{W}](\x), \label{TildeWSurf}\\
			\mathcal{W}_{ab}(\x) &\equiv D_{a} D_{b}[\tilde{W}](\x)  + 2 K_{ab} [\dbar \tilde{W}] + 2 n_{(a} \left(2 D_{b)} [\dbar \tilde{W}] + K_{b)}\,^{c} D_c [\tilde{W}] \right) \nonumber \\
			& \quad + 2 n_{a} n_{b} \left([\dbar \dbar' \tilde{W}] + [\dbar^2 \tilde{W}] + u^c D_c [\tilde{W}] \right), \label{CurlyW}\\
			\overset{2}{\tilde{w}}_{ab}(\x) &\equiv [D_{a} D_{b} \tilde{W}] + K_{a b} [\dbar \tilde{W}] + 2 n_{(a} \left([D_{b)} \dbar \tilde{W}] + K_{b)}\,^{c} [D_c \tilde{W}] \right) \nonumber \\
			& \quad + n_{a} n_{b} \left([\dbar^2 \tilde{W}] + u^c [D_c \tilde{W}] \right), \label{Tilde2WSurf}
		\end{align}
		and
		\begin{align}
			u^a &\equiv -n^b\nabla_b n^a, \\
			\tilde{W}(\x,\x')&\equiv G^+_\psi(\x,\x') - \tilde{H}^{\ell}(\x,\x'), \label{SurfaceTildeW}\\
			\dbar \tilde{W} (\x,\x') &\equiv \dbar G^+_\psi(\x,\x') - \dbar \tilde{H}^{\ell}(\x,\x'), \label{dbarTildeW}\\
			\dbar \dbar' \tilde{W} (\x,\x') &\equiv \dbar \dbar' G^+_\psi(\x,\x') - \dbar \dbar' \tilde{H}^{\ell}(\x,\x'), \label{dprimedbarTildeW}\\
			\dbar \tilde{H}^\ell(\x,\x') &\equiv \frac{1}{2(2\pi)^2} \Bigg(\frac{1}{\tilde{\Sigma}(\x,\x')} \left(\dbar \tilde{U}(\x,\x') +\left( \tilde{V}(\x,\x')-\tilde{U}(\x,\x')\right)\dbar\tilde{\Sigma}(\x,\x')\right) \nonumber\\
			&\qquad\qquad\qquad + \dbar \tilde{V}(\x,\x') \log\left(\tilde{\Sigma}(\x,\x')/\ell\right)\Bigg), \\
			\dbar \dbar' \tilde{H}^\ell &\equiv \frac{1}{2(2\pi)^2} \Bigg( -\frac{\dbar'\tilde{\Sigma}(\x,\x')}{\tilde{\Sigma}^2(\x,\x')} \left(\dbar \tilde{U}(\x,\x') +\left( \tilde{V}(\x,\x')-\tilde{U}(\x,\x')\right)\dbar\tilde{\Sigma}(\x,\x')\right) \nonumber\\
			&\qquad\qquad\qquad + \frac{1}{\tilde{\Sigma}(\x,\x')} \Big(\dbar \dbar' \tilde{U}(\x,\x') + \left( \tilde{V}(\x,\x') - \tilde{U}(\x,\x') \right) \dbar \dbar' \tilde{\Sigma}(\x,\x') \nonumber\\
			&\qquad\qquad\qquad\qquad + \left( \dbar'\tilde{V}(\x,\x') - \dbar'\tilde{U}(\x,\x') \right)\dbar\tilde{\Sigma}(\x,\x') + \dbar V(\x,\x') \dbar'\tilde{\Sigma} \Big) \nonumber\\
			&\qquad\qquad\qquad + \dbar \dbar' V(\x,\x') \log\left(\tilde{\Sigma}(\x,\x')/\ell\right) \Bigg), \\ 
			\dbar^2 \tilde{W} (\x,\x') &\equiv D^b D_b \tilde W (\x,\x')+ K (\x)\dbar \tilde{W}(\x,\x') - u^b(\x) D_b \tilde{W}(\x,\x') \nonumber\\
			&\quad -(m^2 + \xi R(\x))\tilde{W}(\x,\x') + 6 V_1(\x,\x')\nonumber\\
			&\quad + 2 \left((D^b V_1(\x,\x'))D_b\Sigma (\x,\x') - \dbar V_1 (\x,\x') \dbar \Sigma(\x,\x')\right). \label{ddbarTildeW}
		\end{align}
	\end{subequations}
\end{thm}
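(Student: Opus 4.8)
The plan is to reduce Theorem \ref{th:SigmaSurfaceApprox} to Corollary \ref{Cor:ExactRSET} by showing that the surface-data construction reproduces exactly the quantities $\overset{0}{\tilde w}(\x)$ and $\overset{2}{\tilde w}_{ab}(\x)$ appearing there, and that the hypotheses on the various surface approximations suffice. First I would observe that, by Lemma \ref{Lemma:SurfaceReconstruction}, an order-$n$ spacetime approximation of a bi-scalar $A$ is equivalent to the data of $n+1$ surface approximations of $A, \dbar A, \dots, \dbar^n A$ of descending orders $n, n-1, \dots, 0$. Applying this to $\Sigma$ with $n=6$, to $U$ with $n=4$, and to $V$ with $n=2$, one sees that the hypotheses of the theorem — the orders $n_{\Sigma}\geq 6, n_{\dbar\Sigma}\geq 5, n_{\dbar\dbar'\Sigma}\geq 4, n_{\dbar^2\Sigma}\geq 4$ and the analogous lists for $U$ and $V$ — are precisely what is needed to assemble genuine order-$6$, order-$4$ and order-$2$ spacetime approximations $\tilde\Sigma, \tilde U, \tilde V$, at least as far as their coincidence limits of covariant derivatives up to the relevant order are concerned. (Here one uses that only $\dbar$ and $\dbar\dbar'$ combinations, and not $\dbar\dbar$, enter the $3+1$ decompositions \eqref{3p1Coinc}, since $[\dbar^2 A]$ can be eliminated in favour of $[\dbar\dbar' A]$ plus tangential terms using the wave equation — this is exactly what \eqref{ddbarTildeW} encodes for $\tilde W$, via the fact that $V_1$-terms arise from $\Box$ acting on the Hadamard parametrix.) With these spacetime approximations in hand, Corollary \ref{Cor:ExactRSET} gives the exact renormalized stress-energy tensor in the form \eqref{TrenByApprox1}, so it remains only to rewrite $\overset{0}{\tilde w}$, $\overset{2}{\tilde w}_{ab}$ and the combination $\nabla_a\nabla_b\overset{0}{\tilde w}$ appearing there in terms of surface data.

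The second step is that rewriting. For $\overset{0}{\tilde w}(\x) = [\tilde W](\x)$ nothing is needed beyond \eqref{SurfaceTildeW}. For $\overset{2}{\tilde w}_{ab}(\x) = [\nabla_a\nabla_b\tilde W](\x)$ I would substitute the $3+1$ identity \eqref{3p1CoincD2A} with $A = \tilde W$: this produces exactly \eqref{Tilde2WSurf}, provided one identifies $[\dbar\tilde W]$, $[D_b\dbar\tilde W]$ and $[\dbar^2\tilde W]$ with the surface-reconstructed quantities \eqref{dbarTildeW} and \eqref{ddbarTildeW}. The point is that $[\dbar^2\tilde W]$ requires a second normal derivative of $\tilde W$, which is \emph{not} part of the prescribed initial data $G^+_\psi, \dbar G^+_\psi, \dbar'G^+_\psi, \dbar\dbar'G^+_\psi$; one trades it, using that $\phi$ solves \eqref{KleinGordon} and hence $W = G^+_\psi - H^\ell$ satisfies $\Box W = $ (explicit local curvature terms involving $V_1$), for $D^bD_b\tilde W$, $\dbar\tilde W$, $D_b\tilde W$ and the $V_1$-contributions — this is the content of \eqref{ddbarTildeW}. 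Similarly $[\dbar\dbar'\tilde W]$ genuinely only needs the mixed second derivative, available from the data $\dbar\dbar'G^+_\psi$ together with the explicit mixed derivatives of the approximate parametrix, written out as $\dbar\dbar'\tilde H^\ell$. For the term $\nabla_a\nabla_b\overset{0}{\tilde w}(\x)$ I would apply the same $3+1$ split \eqref{3p1CoincD2A} but now to the \emph{surface scalar} $\overset{0}{\tilde w} = [\tilde W]$ regarded as a function on $M$ near $\mathcal{C}$: its tangential derivatives are surface derivatives $D_aD_b[\tilde W]$, while its normal derivatives are obtained by differentiating the coincidence limit along $n^a$, which by the chain rule brings in $[\dbar\tilde W]$, $[\dbar\dbar'\tilde W]$, $[\dbar^2\tilde W]$ and the extrinsic-curvature/$u^a$ terms — this is precisely $\mathcal{W}_{ab}$ of \eqref{CurlyW}. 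Finally $\Box\overset{0}{\tilde w} = g^{cd}\mathcal{W}_{cd}$ by contracting, and $\xi\overset{0}{\tilde w}R_{ab}$ is already intrinsic. Assembling these substitutions into \eqref{TrenDecaFol} / \eqref{TrenByApprox1} yields \eqref{TrenByApprox2}.

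The main obstacle — and the step deserving the most care — is the bookkeeping of \emph{which} normal derivatives of $\tilde W$ can be expressed purely in terms of the four prescribed Cauchy bi-scalars, versus which must be eliminated using the field equation. The subtlety is that $[\dbar^2\tilde W]$ appears both in $\overset{2}{\tilde w}_{ab}$ and (via a further normal derivative of $[\tilde W]$) in $\mathcal{W}_{ab}$, and it is not directly an initial datum; one must verify that the Klein–Gordon equation, applied to $W$ rather than to $\phi$ (so that the inhomogeneity is the known local term $6V_1 + 2(D^bV_1 D_b\Sigma - \dbar V_1\dbar\Sigma)$ coming from $\Box H^\ell$), correctly converts it, and that the approximate versions $\tilde W, \tilde\Sigma$ may legitimately replace the exact ones in that conversion without spoiling the coincidence limit — this is where the order hypotheses $n_{\dbar\dbar' V}\geq 0$, $n_{\dbar^2 V}\geq 0$ etc. are consumed, ensuring the error bi-scalars $\,^\Sigma\delta, \,^U\delta, \,^V\delta$ and their normal-derivative analogues have enough vanishing derivatives at coincidence that every application of L'Hôpital's rule in the style of Theorem \ref{Th:OrderEstimates} goes through. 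A secondary technical point is checking that the surface-geodesic trajectory of Lemma \ref{lemmaSurface} can be used throughout, so that all the relevant coincidence limits are computed with $\xi^a$ tangent to $\mathcal{C}$ and spacetime directional derivatives collapse to surface ones; this is what guarantees the limits are trajectory-independent and that no genuinely off-surface information beyond the four normal-derivative bi-scalars sneaks in.
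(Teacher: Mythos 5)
Your opening reduction contains a genuine gap. You invoke Lemma \ref{Lemma:SurfaceReconstruction} to claim that the theorem's hypotheses ``are precisely what is needed to assemble genuine order-$6$, order-$4$ and order-$2$ spacetime approximations'' of $\Sigma$, $U$ and $V$, and then you apply Corollary \ref{Cor:ExactRSET}. But that lemma requires, for an order-$k$ spacetime approximation of $A$, surface approximations of \emph{all} of $A,\dbar A,\dbar^2A,\dots,\dbar^kA$ (of orders $k,k-1,\dots,0$). The theorem supplies only four surface bi-scalars for each of $\Sigma$ and $U$: for $\Sigma$ the data $\tilde\Sigma,\dbar\tilde\Sigma,\dbar\dbar'\tilde\Sigma,\dbar^2\tilde\Sigma$ (orders $6,5,4,4$) are missing $\dbar^3\Sigma$ through $\dbar^6\Sigma$, and for $U$ the data are missing $\dbar^3U$ and $\dbar^4U$. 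Your parenthetical escape --- that pure $\dbar^k$ coincidence limits can be traded for $\dbar\dbar'$ ones ``using the wave equation'' --- does not work here: the decompositions \eqref{3p1Coinc} genuinely contain pure normal derivatives at every order, and neither $\Sigma$ (which obeys the eikonal-type equation \eqref{pdeSigma}) nor $U$ satisfies a wave equation that would eliminate them; equation \eqref{ddbarTildeW} performs such a trade only for $\tilde W$, which does not help reconstruct $\tilde\Sigma$ or $\tilde U$ as spacetime approximations. So with the stated hypotheses you cannot reach the inputs of Corollary \ref{Cor:ExactRSET}, and the reduction assumes away exactly what the theorem asserts: that the \emph{restricted} set of surface data suffices.

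The paper closes this gap differently: rather than reconstructing full spacetime approximations, it repeats the L'H\^opital error analysis of Theorem \ref{Th:OrderEstimates} (and of $[\,^W\delta]$ and of $\Upsilon_a=\nabla_a(H^\ell-\tilde H^\ell)$) with the coincidence limit taken along a surface geodesic, using Lemma \ref{lemmaSurface} to convert $\nabla_\xi$ into $D_\xi$, and then expands the resulting limits in $3+1$ form, obtaining \eqref{DeltaWSurface}, \eqref{LimitUpsilon} and \eqref{LHopitalLimitSurface}. In those expressions only tangential derivatives of the error bi-scalars together with at most $\dbar$, $\dbar\dbar'$ and $\dbar^2$ appear, which is exactly why the hypotheses list only those approximations; their vanishing then gives $[\tilde W]=[W]$, $[\nabla_a\tilde W]=[\nabla_a W]$, $[\nabla_a\nabla_b\tilde W]=[\nabla_a\nabla_b W]$ on $\mathcal{C}$, and the field equation for $W$ gives $[\dbar^2\tilde W]=[\dbar^2 W]$. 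Your second step is broadly on the right track but is also under-argued at the $\mathcal{W}_{ab}$ stage: the factors of $2$ in \eqref{CurlyW} come from Synge's rule, $\nabla_a[W]=2[\nabla_a W]$, together with the identity $\nabla_a\nabla_b\overset{0}{w}=2\bigl([\nabla_a\nabla_{b'}W]+\overset{2}{w}_{ab}\bigr)$ derived from the covariant Taylor expansion of $W$ and $[\nabla_a\nabla_{b'}\Sigma]=-g_{ab}$, $[\nabla_a\nabla_c\nabla_{b'}\Sigma]=0$; ``the chain rule'' alone does not establish this. The essential missing ingredient in your proposal, however, is the surface-restricted error analysis: without it the stated hypotheses are simply not strong enough for the route you chose.
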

\begin{proof}
	Due to the fact both points of all the biscalar functions lie on $\mathcal{C}$, by lemma \ref{lemmaSurface}, equation \eqref{DeltaW} will read
	\begin{equation}
		[\,^W\delta] = \frac{1}{2(2\pi)^2} \left(\frac{6 [D_\xi^2\,^U\delta]-[D_{\xi}^4 \,^\Sigma\delta]}{-6}\right)=0, \label{DeltaWSurface}
	\end{equation}
	where we have used the notation, for any $f\in C^m(\mathcal{C}\times\mathcal{C})$, $m\geq n$,
	\begin{equation}
		D_\xi^n f \equiv \xi^{a_n} \xi^{a_{n-1}} \dots \xi^{a_1}D_{a_n} D_{a_{n-1}} \dots D_{a_1} f.
	\end{equation}
	 Also, the coincidence limit of
	 \begin{equation}
	 	\Upsilon_a \equiv \nabla_a (H^\ell - \tilde{H}^\ell), \label{upsilon}
	 \end{equation}
	 which involve the L'H\^opital rule applied eight times, obtaining
	 \begin{align}
	 	[\Upsilon_{\mu}] = \frac{1}{30}\Big( 5 [\nabla_\xi^4 \nabla_{\mu} \,^\Sigma \delta ] - 4 \xi_{\mu} [\nabla_\xi^5 \,^\Sigma\delta] + 20 \xi_{\mu} [\nabla_\xi^3 \,^{U}\delta]- 30 [\nabla_\xi^2 \nabla_{\mu} \,^{U}\delta] - 60 \xi_{\mu} [\nabla_\xi \,^V \delta] \Big),
	 \end{align}
	 and by lemma \ref{lemmaSurface}, and expanding in $3+1$ form, reads
	 \begin{align}
	 	[\Upsilon_{\mu}] = \frac{1}{30}\Big( &\, 5 [D_\xi^4 D_{\mu} \,^\Sigma \delta ] - 4 \xi_{\mu} [D_\xi^5 \,^\Sigma\delta] + 5 n_{\mu} [D_\xi^4 \dbar \,^\Sigma \delta ]\nonumber\\
	 	&+ 20 \xi_{\mu} [D_\xi^3 \,^{U}\delta]- 30 [D_\xi^2 D_{\mu} \,^{U}\delta]- 30 n_{\mu} [D_\xi^2 \dbar \,^{U}\delta] - 60 \xi_{\mu} [D_\xi \,^V \delta] \Big), \label{LimitUpsilon}
	 \end{align}
	 Similarly for \eqref{LHopitalLimit}, and using the $3+1$ expansions \eqref{3p1CoincDA} and \eqref{3p1CoincD2A}, we have that
	\begin{align}\label{LHopitalLimitSurface}
		[\mathcal{E}_{\mu\nu}] = -\frac{1}{90}\Big\lbrace &2 [D_\xi^6\,\,^\Sigma\delta](g_{\mu \nu}-6 \xi_{\mu } \xi_{\nu }) +24 \xi_{(\mu } [D_\xi^5 D_{\nu) } \,^\Sigma\delta]-15 [D_\xi^4 D_{\mu } D_{\nu } \,^\Sigma\delta]\nonumber\\
		&+ 24 \xi_{(\mu } n_{\nu) } [ D_\xi^5 \dbar\,^\Sigma\delta] -15 \Big( K_{\mu \nu} [D_\xi^4 \dbar \,^\Sigma\delta] +2 n_{(\mu} \left([D_\xi^4 D_{\nu)} \dbar \,^\Sigma\delta] \right.  \nonumber \\
		&+ \left. K_{\nu)}\,^{b} [D_\xi^4 D_b \,^\Sigma\delta] \right)+ n_{\mu} n_{\nu} \left([D_\xi^4 \dbar^2 \,^\Sigma\delta] + u^b [D_\xi^4 D_b \,^\Sigma\delta] \right) \Big) \nonumber\\
		&-15 [D_\xi^4\,\,^{U}\delta] (g_{\mu \nu} - 4 \xi_{\mu } \xi_{\nu }) -120 \xi_{(\mu } [D_\xi^3 D_{\nu) } \,^{U}\delta] +90 [D_\xi^2 \nabla_{\mu }\nabla _{\nu } \,^{U}\delta] \nonumber\\
		&-120 \xi_{(\mu } n_{\nu)} [D_\xi^3\dbar \,^{U}\delta] + 90 \Big( K_{\mu\nu}[D_\xi^2\dbar\,^{U}\delta] + 2 n_{(\mu} \left( [D_\xi^2 D_{\nu)}\dbar \,^{U}\delta] \right. \nonumber\\
		&+\left. K_{\nu)}\,^b [D_\xi^2 D_b \,^{U}\delta] \right) + n_{\mu}n_{\nu} \left( [D_\xi^2 \dbar^2 \,^{U}\delta] + u^b [D_\xi^2 D_b \,^{U}\delta] \right)\Big) \nonumber\\
		&+90 [D_\xi^2\,\,^V\delta] (g_{\mu \nu}-2 \xi_{\mu } \xi_{\nu })+360 \xi_{(\mu } \left( [D_\xi D_{\nu)} \,^V\delta] + n_{\nu)} [D_\xi \dbar \,^V\delta]\right). 
	\end{align}
	By hypothesis we have an order $6$ surface approximation for $\Sigma$, an order $5$ approximation for $\dbar \Sigma$, an order $4$ approximation for $U$ and a second order apprixmation for $V$, and therefore all of the terms in brackets in \eqref{DeltaWSurface}, \eqref{LimitUpsilon} and \eqref{LHopitalLimitSurface} vanish identically, so it follows that for all $\x \in \mathcal{C}$,
	\begin{subequations}
		\begin{align}
			[\tilde{W}](\x) &= [W](\x), \label{SurfaceEq0w} \\
			[\nabla_a \tilde{W}] (\x)  &= [\nabla W ](\x),\label{SurfaceEq1w}\\
			[\nabla_a \nabla_b \tilde{W}] (\x) &= [\nabla_a \nabla_b W] (\x) ,\label{SurfaceEq2w}
		\end{align}
	\end{subequations}
	with $\tilde{W}(\x,\x')$ defined by \eqref{SurfaceTildeW}, also
	\begin{equation}
		\nabla_a \tilde{W} (\x,\x') \equiv D_a \tilde{W} (\x,\x')  + n_a \dbar \tilde{W} (\x,\x'), \label{3p1TildeW}
	\end{equation}
	where $\dbar \tilde{W} (\x,\x')$ is defined by \eqref{dbarTildeW}, and 
	\begin{align}
		\nabla_a \nabla_b \tilde{W}(\x,\x') \equiv& D_{a} D_{b} \tilde{W}(\x,\x') + K_{a b}(\x) \dbar \tilde{W}(\x,\x') \nonumber \\
		& + 2 n_{(a}(\x) \left(D_{b)} \dbar \tilde{W}(\x,\x') + K_{b)}\,^{c}(\x) D_c \tilde{W}(\x,\x') \right) \nonumber \\
		& + n_{a}(\x) n_{b}(\x) \left(\dbar^2 \tilde{W}(\x,\x') + u^c(\x) D_c \tilde{W}(\x,\x') \right),
	\end{align}
	where $\dbar^2 \tilde{W}$ is defined by \eqref{ddbarTildeW}.
	 Equation \eqref{SurfaceEq0w} implies then that in $\mathcal{C}$,
	 \begin{equation}
	 	\overset{0}{\tilde{w}} = \overset{0}{w}. \label{ApproxW0ID}
	 \end{equation}
	 
	  Comparing \eqref{3p1TildeW} with the $3+1$ expansion of $\nabla_a W$, and with \eqref{SurfaceEq1w}, we have that
	 \begin{equation}
	 	[\dbar \tilde{W}] = [\dbar W].
	 \end{equation}
	Similarly, from the $3+1$ expansion of the field equation satisfied by $W$ (cf. \cite{DecaniniFolacci}),
	\begin{equation}
		(g^{ab}\nabla_a \nabla_b - m^2 - \xi R (\x)) W(\x,\x') = - 6 V_1 (\x,\x') - 2 g^{ab}(\x) (\nabla_b V_1 (\x,\x'))(\nabla_a \Sigma(\x,\x')) + \mathcal{O}(\Sigma),
	\end{equation}
	we can solve for $\dbar^2 W$ to obtain
	\begin{align}
		\dbar^2 W (\x,\x') &= D^b D_b W (\x,\x')+ K (\x)\dbar W (\x,\x') - u^b(\x) D_b W (\x,\x') \nonumber\\
		&\quad -(m^2 + \xi R(\x)) W(\x,\x') + 6 V_1(\x,\x') \nonumber\\
		&\quad + 2 \left((D^b V_1(\x,\x'))D_b\Sigma (\x,\x') - \dbar V_1 (\x,\x') \dbar \Sigma(\x,\x')\right) + \mathcal{O}(\Sigma), 
	\end{align}
	and substracting \eqref{ddbarTildeW} we obtain in the coincidence limit
	\begin{equation}
		[\dbar^2 \tilde{W}] = [\dbar^2 W], \label{LimDbar2W}
	\end{equation}
	which implies that
	\begin{align}
		\overset{2}{w}_{ab} - \overset{2}{\tilde{w}}_{ab} =& [D_a D_b W] - [D_a D_b \tilde{W}] + K_{ab} \left([\dbar W] - [\dbar \tilde{W}]\right) \nonumber\\
		&+ 2 n_{(a} \left([D_{b)}\dbar W] -[D_{b)}\dbar \tilde{W}]  + K_{b)}\,^c\left([D_{c}W]-[D_{c}\tilde{W}]\right)\right) \nonumber\\
		&+ n_a n_b \left([\dbar^2 W] - [\dbar^2 \tilde{W}] + u^c \left([D_c W] - [D_c \tilde{W}]\right)\right),
	\end{align}
	vanishes identically in $\mathcal{C}$, ie., 
	\begin{equation}
		\overset{2}{\tilde{w}}_{ab} = \overset{2}{w}_{ab}. \label{ApproxW2ID}
	\end{equation}
	
	Next, by \eqref{SecondDerPhi3p1} and \eqref{NormalDerivativesExchange}, 
	\begin{align}
		\nabla_{a} \nabla_{b} \overset{0}{w} &= (K_{a b} \dbar \overset{0}{w} + D_{a} D_{b} \overset{0}{w}) + 2 n_{(a} \left(D_{b)} \dbar \overset{0}{w} + K_{b)}\,^{c} D_c \overset{0}{w} \right) + n_{a} n_{b} \left(\dbar^2 \overset{0}{w} + u^c D_c \overset{0}{w} \right), 
	\end{align}
	and subtracting \eqref{CurlyW}, we have
	\begin{align}
		\nabla_{a} \nabla_{b} \overset{0}{w} - \mathcal{W}_{ab} &= D_{a} D_{b} \{\overset{0}{w}-[\tilde{W}]\} + K_{a b} \{\dbar \overset{0}{w}-2 [\dbar \tilde{W}]\}\nonumber\\
		&\quad + 2 n_{(a} \left(D_{b)} \{\dbar \overset{0}{w}-2[\dbar \tilde{W}]\} + K_{b)}\,^{c} D_c \{\overset{0}{w}-[\tilde{W}]\} \right) \nonumber \\
		& \quad + n_{a} n_{b} \left(\dbar^2 \overset{0}{w}-2([\dbar \dbar' \tilde{W}]+[\dbar^2 \tilde{W}]) - u^c D_c [\tilde{W}] + u^c D_c \{ \overset{0}{w} - [\tilde{W}] \} \right),
	\end{align}
	and by \eqref{SurfaceEq0w},
	\begin{align}
		\nabla_{a} \nabla_{b} \overset{0}{w} - \mathcal{W}_{ab} &= \left(K_{a b} + 2 n_{(a} D_{b)}\right) \left(\dbar \overset{0}{w}-2[\dbar \tilde{W}]\right)\nonumber\\
		&\quad + n_{a} n_{b}\left(\dbar^2 \overset{0}{w}-2([\dbar \dbar' \tilde{W}]+[\dbar^2 \tilde{W}])- u^c[D_c \tilde{W}]\right) . \label{DiffDDWW}
	\end{align}
	According to Synge's rule,
	\begin{equation}
		\nabla_a [W] = [\nabla_{a'} W] + [\nabla_{a} W],
	\end{equation}
	and due to the symmetry of $W$, $[\nabla_{a'} W] = [\nabla_a W]$, ie.,
	\begin{equation}
		\nabla_a [W] = 2 [\nabla_a W]. \label{SyngeDW}
	\end{equation}
	Taking the contraction with $-n^a$ yield
	\begin{equation}
		\dbar \overset{0}{w} = \dbar [W] = 2 [\dbar W],
	\end{equation}
	and therefore the first line in \eqref{DiffDDWW} vanishes in $\mathcal{C}$ due to \eqref{SurfaceEq1w}.
	Note that we can write $[\nabla_{a} \nabla_{b'} W]$ in terms of $\overset{0}{w}$ and $\overset{2}{w}_{ab}$ by writing down the Taylor Series expansion for $W$ up to second order,
	\begin{equation}
		W(\x,\x') = \overset{0}{w}(\x) + \overset{1}{w}^c (\x) \nabla_c\Sigma(\x,\x') + \frac{1}{2}\overset{2}{w}^{cd}(\x) \nabla_c\Sigma(\x,\x') \nabla_d\Sigma(\x,\x')+ \mathcal{O} (\Sigma^{3/2}), 
	\end{equation}
	taking a derivative on $\x'$ followed by a derivative on $\x$, and computing the coincidence limit to obtain
	\begin{align}
		[\nabla_{a} \nabla_{b'} W ] &= \nabla_a\overset{1}{w}^c  [\nabla_{c} \nabla_{b'} \Sigma]  + \overset{1}{w}^c [\nabla_{a} \nabla_c \nabla_{b'} \Sigma]  \nonumber\\
		&\qquad + \frac{1}{2} \overset{2}{w}^{cd} \left( [\nabla_c \nabla_{b'}\Sigma] [\nabla_{a}\nabla_d\Sigma] +  [\nabla_{a}\nabla_c\Sigma] [\nabla_d \nabla_{b'}\Sigma] \right).\label{DprimeDW2}
	\end{align}
	Applying Synge's rule on $[\nabla_{a} \Sigma]=0$ yields
	\begin{align*}
		\nabla_b [\nabla_{a} \Sigma]&= [\nabla_{b} \nabla_{a} \Sigma] + [\nabla_{a} \nabla_{b'} \Sigma]\\
		0 &= g_{ba} + [\nabla_{a} \nabla_{b'} \Sigma] ,
	\end{align*}
	ie.,
	\begin{equation}
		[\nabla_{a}\nabla_{b'} \Sigma] = - g_{ab},
	\end{equation}
	and proceeding the same on $[\nabla_{a} \nabla_{c} \Sigma]$ yield
	\begin{align}
		\nabla_{b} [\nabla_{a} \nabla_{c} \Sigma] &=  [\nabla_{b} \nabla_{a} \nabla_{c} \Sigma] +  [\nabla_{a} \nabla_{c} \nabla_{b'}  \Sigma] \nonumber\\
		0 &=  [\nabla_{a}\nabla_{c} \nabla_{b'} \Sigma].
	\end{align}
	Substituting these results back in  \eqref{DprimeDW2} we obtain
	\begin{equation}
		[\nabla_{a} \nabla_{b'} W ] = -\nabla_a\overset{1}{w}_b - \overset{2}{w}_{ab},
	\end{equation}
	and considering that
	\begin{equation}
		\overset{1}{w}_a = - \frac{1}{2} \nabla_a \overset{0}{w},
	\end{equation}
	then
	\begin{equation}
		[\nabla_{a} \nabla_{b'} W ]= \frac{1}{2}\nabla_a\nabla_b \overset{0}{w}- \overset{2}{w}_{ab}.
	\end{equation}
	This in turn means that
	\begin{equation}
		\nabla_a\nabla_b \overset{0}{w} = 2 \left([\nabla_{a} \nabla_{b'} W ] + \overset{2}{w}_{ab}\right).
	\end{equation}
	Then, by contracting twice with $n^a$ we obtain
	\begin{equation}
		\dbar^2 \overset{0}{w} + u^c D_c \overset{0}{w} = 2 \left([\dbar \dbar' W ] + u^c [D_c W] + n^a n^b \overset{2}{w}_{ab}\right).
	\end{equation}
	Taking into account that $[D_c W] = (1/2)D_c \overset{0}{w}$, and that $\overset{2}{w}_{ab}$ has a $3+1$ expansion analog to that of $\overset{2}{\tilde w}_{ab}$ given by \eqref{Tilde2WSurf}, then we have
	\begin{equation}
		\dbar^2 \overset{0}{w} = 2 \left([\dbar \dbar' W ] + [\dbar^2 W] \right) + u^c D_c\overset{0}{w} .
	\end{equation}
	Substituting back in \eqref{DiffDDWW}, we have
	\begin{equation}
		\mathcal{W} = \nabla_a \nabla_b \overset{0}{w}, \label{CurlyWId}
	\end{equation}
	for all points in $\mathcal{C}$.
	Therefore, subtracting \eqref{TrenByApprox2} form \eqref{TrenDecaFol} we have
	\begin{align}
		\omega(T_{ab}(\x)) - \omega(\tilde T_{ab}(\x)) = \frac{1}{2(2\pi)^2} &\Big(\overset{2}{\tilde{w}}_{ab}-\overset{2}{w}_{ab}+\frac{1}{2}(1-2\xi) \left(\nabla_a \nabla_b \overset{0}{w} - \mathcal{W}_{ab}\right) \nonumber\\
		&+ g_{ab} \left(\xi -\frac{1}{4}\right) g^{cd}\left(\nabla_c \nabla_d \overset{0}{w} - \mathcal{W}_{cd}\right) +\xi \left(\overset{0}{w} - \overset{0}{\tilde{w}}\right) R_{ab}\Big),
	\end{align}
	which vanishes by \eqref{ApproxW0ID}, \eqref{ApproxW2ID} and \eqref{CurlyWId}.
\end{proof}

\subsection{Approximations} \label{sec:ApproxGen}

Covariant Taylor series will be considered for $U(\x,\x')$ and $V(\x,\x')$, making use of the coefficients for these expansions that have been previously computed by Decanini \& Folacci\cite{DecaniniFolacci}. For $\Sigma(\x,\x')$, we assume that sufficient geometrical data on the initial surface $\mathcal{C}$ is known, so that we expand $\Sigma(\x,\x')$ as a surface Taylor series. These expansions have the property that tangent derivatives match the corresponding tangent derivatives of $\Sigma(\x,\x')$ on $\mathcal{C}$ up to the number of terms considered in the expansion. However, we cannot, by construction, compute normal derivatives on a surface Taylor expansion. 

Instead, we will use the coincidence limits for $\Sigma(\x,\x')$, $\Sigma_{a_0}(\x,\x')$, \dots, $\Sigma_{a_5 a_4 a_3 a_2 a_1 a_0}(\x,\x')$, $\Sigma_{a_0'}(\x,\x')$, \dots, $\Sigma_{a_5 a_4 a_3 a_2 a_1 a_0'}(\x,\x')$ to compute the limits of tangent derivatives for $\dbar \Sigma(\x,\x')$, $\dbar^2 \Sigma(\x,\x')$, $\dbar' \Sigma(\x,\x')$, and $\dbar \dbar' \Sigma(\x,\x')$ to compute the corresponding surface Taylor expansions $\dbar \tilde{\Sigma}(\x,\x')$, $\dbar^2 \tilde{\Sigma}(\x,\x')$, $\dbar' \tilde{\Sigma}(\x,\x')$, and $\dbar \dbar' \tilde{\Sigma}(\x,\x')$ respectively, which according to \eqref{InitialDataTerms1} of Theorem \ref{th:SigmaSurfaceApprox}, are required to compute the renormalized stress-energy tensor at $\mathcal{C}$.

Initial data for the metric will be given in terms of the induced metric on $\mathcal{C}$, the extrinsic curvature $K_{ab}$, and further \textit{time} derivatives of the metric. We will therefore begin by writing down a $3+1$ expansion of the coefficients of the covariant Taylor series for $U(\x,\x')$ and $V(\x,\x')$, and afterwards we will compute the surface Taylor expansions for $\Sigma(\x,\x')$, $\dbar \Sigma(\x,\x')$, $\dbar' \Sigma(\x,\x')$, $\dbar \dbar' \Sigma(\x,\x')$ and $\dbar^2 \Sigma(\x,\x')$.

\subsubsection{Expansion and surface projections for U(x,x').}
The Taylor expansion for $U(\x,\x')$, up to order fourth order is given by
\begin{align}
	\tilde U (\x,\x') =& \mathcal{U}_0 (\x) + \mathcal{U}_1\,^b (\x) \nabla_b \tilde{\Sigma}(\x,\x') + \frac{1}{2} \mathcal{U}_2\,^{b_1 b_0} (\x) \nabla_{b_1} \tilde{\Sigma}(\x,\x') \nabla_{b_0} \tilde{\Sigma}(\x,\x')\nonumber\\
	&+ \frac{1}{6} \mathcal{U}_3\,^{b_2 b_1 b_0} (\x) \nabla_{b_2} \tilde{\Sigma}(\x,\x')\nabla_{b_1} \tilde{\Sigma}(\x,\x') \nabla_{b_0} \tilde{\Sigma}(\x,\x') \nonumber\\
	&+ \frac{1}{24} \mathcal{U}_4\,^{b_3 b_2 b_1 b_0} (\x) \nabla_{b_3} \tilde{\Sigma}(\x,\x')\nabla_{b_2} \tilde{\Sigma}(\x,\x')\nabla_{b_1} \tilde{\Sigma}(\x,\x') \nabla_{b_0} \tilde{\Sigma}(\x,\x'), \label{TaylorU}
\end{align}
with \cite{DecaniniFolacci},
\begin{subequations}
\begin{align}
	\mathcal{U}_0 &= 1,\\
	\mathcal{U}_1\,_a &= 0,\\
	\mathcal{U}_2\,_{a_1 a_0} &= \frac{1}{6} R_{a_1 a_0},\\
	\mathcal{U}_3\,_{a_2 a_1 a_0} &= \frac{1}{4} \nabla_{(a_2} R_{a_1 a_0)},\\
	\mathcal{U}_4\,_{a_3 a_2 a_1 a_0} &= \frac{3}{10}\nabla_{(a_3} \nabla_{a_2} R_{a_1 a_0)} + \frac{1}{12} R_{(a_3 a_2}R_{a_1 a_0)} + \frac{1}{15} R_{b_1 (a_3 |b_0|a_2 } R^{b_1}\,_{a_1}\,^{b_0}\,_{a_0)}. \label{UTaylor4}
\end{align}
\end{subequations}
Note that the $3+1$ expansion of \eqref{TaylorU} reads
\begin{align}
	\tilde U =& \mathcal{U}_0 + \overset{0}{\mathcal U}_1\,^{b_0} (D_{b_0} \Sigma) - \overset{1}{\mathcal U}\,_1 (\dbar \Sigma) + \frac{1}{2} \overset{0}{\mathcal U}_2\,^{b_1 b_0} (D_{b_1} \Sigma) (D_{b_0} \Sigma) - \overset{1}{\mathcal U}_2\,^{b_1 } (D_{b_1} \Sigma) (\dbar \Sigma) + \frac{1}{2} \overset{3}{\mathcal U}_2 \,(\dbar \Sigma)^2 \nonumber\\
	&+ \frac{1}{6}  \overset{0}{\mathcal U}_3\,^{b_2 b_1 b_0} (D_{b_2} \Sigma) (D_{b_1} \Sigma) (D_{b_0} \Sigma) - \frac{1}{2} \overset{1}{\mathcal U}_3\,^{b_2 b_1} (D_{b_2} \Sigma) (D_{b_1} \Sigma) (\dbar \Sigma) \nonumber\\
	&+ \frac{1}{2} \overset{3}{\mathcal U}_3\,^{b_2} (D_{b_2} \Sigma) (\dbar \Sigma)^2 - \frac{1}{6} \overset{7}{\mathcal U}_3 (\dbar \Sigma)^3 \nonumber\\
	&+\frac{1}{24} \overset{0}{\mathcal U}_4\,^{b_3 b_2 b_1 b_0} (D_{b_3} \Sigma)(D_{b_2} \Sigma) (D_{b_1} \Sigma) (D_{b_0} \Sigma) - \frac{1}{6} \overset{1}{\mathcal U}_4\,^{b_3 b_2 b_1} (D_{b_3} \Sigma)(D_{b_2} \Sigma) (D_{b_1} \Sigma) (\dbar \Sigma) \nonumber\\
	&+\frac{1}{4} \overset{3}{\mathcal U}_4\,^{b_3 b_2} (D_{b_3} \Sigma)(D_{b_2} \Sigma) (\dbar \Sigma)^2 - \frac{1}{6} \overset{7}{\mathcal U}_4\,^{b_3} (D_{b_3} \Sigma)(\dbar \Sigma)^3 + \frac{1}{24} \overset{15}{\mathcal U}_4  (\dbar\Sigma)^4, \label{TildeU}
\end{align}
where the superscript represent the $3+1$ tangent projection label of the corresponding coefficient for $\tilde U$, according to the convention given in appendix \ref{App:3p1}. These projection components are
\begin{subequations}
	\begin{align}
		\overset{0}{\mathcal U}_{1\, a_0} &= 0,\\
		\overset{1}{\mathcal U}_{1} &= 0,
	\end{align}
\end{subequations}
\begin{subequations}
	\begin{align}
		\overset{0}{\mathcal U}_{2\, a_1 a_0} &= \frac{1}{6}\left( \,^{(3)}\mathsf{R}_{a_1 a_0} + K\, K_{a_1 a_0}  + D_{a_1} u_{a_0} - u_{a_1} u_{a_0} - \dbar K_{a_1 a_0} \right),\\
		\overset{1}{\mathcal U}_{2\, a_1 } &= \frac{1}{6}\left( D_{a_1} K - D_b K^b\,_{a_1} \right),\\
		\overset{3}{\mathcal U}_2 &= \frac{1}{6}\left( \dbar K + u^2 - K^{b_1 b_0} K_{b_1 b_0} - D_b u^b \right),
	\end{align}
\end{subequations}
\begin{subequations}
	\begin{align}
		\overset{0}{\mathcal U}_{3\, a_2 a_1 a_0} &= \frac{1}{4} \Big( - D_{(a_2} \dbar K_{a_1 a_0)} + 3 K_{(a_2 a_1} D_{a_0)}K + K D_{(a_0} K_{a_2 a_1)} - 2 u_{(a_2} D_{a_1} u_{a_0)} + D_{(a_2} \,^{(3)}\mathsf{R}_{a_1 a_0)} \nonumber\\
		&\qquad + D_{(a_2} D_{a_1} u_{a_0)} - 2 K_{(a_2 a_1} D^{b_0}K_{a_0) b_0} \Big)\\
	\overset{1}{\mathcal U}_{3\, a_2 a_1} &= \frac{1}{12} \Big(  - \dbar^2 K_{a_2 a_1}  + 3 u^2 K_{a_2 a_1} - 2 K^{b_1 b_0} K_{b_1 b_0} K_{a_2 a_1} + 2 K K_{b_0 (a_2} K^{b_0}\,_{ a_1)} \nonumber\\
	&\qquad + 4 K_{b_0 (a_2} \,^{(3)}\mathsf{R}^{b_0}\,_{a_1)} - 2 K_{b_1 b_0} \,^{(3)}R^{b_1}\,_{(a_1}\,^{b_0}\,_{a_2)} - 3 K^{b_0}\,_{(a_2} u_{a_1)} u_{b_0} + 3 K_{a_2 a_1} \dbar K  \nonumber\\
	&\qquad + K \dbar K_{a_2 a_1}- 2 K_{b_0 (a_2} \dbar K^{b_0}\,_{a_1)} + \dbar \,^{(3)}\mathsf{R}_{a_2 a_1} - 3 u_{(a_2} \dbar u_{a_1)}+ 2 u_{(a_2} D_{a_1)} K \nonumber\\
	&\qquad + u_{b_0} D_{(a_2}K^{b_0}\,_{a_1)} + D_{(a_2}\dbar u_{a_1)} + 2 D_{(a_2} D_{a_1)} K - u^{b_0} D_{b_0} K_{a_2 a_1}  \nonumber\\
	&\qquad - 2 u_{(a_2} D^{b_0} K_{|b_0| a_1)} + 3 K^{b_0}\,_{(a_2} D_{|b_0|}u_{a_1)} - 2 K_{a_2 a_1} D^{b_0} u_{b_0} - 2 D^{b_0} D_{(a_2} K_{a_1) b_0} \Big) \\
	\overset{3}{\mathcal U}_{3\, a_2} &= \frac{1}{12} \Big( 3 D_{a_2} \dbar K - 2 D_{b_0} \dbar K^{b_0}\,_{a_2} + 2 K^{b_1}\,_{a_2} K_{b_1 b_0} u^{b_0}+3 \,^{(3)}\mathsf{R}_{b_0 a_2} u^{b_0}-2 K^{b_0}\,_{a_2} K_{b_1 b_0} u^{b_1}\nonumber\\
	&\qquad\qquad -D^b D_b u_{a_2}+4 u D_{a_2} u -4 K_{b_1 b_0} D_{a_2} K_{b_1 b_0} +6 K_{b_0 a_2} D_{b_0} K \nonumber\\
	&\qquad\qquad -4 K_{b_1 a_2} D_{b_0} K_{b_1 b_0} - 2 u_{a_2} D_{b_0} u_{b_0} \Big)\\
	\overset{7}{\mathcal U}_{3} &= \frac{1}{4} \Big(\dbar^2 K + K^{b_1 b_0} \big(u_{b_0} u_{b_1} - 2 \dbar K_{b_1 b_0}\big) - D_{b_0} \dbar u^{b_0} + 3 u \dbar u \nonumber\\
	&\qquad\qquad + 3 u^{b_0} \big(D_{b_0} K - D_{b_1} K^{b_1}\,_{b_0}\big) - K_{b_1 b_0} D_{b_1} u_{b_0} - K u^2 \Big)
	\end{align}
\end{subequations}
	Note that due to Bianchi identity, $\dbar \,^{(3)}R_{a_3 a_2 a_1 a_0}$ and normal derivatives of its contractions are given in terms of $K_{a_1 a_0}$, $u_{a}$ and their tangent derivatives. (cf. Appendix \ref{App:3p1}, eqs. \eqref{BianchiDbarR}).

\begin{subequations}
\begin{align}
	\overset{0}{\mathcal U}_{4\, a_3 a_2 a_1 a_0} &= \frac{1}{60} \Big(-18 K_{(a_3 a_2} \dbar^2 K_{a_1 a_0)} + 54 K_{(a_3 a_2} K_{a_1 a_0)} \dbar K+ \dbar K_{(a_3 a_2}\big(8 K K_{a_1 a_0)}  \nonumber\\
	&\qquad -8 K^{b_0}\,_{a_1} K_{a_0)b_0} -10 \,^{(3)}\mathsf{R}_{a_1 a_0)}+18 u_{a_1} u_{a_0)}+9 \dbar K_{a_1 a_0)} \big) \nonumber\\
	&\qquad -18 \big(2 K_{b_0 (a_3} \dbar K^{b_0}\,_{a_2} K_{a_1 a_0)} -\dbar \,^{(3)}\mathsf{R}_{(a_3 a_2} K_{a_1 a_0)} +3 \dbar u_{(a_3}  u_{a_2} K_{a_1 a_0)} \nonumber\\
	&\qquad+ (D_{(a_3} u_{a_2}) \dbar K_{a_1 a_0)}\big)+18 \big(K_{(a_3 a_2} D_{a_1}\dbar u_{a_0)}-D_{(a_3} D_{a_2} \dbar K_{a_1 a_0)}\big)\Big) \nonumber\\
	&\qquad+ \,^0\mathfrak{U}_{a_3 a_2 a_1 a_0}\\
	\overset{1}{\mathcal U}_{4\, a_3 a_2 a_1} &= \frac{1}{60}\Big( 9 (-D_{(a_3} \dbar^2 K_{a_2 a_1)} +K D_{(a_3} \dbar K_{a_2 a_1)}) -27 u_{(a_3} D_{a_2} \dbar u_{a_1)} +18 \dbar K D_{(a_3} K_{a_2 a_1)} \nonumber\\
	&\qquad +9 \big(-3 \dbar u_{(a_3)} D_{a_2} u_{a_1)} +D_{(a_3} \dbar \,^{(3)}\mathsf{R}_{a_2 a_1)} +D_{(a_3} D_{a_2} \dbar u_{a_1)} \big)\nonumber\\
	&\qquad+\big(-22 D_{(a_3} K_{b_0 a_2} +13 D_{b_0} K_{(a_3 a_2} \big)\dbar K^{b_0}\,_{a_1)} \nonumber\\
	&\qquad+ \big(13 D_{(a_3} K -4 D^{b_0} K_{b_0 (a_3} \big)\dbar K_{a_2 a_1)}-9 K_{b_0 (a_3} \big(2 D_{a_2} \dbar K_{b_0 a_1)} +D_{b_0} \dbar K_{a_2 a_1)} \big)\nonumber\\
	&\qquad +27 K_{(a_3 a_2} \big(2 D_{a_1)} \dbar K -D_{b_0} \dbar K_{b_0 a_1)} \big)\Big) +\,^1\mathfrak{U}_{a_3 a_2 a_1},\\
	\overset{3}{\mathcal U}_{4\, a_3 a_2} &=\frac{1}{180}\Big( 9 \big(-\dbar^3 K_{a_3 a_2} + 6 K_{a_3 a_2} \dbar^2 K +K \dbar^2 K_{a_3 a_2} \nonumber\\
	&\qquad\qquad -4 \big(K^{b_0}\,_{(a_3} \dbar^2 K_{a_2) b_0} +u_{(a_3} \dbar^2 u_{a_2)} \big)+D_{(a_3} \dbar^2 u_{a_2)} \big) \nonumber\\
	&\qquad-\big(22 K^{b_0}\,_{(a_3} K_{a_2)}\,^{b_1}+86 K_{a_3 a_2} K^{b_1 b_0}+32 \,^{(3)}R^{b_1}\,_{(a_3}\,^{b_0}\,_{a_2)}\big) \dbar K_{b_1 b_0}\nonumber\\
	&\qquad+9 \dbar^2 \,^{(3)}\mathsf{R}_{a_3 a_2}+54 K^{b_0}\,_{(a_3} \dbar \,^{(3)}\mathsf{R}_{a_2) b_0} +18 K^{b_1 b_0} \dbar \,^{(3)}R_{b_1 (a_3 a_2) b_0} \nonumber\\
	&\qquad+\big(72 K_{b_0 (a_3}K_{a_2) b_0} +5 \big(K K_{a_3 a_2}+\,^{(3)}\mathsf{R}_{a_3 a_2}-u_{a_3}u_{a_2} \big)-13 D_{(a_3} u_{a_2)} \big)\dbar K \nonumber\\
	&\qquad+9 \big(-7 K_{b_0 (a_3} u_{a_2)}+2 \big(9 K_{a_3 a_2} u_{b_0}+D_{(a_3} K_{a_2) b_0} -D_{b_0} K_{a_3 a_2} \big)\big)\dbar u^{b_0} \nonumber\\
	&\qquad+9 \big(-3 \big(3 u^{b_0} K_{b_0 (a_3}+\dbar u_{(a_3} \big)+2 \big(D_{(a_3} K -D_{b_0} K^{b_0}\,_{(a_3} \big)\big)\dbar u_{a_2)}\nonumber\\
	&\qquad+\big(-10 K^{b_1}\,_{(a_3} K_{b_1 b_0}+54 \big(K K_{b_0 (a_3}+\,^{(3)}\mathsf{R}_{b_0 (a_3}\big)-22 \big(2 u_{b_0} u_{(a_3} +\dbar K_{b_0 (a_3} \big)\nonumber\\
	&\qquad\quad+53 D_{b_0} u_{(a_3} \big)\dbar K_{a_2)}\,^{b_0} + \big(31 \big(u^2+\dbar K \big)-13 \big(K^{b_1 b_0}K_{b_1 b_0}+D_{b_0} u^{b_0} \big)\big)\dbar K_{a_3 a_2}\nonumber\\
	&\qquad +18 \big(2 u_{(a_3} D_{a_2)} \dbar K -u^{b_0} D_{b_0} \dbar K_{a_3 a_2} \big)+9 \big(K^{b_0}\,_{(a_3} D_{a_2)} \dbar u_{b_0} +5 D_{a_3} D_{a_2} \dbar K \nonumber\\
	&\qquad\quad +u^{b_0} D_{(a_3} \dbar K_{a_2) b_0)} +5 K^{b_0}\,_{(a_3} D_{b_0} \dbar u_{a_2)} -5 K_{a_3 a_2} D_{b_0} \dbar u^{b_0} \nonumber\\
	&\qquad\quad -4 \big(u_{(a_3} D_{b_0} \dbar K^{b_0}\,_{a_2)} +D_{b_0} D_{(a_3} \dbar K^{b_0}\,_{a_2)}\big)\big) \Big)+\,^3\mathfrak{U}_{a_3 a_2},\\
	\overset{7}{\mathcal U}_{4\, a_3} &=+\,^7\mathfrak{U}_{a_3},\\
	\overset{15}{\mathcal U}_{4} &=+\,^{15}\mathfrak{U},
\end{align}
\end{subequations}
where $\,^0\mathfrak{U}_{a_3 a_2 a_1 a_0}$, $\,^1\mathfrak{U}_{a_3 a_2 a_1}$, $\,^3\mathfrak{U}_{a_3 a_2}$, $\,^7\mathfrak{U}_{a_3}$ and $\,^{15}\mathfrak{U}$ are symmetric tensors that do not involve normal derivatives.
With \eqref{TildeU} we can compute up to four tangent derivatives of $\tilde{U}(\x,\x')$ which match in the coincidence limits with those of $U(\x,\x')$. However, according to Theorem \ref{th:SigmaSurfaceApprox}, we also require a third order surface approximation for $\dbar U(\x,\x')$ and second order surface approximations for $\dbar \dbar' U(\x,\x')$ and $\dbar^2 U(\x,\x')$. This is achieved by computing the corresponding normal derivatives on \eqref{TildeU}, where we will require, in addition to the aforementioned expansions for $\dbar \Sigma$, $\dbar \dbar' \Sigma$ and $\dbar^2 \Sigma$, up to two normal derivatives of all the tangent projection components of $\mathcal{U}_{2\,a_1 a_0}$, and one normal derivative of all the tangent projection components of $\mathcal{U}_{3\,a_2 a_1 a_0}$. We require no further normal derivatives of $\mathcal{U}_{4\,a_3 a_2 a_1 a_0}$ because they appear only when computing derivatives of order higher than $4$, which is already above the order or our approximation.

Note that in any case, the higher order of normal derivatives is three normal derivatives of the extrinsic curvature, or in other words, four \textit{time derivatives} of the induced metric, see \eqref{DtHDk}.

\subsubsection{Expansion and surface projections for V(x,x').}

We recall that the biscalar $V(\x,\x')$ admits the Hadamard expansion \eqref{vSigma}. We define an analogue approximation for $\tilde V(\x,\x')$ of the form,
\begin{equation}
	\tilde{V} (\x,\x') = \tilde{V}_0 (\x,\x') + \tilde{V}_1(\x,\x') \tilde{\Sigma} (\x,\x'). \label{TildeVPwr}
\end{equation}

We only define the covariant Taylor expansion for $\tilde{V}_0$, as $V_1$ is only required up to zeroth order,
\begin{align}
	\tilde{V}_0 (\x,\x') &= \mathcal{V}_0 (\x) + \mathcal{V}_1\,^a (\x) \nabla_a \tilde{\Sigma}(\x,\x') + \frac{1}{2}\mathcal{V}_2\,^{a_1 a_0} (\x) \nabla_{a_1} \tilde{\Sigma}(\x,\x') \nabla_{a_0} \tilde{\Sigma}(\x,\x').
\end{align}

In this way, $\tilde{V}$ given by \eqref{TildeVPwr} is a second order approximation for $V(\x,\x')$, as required by Theorem \ref{Th:OrderEstimates}. The required coefficients have been computed in \cite{DecaniniFolacci},
\begin{subequations}
	\begin{align}
		\mathcal{V}_0 &= \frac{1}{2}m^2 + \frac{1}{2} \left(\xi - \frac{1}{6}\right)R,\\
		\mathcal{V}_{1\, a} &= \frac{1}{4}\left(\xi - \frac{1}{6}\right) \nabla_a R,\\
		\mathcal{V}_{2\, a_1 a_0} &= \frac{1}{6}\left(\xi - \frac{3}{20}\right)\nabla_{a_1} \nabla_{a_2} R -\frac{1}{120} \Box R_{a_1 a_0} - \frac{1}{180} R^{b_2 b_1 b_0}\,_{a_1} R_{b_2 b_1 b_0 a_0} - \frac{1}{180} R^{b_1 b_0} R_{b_1 a_1 b_0 a_0} \nonumber \\
		&\qquad + \frac{1}{90} R^{b}\,_{a_1} R_{b a_0} +\frac{1}{12}\left(\xi - \frac{1}{6}\right) R R_{a_1 a_0} + \frac{1}{12} m^2 R_{a_1 a_0}, \label{VTaylor02}\\
		\tilde{V}_1 &= - \frac{1}{24} \left(\xi - \frac{1}{5}\right) \Box R + \frac{1}{720} R^{b_3 b_2 b_1 b_0} R_{b_3 b_2 b_1 b_0} - \frac{1}{720} R^{b_1 b_0} R_{b_1 b_0} + \frac{1}{8}\left(\xi - \frac{1}{6}\right)^2 R^2 \nonumber\\
		&\qquad + \frac{1}{4}\left(\xi - \frac{1}{6}\right)m^2 R + \frac{1}{8} m^4. \label{VTaylor10}
	\end{align}
\end{subequations}
and their corresponding $3+1$ tangent projections are
\begin{align}
	\mathcal{V}_0 = \frac{1}{12} \Big(&6 m^2+(6 \xi -1) \big(K^2-2 u^2+K^{b_1b_0}K_{b_1b_0}+\,^{(3)}\mathscr{R}\big) \nonumber\\
	&\quad +(2-12 \xi ) \dbar K +2 (6 \xi -1) D_{b_0} u_{b_0} \Big),
\end{align}
\begin{subequations}
	\begin{align}
		\overset{0}{\mathcal{V}_{1\, a_0}} = -\frac{1}{24}(6 \xi -1) \Big(& 2 \,^{(3)}\mathsf{R}^{b_0}\,_{a_0} u_{b_0}-2 D^{b_0}D_{b_0} u_{a_0}-2 K D_{a_0} K -D_{a_0} \,^{(3)}\mathscr{R} +4 u D_{a_0} u \nonumber\\
		&\quad-2 K^{b_1b_0} D_{a_0} K_{b_1b_0} +2 D_{a_0} \dbar K \Big)\\
		\overset{1}{\mathcal{V}_{1\,}} = \frac{1}{24}(6 \xi -1) \Big(& 2 K \dbar K +\dbar \,^{(3)}\mathscr{R} +2 \big(K u^2-3 u \dbar u -\dbar^2 K -u^{b_0} D_{b_0} K +u^{b_1} D_{b_0} K_{b_1}\,^{b_0} \nonumber\\
		&\quad+D_{b_0} \dbar u^{b_0} \big)+2 K^{b_1 b_0} \big(-u_{b_0} u_{b_1}+\dbar K_{b_1 b_0} +D_{b_1} u_{b_0} \big) \Big)
	\end{align}
\end{subequations}
\begin{subequations}
\begin{align}
	\overset{0}{\mathcal{V}}_{2\, a_1 a_0} = \frac{1}{360}\Big(& - 3 \dbar^3 K_{a_1 a_0} + 6 K \dbar^2 K_{a_1 a_0} + 3 (7 - 40 \xi ) K_{a_1 a_0} \dbar^2 K \nonumber\\
	&\quad+ 3 \big( D_{(a_1} \dbar^2 u_{a_0)} - 4 u_{(a_1} \dbar^2 u_{a_0)} \big) + 3 D^{b_0}D_{b_0} \dbar K_{a_1 a_0} \nonumber\\
	&\quad+ 6 (3 - 20 \xi ) D_{(a_1} D_{a_0)} \dbar K - 6 u_{b_0} D_{b_0} \dbar K_{a_1 a_0} + 3 u^{b_0} D_{(a_1} \dbar K_{a_0) b_0} \nonumber\\
	&\quad- 12 u_{(a_1} D_{b_0} \dbar K_{b_0 a_0)} + 12 u_{(a_1} D_{a_0)} \dbar K - 3 K D_{(a_1} \dbar u_{a_0)} + 6 K_{b_0 (a_1} D_{b_0} \dbar u_{a_0)} \nonumber\\
	&\quad+ 6 ( - 3 + 20 \xi ) K_{a_1 a_0} D_{b_0} \dbar u_{b_0} + \big(5 \,^{(3)}\mathscr{R} - 6 u^2 - 30 M^2 + K^2 (2 - 30 \xi ) \nonumber\\
	&\quad+ 7 \big(K^{b_1 b_0}K_{b_1 b_0}\big) - 30 \xi \big(\,^{(3)}\mathscr{R} - 2 u^2 + K^{b_1 b_0}K_{b_1 b_0}\big) + ( - 6 + 60 \xi ) \dbar K \nonumber\\
	&\quad+ 12 (1 - 5 \xi ) D_{b_0} u_{b_0} \big) \dbar K_{a_1 a_0} + \big(14 K^{b_1}\,_{b_0}K_{b_1 (a_1} - 8 \big(K K_{b_0 (a_1} + \,^{(3)}\mathsf{R}_{b_0 (a_1}\big)\nonumber\\
	&\quad + 3 D_{b_0} u_{(a_1} \big) \dbar K^{b_0}\,_{a_0)} + 2 \big( - K_{b_1 (a_1} K_{a_0) b_0} + 4 ( 15 \xi - 2 ) K_{a_1 a_0} K_{b_1 b_0} \nonumber\\
	&\quad- \,^{(3)}R_{b_1 (a_1 a_0) b_0} \big) \dbar K^{b_1 b_0} + \big(( 60 \xi - 11 ) K K_{a_1 a_0} - 4 K^{b_0}\,_{(a_1} K_{a_0) b_0} \nonumber\\
	&\quad+ 10 (1 - 6 \xi ) \,^{(3)}\mathsf{R}_{a_1 a_0} + 6 ( - 3 + 10 \xi ) u_{(a_1} u_{a_0)} + 12 ( 1 - 5 \xi ) D_{(a_1} u_{a_0)} \big) \dbar K \nonumber\\
	&\quad+ \big(9 K u_{(a_1} - 6 u^{b_0} K_{b_0 (a_1} - 9 \dbar u_{(a_1} + 6 D_{(a_1} K - 6 D_{b_0} K^{b_0}\,_{(a_1} \big) \dbar u_{a_0)} \nonumber\\
	&\quad- 6 \big(K_{b_0 (a_1} u_{a_0)} - D_{(a_1} K_{a_0) b_0} + D_{b_0} K_{a_1 a_0} \big) \dbar u^{b_0} + 9 (7 - 40 \xi ) K_{a_1 a_0} u \dbar u \nonumber\\
	&\quad+ 3 \big( \dbar^2 \,^{(3)}\mathsf{R}_{a_1 a_0} - K \dbar \,^{(3)}\mathsf{R}_{a_1 a_0} + ( 20 \xi - 3 ) K_{a_1 a_0} \dbar \,^{(3)}\mathscr{R} \big)
	\Big) + \,^{0}\mathfrak{V}_{a_1 a_0}
\end{align}
\begin{align}
	\overset{1}{\mathcal{V}_{2\, a_1}} = \frac{1}{360} \Big(& 3 (7 - 40 \xi ) D_{a_1} \dbar^2 K - 3 D_{b_0} \dbar^2 K^{b_0}\,_{a_1} - 6 K_{b_0 a_1} u_{b_1} \dbar K^{b_1 b_0} - 360 \xi \dbar u D_{a_1} u \nonumber\\
	&- 18 \dbar K^{b_1 b_0} D_{a_1} K_{b_1 b_0} + 120 \xi \dbar K^{b_1 b_0} D_{a_1} K_{b_1 b_0} + 120 K \xi D_{a_1} \dbar K + 60 \xi D_{a_1} \dbar \,^{(3)}\mathscr{R} \nonumber\\
	&- 360 u \xi D_{a_1} \dbar u + 3 u^{b_0} \big(2 \dbar \,^{(3)}\mathsf{R}_{b_0 a_1} + D_{a_1} \dbar u_{b_0} \big) + 60 \xi \dbar K D_{b_0} K^{b_0}\,_{a_1} \nonumber\\
	&- 4 \dbar K^{b_1 b_0} D_{b_0} K_{b_1 a_1} + 21 K^{b_0}\,_{a_1} D_{b_0} \dbar K - 120 \xi K^{b_0}\,_{a_1} D_{b_0} \dbar K + 3 K D_{b_0} \dbar K^{b_0}\,_{a_1}\nonumber\\
	&+ K^{b_1}\,_{a_1} \big(6 u^{b_0} \dbar K_{b_1 b_0} - 3 D_{b_0} \dbar K^{b_1}\,_{b_0} \big) + 120 \xi D_{b_0} D_{a_1} \dbar u^{b_0} + \dbar K^{b_1 b_0} D_{b_1} K_{b_0 a_1} \nonumber\\
	&- 3 \big(( - 7 + 40 \xi ) \,^{(3)}\mathsf{R}_{b_0 a_1} \dbar u^{b_0} - \dbar u \big(u\,  u_{a_1} + 19 D_{a_1} u \big) + 7 K D_{a_1} \dbar K + 3 D_{a_1} \dbar \,^{(3)}\mathscr{R} \nonumber\\
	&- 19 u D_{a_1} \dbar u + \dbar K \big((4 - 20 \xi ) D_{a_1} K + 3 D_{b_0} K^{b_0}\,_{a_1} \big) + \dbar u_{a_1} \big(u^2 + D_{b_0} u^{b_0} \big) \nonumber\\
	&+ 2 u_{a_1} D_{b_0} \dbar u^{b_0} + 6 D_{b_0} D_{a_1} \dbar u^ {b_0} - \dbar K^{b_0}\,_{a_1} \big(D_{b_0} K + D_{b_1} K^{b_1}\,_{b_0} \big)\big) \nonumber\\
	&+ 3 K^{b_1 b_0} \big( - 2 u_{b_1} \dbar K_{b_0 a_1} + 2 u_{b_0} \dbar K_{b_1 a_1} + K_{b_1 a_1} \dbar u_{b_0} - K_{b_0 a_1} \dbar u_{b_1} \nonumber\\
	&\qquad + ( - 7 + 40 \xi ) D_{a_1} \dbar K_{b_1 b_0} + D_{b_1} \dbar K_{b_0 a_1} \big) \Big) + \,^{1}\mathfrak{V}_{a_1},
\end{align}
\begin{align}
	\overset{3}{\mathcal{V}_{2}} = \frac{1}{360} \Big(& 3(7 - 40 \xi ) \dbar^3 K + 3 K ( - 7 + 40 \xi ) \dbar^2 K + 24 ( - 1 + 5 \xi ) K^{b_1 b_0} \dbar^2 K_{b_1 b_0} \nonumber\\
	&+ 60 \xi \big(\dbar^2 \,^{(3)}\mathscr{R} - 2 \big(3 u \dbar^2 u + u^{b_0} \dbar^2 u_{b_0} - D_{b_0} \dbar^2 u^{b_0} \big)\big) + 21 \big(3 u \dbar^2 u + u^{b_0} \dbar^2 u_{b_0} \nonumber\\
	&\qquad - D_{b_0} \dbar^2 u^{b_0} \big) - 3 D^b D_b \dbar K + 12 ( - 1 + 5 \xi ) \dbar K^2 - 72 K u \dbar u - 9 \dbar^2 \,^{(3)}\mathscr{R} \nonumber\\
	& + 360 K u \xi \dbar u + 63 \dbar u^2 - 360 \xi \dbar u^2 + 6 K^{b_2 b_0} K_{b_2}\,^{b_1} \dbar K_{b_1 b_0} - 2 \,^{(3)}\mathsf{R}^{b_1 b_0} \dbar K_{b_1 b_0} \nonumber\\
	&+ 60 u^{b_0} u^{b_1} \dbar K_{b_1 b_0} - 240 \xi u^{b_0} u^{b_1} \dbar K_{b_1 b_0} - 18 \dbar K^{b_1 b_0} \dbar K_{b_1 b_0}+ 120 \xi \dbar K^{b_1 b_0}\dbar K_{b_1 b_0} \nonumber\\
	&+ 48 \dbar u^{b_0} D_{b_0} K - 240 \xi \dbar u^{b_0} D_{b_0} K - 48 \dbar u_{b_1} D_{b_0} K^{b_1 b_0} + 240 \xi \dbar u_{b_1} D_{b_0} K^{b_1 b_0} \nonumber\\
	&+ \dbar K \big( - 5 \,^{(3)}\mathscr{R} - 36 u^2 + 30 M^2 + 5 K^2 ( - 1 + 6 \xi ) - 13 \big(K^{b_1 b_0}K_{b_1 b_0}\big) \nonumber\\
	&\qquad + 30 \xi \big(\,^{(3)}\mathscr{R} + 4 u^2 + 3 \big(K^{b_1 b_0}K_{b_1 b_0}\big)\big) + 12 ( - 1 + 10 \xi ) D_{b_0} u^{b_0} \big) + 54 u^{b_0} D_{b_0} \dbar K \nonumber\\
	&- 240 \xi u^{b_0} D_{b_0} \dbar K - 33 u_{b_1} D_{b_0} \dbar K^{b_1 b_0} + 120 \xi u_{b_1} D_{b_0} \dbar K^{b_1 b_0} + 3 K D_{b_0} \dbar u^{b_0} \nonumber\\
	&+ 3 ( - 11 + 40 \xi ) \dbar K^{b_1 b_0} D_{b_1} u_{b_0} + 2 K^{b_1 b_0} \big(2 K \dbar K_{b_1 b_0} - 5 K^{b_2}\,_{b_0} \dbar K_{b_2 b_1} \nonumber\\
	&\qquad - 3 ( - 7 + 40 \xi ) \big(u_{b_1} \dbar u_{b_0} + u_{b_0} \dbar u_{b_1} - D_{b_1} \dbar u_{b_0} \big)\big) \Big) + \,^{3}\mathfrak{V},
\end{align}
\end{subequations}
\begin{align}
\tilde{V}_{1} = \frac{1}{720} \Big(& 6 ( - 1 + 5 \xi ) \big(-2\dbar^3 K  +4 K \dbar^2 K + \dbar^2 \,^{(3)}\mathscr{R} + 2 K^{b_1 b_0} \dbar^2 K_{b_1 b_0} - 2 \big(3 u \dbar^2 u + u^{b_0} \dbar^2 u_{b_0}\nonumber\\
&\qquad - D_{b_0} \dbar^2 u^{b_0} \big)\big) + 60 \xi D^b D_b \dbar K + ( - 3 + 60 \xi ( - 1 + 6 \xi )) (\dbar K)^2 - 30 K \xi \dbar \,^{(3)}\mathscr{R} \nonumber\\
&- 72 K u \dbar u + 360 K u \xi \dbar u + 36 (\dbar u)^2 - 180 \xi (\dbar u)^2 + 2 \,^{(3)}\mathsf{R}^{b_1 b_0} \dbar K_{b_1 b_0} + 30 u^{b_0} u^{b_1} \dbar K_{b_1 b_0} \nonumber\\
&- 120 \xi u^{b_0} u^{b_1} \dbar K_{b_1 b_0} - 9 \dbar K^{b_1 b_0} \dbar K_{b_1 b_0}+ 60 \xi \dbar K^{b_1 b_0}\dbar K_{b_1 b_0} - 120 \xi \dbar u^{b_0} D_{b_0} K \nonumber\\
&+ 120 \xi \dbar u_{b_1} D_{b_0} K^{b_1 b_0} + 6 \big( - 2 D^b D_b \dbar K + K \dbar \,^{(3)}\mathscr{R} + 4 \dbar u^{b_0} D_{b_0} K \nonumber\\
&\qquad- 4 \dbar u_{b_1} D_{b_0} K^{b_1 b_0} \big) + 2 \dbar K \big(K^2 - 5 \,^{(3)}\mathscr{R} - 3 u^2 - 4 K^{b_1 b_0}K_{b_1 b_0} \nonumber\\
&\qquad+ 30 \big(M^2 (1 - 6 \xi ) + \xi \big(K^2 (1 - 6 \xi ) + 2 u^2 ( - 1 + 6 \xi ) \nonumber\\
&\qquad\qquad- 2 ( - 1 + 3 \xi ) \big(\,^{(3)}\mathscr{R} + K^{b_1 b_0}K_{b_1 b_0}\big)\big)\big) - 3 (3 + 40 \xi ( - 1 + 3 \xi )) D_{b_0} u^{b_0} \big) \nonumber\\
&+ 24 u^{b_0} D_{b_0} \dbar K - 120 \xi u^{b_0} D_{b_0} \dbar K - 12 u_{b_1} D_{b_0} \dbar K^{b_1 b_0} + 60 \xi u_{b_1} D_{b_0} \dbar K^{b_1 b_0} \nonumber\\
&+ 12 K D_{b_0} \dbar u^{b_0} - 60 K \xi D_{b_0} \dbar u^{b_0} + 6 ( - 3 + 10 \xi ) \dbar K^{b_1 b_0} D_{b_1} u_{b_0} \nonumber\\
&- 2 K^{b_1 b_0} \big(K ( - 7 + 30 \xi ) \dbar K_{b_1 b_0} + 4 K^{b_2}\,_{b_0} \dbar K_{b_2 b_1} + 12 ( - 1 + 5 \xi ) \big(u_{b_1} \dbar u_{b_0} + u_{b_0} \dbar u_{b_1} \nonumber\\
&\qquad\qquad- D_{b_1} \dbar u_{b_0} \big)\big) \Big) + \mathfrak{V},
\end{align}
where  $\,^{0}\mathfrak{V}_{a_1 a_0}$, $\,^{1}\mathfrak{V}_{a_1}$, $\,^{3}\mathfrak{V}$ and $\mathfrak{V}$ are symmetric tensors that do not involve normal derivatives. This illustrates the fact  that four time derivatives of the induced metric on $\mathcal{C}$ are required as initial data to define the second order approximation $\tilde{V}(\x,\x')$.

\subsection{Surface approximation for \texorpdfstring{$\Sigma$}{S}(x,x')} \label{sec:HypSigma}

The surface Taylor expansion for $\Sigma(\x,\x')$ on $\mathcal{C}$ is

\small{\begin{align} \label{HypApproxS0}
	\tilde{\Sigma} (\x,\x') =& \overset{0}{S}(\x) + \overset{1}{S}^a (\x) D_a \sigma (\x,\x') + \frac{1}{2} \overset{2}{S}^{a_1 a_0} (\x) D_{a_1} \sigma (\x,\x') D_{a_0} \sigma (\x,\x') \nonumber \\
	& + \frac{1}{6} \overset{3}{S}^{a_2 a_1 a_0} (\x) D_{a_2} \sigma (\x,\x') D_{a_1} \sigma (\x,\x') D_{a_0} \sigma (\x,\x')\nonumber \\
	& + \frac{1}{24} \overset{4}{S}^{a_3 a_2 a_1 a_0} (\x) D_{a_3} \sigma (\x,\x') D_{a_2} \sigma (\x,\x') D_{a_1} \sigma (\x,\x') D_{a_0} \sigma (\x,\x') \nonumber \\
	& + \frac{1}{5!} \overset{5}{S}^{a_4 a_3 a_2 a_1 a_0} (\x) D_{a_4} \sigma (\x,\x') D_{a_3} \sigma (\x,\x') D_{a_2} \sigma (\x,\x') D_{a_1} \sigma (\x,\x') D_{a_0} \sigma (\x,\x') \nonumber \\
	& + \frac{1}{6!} \overset{6}{S}^{a_5 a_4 a_3 a_2 a_1 a_0} (\x) D_{a_5} \sigma (\x,\x') D_{a_4} \sigma (\x,\x') D_{a_3} \sigma (\x,\x') D_{a_2} \sigma (\x,\x') D_{a_1} \sigma (\x,\x') D_{a_0} \sigma (\x,\x'),
\end{align}}
where $\sigma(\x,\x')$ is half the squared surface geodesic distance on $\mathcal{C}$ from $\x$ to $\x'$. 

The elements required to construct a covariant Taylor series \eqref{CovTaylorGen} for a bi-scalar $A(\x,\x')$, particularly its expansion coefficients \eqref{CovTaylorCoefs02}, are the coincidence limits of covariant derivatives acting on both $\Sigma(\x,\x')$ and $A(\x,\x')$. We followed \cite{DeWittBrehme} to compute the required coincidence limits for $\Sigma(\x,\x')$, and included them in appendix \ref{AppSigma}. In the following, we will simply quote said results for better readability.

When considering surface Taylor series like \eqref{HypApproxS0}, most of the results from covariant Taylor series translate directly in terms of its surface counterparts, i.e., we have analog expressions for the \textit{coefficients} in \eqref{CovTaylorCoefs02} by taking limits of surface-tangent derivatives $D_a$ instead of spacetime covariant derivatives $\nabla_a$. These are computed in appendix \ref{App:3p1SigmaLimits}, where coincidence limits of successive derivatives of $\Sigma(\x,\x')$ in $3+1$ form are set equal to the $3+1$ expansions of the corresponding limits for the covariant derivatives of $\Sigma(\x,\x')$ computed in \ref{AppSigma}. The general procedure is the following: Suppose we have an expression for the limits of up to two derivatives of a biscalar $A(\x,\x')$. Let us omit the explicit indication of the arguments in the functions involved as these can be inferred from the context. According to the $3+1$ formalism, the coincidence limit of a first derivative for $A$ is expressed as
\begin{equation}
	[\nabla_a A] = [D_a A] + n_a [\dbar A],\label{DcovFull3p1}
\end{equation}
where we have used the notation introduced in Appendix \ref{App:3p1} for normal derivatives, 
\begin{equation}
	\dbar A_{a_k \dots a_0}\equiv - h_{a_k}\,^{a'_k} \dots h_{a_0}\,^{a'_0} n^b \nabla_b A_{a'_k \dots a'_0}.
\end{equation}
Given the limit $\mathcal{A}_a \equiv [\nabla_a A]$, we can compute the derivatives $[D_a A]$ and $[\dbar A]$ by writing the full $3+1$ decomposition of $\mathcal{A}_a$,
\begin{equation}
	\mathcal{A}_a = \overset{0}{\mathcal{A}}_a + n_a \overset{1}{\mathcal{A}}, \label{CurlyA}
\end{equation}
where
\begin{subequations}
\begin{align}
	\overset{0}{\mathcal{A}}_a &\equiv h_{a}\,^{a'} \mathcal{A}_{a'},\\
	\overset{1}{\mathcal{A}} &\equiv -n_a n^{a'} \mathcal{A}_{a'},
\end{align}
\end{subequations}
and identifying the terms in \eqref{DcovFull3p1} with those in \eqref{CurlyA} to obtain
\begin{subequations} \label{FirstDerivativeLimits}
\begin{align}
	[D_a A] &= \overset{0}{\mathcal{A}}_a, \\
	[\dbar A] &= \overset{1}{\mathcal{A}}.
\end{align}
\end{subequations}
However, this result is deceptively simple. In fact at the next order of derivatives, we have from \eqref{SecondDerPhi3p1},
\begin{align}
	\nabla_{a_1} \nabla_{a_0} A &= (K_{a_1 a_0} \dbar A + D_{a_1} D_{a_0} A) + n_{a_0} \left(D_{a_1} \dbar A + K_{a_1}\,^{b} D_b A \right) \nonumber \\
	& \quad + n_{a_1} \left(D_{a_0} \dbar A + K_{a_0}\,^{b} D_b A \right) + n_{a_0} n_{a_1} \left(\dbar^2 A + u^b D_b A \right), \label{DDcovFull3p1}
\end{align}
where $u_a \equiv \dbar n_a$. Then, upon taking the limit and considering all the $3+1$ projections of $\mathcal{B}_{a_1 a_0}\equiv [\nabla_{a_1} \nabla_{a_0} A]$, we get the following system of equations,
\begin{subequations}\label{CurlyBProjections}\begin{align}
	\overset{0}{\mathcal{B}}_{a_1 a_0} &= K_{a_1 a_0} [\dbar A] + [D_{a_1} D_{a_0} A], \label{CurlyB0}\\
	\overset{1}{\mathcal{B}}_{a_0} &= [D_{a_1} \dbar A] + K_{a_1}\,^{b} [D_b A] \\
	\overset{2}{\mathcal{B}}_{a_1} &= [D_{a_0} \dbar A] + K_{a_0}\,^{b} [D_b A] \\
	\overset{3}{\mathcal{B}} &=[\dbar^2 A] + u^b [D_b A],
\end{align}\end{subequations}
where we have used the notation defined in \ref{App:3p1} for the $3+1$ \textit{projection components} of $\mathcal{B}_{a_1 a_0}$.

Therefore, for each projection component of $\mathcal{B}_{a_1 a_0}$ in \eqref{CurlyBProjections} we have to solve for the \textit{unknown} limit of derivatives within each projection. For example, in \eqref{CurlyB0}, the unknown limit is $[D_{a_1} D_{a_0} A]$, as we already know $[\dbar A]$ from \eqref{FirstDerivativeLimits}, and $\overset{0}{\mathcal{B}}_{a_1 a_0}$ is the tangent projection of the known limit $\mathcal{B}_{a_1 a_0}$. Thus, we obtain,
\begin{subequations}\label{SecondDerivativeLimits}
	\begin{equation}
		[D_{a_1} D_{a_0} A] = \overset{0}{\mathcal{B}}_{a_1 a_0} - K_{a_1 a_0} \overset{1}{\mathcal{A}},
	\end{equation}
	and from the remaining $3+1$ \textit{projection components}, the corresponding limits of derivatives are
	\begin{align}
		[D_{a} \dbar A] &= \overset{1}{\mathcal{B}}_{a} - K_{a}\,^{b} \overset{0}{\mathcal{A}}_b,\\
		[\dbar^2 A] &= \overset{3}{\mathcal{B}} - u^b \overset{0}{\mathcal{A}}_b.
	\end{align}
\end{subequations}

We now recall the results for the limits of the covariant derivatives of $\Sigma$ computed in appendix \ref{AppSigma}. There we have shown that $[\nabla_a \Sigma] =0 $ \eqref{LimDSigma}, so that \eqref{FirstDerivativeLimits} yield
\begin{subequations}
\begin{align}
	[D_a \Sigma] &= 0,\\
	[\dbar \Sigma] &= 0.
\end{align}
\end{subequations}
We also have $[\nabla_{a_1} \nabla_{a_0}\Sigma] = g_{a_1 a_0}$ \eqref{LimDDSigma}. The complete $3+1$ expansion of $g_{ab}$ is given by \eqref{InducedMetric}, and therefore
	\begin{equation}
		[\nabla_{a_1} \nabla_{a_0}\Sigma] = h_{a_1 a_0}-n_{a_1} n_{a_0},
	\end{equation}
so that \eqref{SecondDerivativeLimits} yield
\begin{subequations}
\begin{align}
	[D_{a_1} D_{a_0}\Sigma] &= h_{a_1 a_0},\\
	[D_{a_1} \dbar \Sigma] &= 0,\\
	[\dbar^2 \Sigma] &= -1.
\end{align}
\end{subequations}
In appendix \ref{App:3p1SigmaLimits} we compute all the required projections and solve for the limits of all the tangent, normal and mixed derivatives included in the $3+1$ expansions of the coincidence limits $[\nabla_{a_k} \dots \nabla_{a_0} \Sigma]$.

With these results at hand, it follows that the first \textit{coefficients} in \eqref{HypApproxS0} are
\begin{subequations}
\begin{align}
	\overset{0}{S} &= 0,\\
	\overset{1}{S}_a &= 0,\\
	\overset{2}{S}_{a_1 a_0} &= h_{a_1 a_0},
\end{align}
\end{subequations}
and then we can rewrite \eqref{HypApproxS0} as
\begin{align} \label{HypApproxS2}
		\tilde{\Sigma} =& \sigma + \frac{1}{6} \overset{3}{S}^{a_2 a_1 a_0} (D_{a_2} \sigma) (D_{a_1} \sigma) (D_{a_0} \sigma) + \frac{1}{24} \overset{4}{S}^{a_3 a_2 a_1 a_0} (D_{a_3} \sigma) (D_{a_2} \sigma) (D_{a_1} \sigma) (D_{a_0} \sigma) \nonumber \\
		& + \frac{1}{5!} \overset{5}{S}^{a_4 a_3 a_2 a_1 a_0} (D_{a_4} \sigma) (D_{a_3} \sigma) (D_{a_2} \sigma) (D_{a_1} \sigma) (D_{a_0} \sigma) \nonumber\\
		&+ \frac{1}{6!} \overset{6}{S}^{a_5 a_4 a_3 a_2 a_1 a_0} (D_{a_5} \sigma) (D_{a_4} \sigma) (D_{a_3} \sigma) (D_{a_2} \sigma) (D_{a_1} \sigma) (D_{a_0} \sigma),
\end{align}
where we have used the differential equation satisfied by $\sigma$ on $\mathcal{C}$, analog to \eqref{pdeSigma},
\begin{equation}
	h^{ab}(D_a \sigma) (D_b \sigma) = 2 \sigma.
\end{equation}

Taking the coincidence limits of three tangent derivatives of \eqref{HypApproxS2} yields
\begin{equation}
	[D_{a_2}D_{a_1}D_{a_0}\Sigma] = [D_{a_2}D_{a_1}D_{a_0}\sigma] + \overset{3}{S}_{a_2 a_1 a_0},
\end{equation}
Where, by analogy with \eqref{LimDDDSigma}, $[D_{a_2}D_{a_1}D_{a_0}\sigma]=0$ and by \eqref{LimD3Sigma} we find
\begin{equation}
	\overset{3}{S}_{a_2 a_1 a_0} = 0.
\end{equation}

At the fourth order limit we arrive to
\begin{equation}
	[D_{a_3}D_{a_2}D_{a_1}D_{a_0}\Sigma] = [D_{a_3} D_{a_2}D_{a_1}D_{a_0}\sigma] + \overset{4}{S}_{a_3 a_2 a_1 a_0},
\end{equation}
and with aid of \eqref{LimD4Sigma} and the surface version of \eqref{limSigma4}, we obtain
\begin{equation}
	\overset{4}{S}_{a_3 a_2 a_1 a_0} = K_{(a_3 a_2 }K_{ a_1 a_0)}. \label{4thOrderTermSigma}
\end{equation}

At the fifth order, we have
\begin{equation}
	\overset{5}{S}_{a_4 a_3 a_2 a_1 a_0} = [D_{a_4}D_{a_3}D_{a_2}D_{a_1}D_{a_0}\Sigma] - [D_{a_4}D_{a_3}D_{a_2}D_{a_1}D_{a_0}\sigma] - 5 D_{(a_4} \overset{4}{S}_{a_3 a_2 a_1 a_0)},
\end{equation}
where, by means of \eqref{Lim5} and the surface version of \eqref{LimD5Sigma}, we find
\begin{equation}
	[D_{a_4}D_{a_3}D_{a_2}D_{a_1}D_{a_0}\Sigma] - [D_{a_4}D_{a_3}D_{a_2}D_{a_1}D_{a_0}\sigma] = 5 D_{(a_4} K_{a_3 a_2} K_{a_1 a_0)},
\end{equation}
which means that
\begin{equation}
	\overset{5}{S}_{a_4 a_3 a_2 a_1 a_0} = -5 (D_{(a_4} K_{a_3 a_2}) (K_{a_1 a_0)}). \label{5thOrderTermSigma}
\end{equation}

At the sixth order, we have
\begin{align}
	{\overset{6}{S}}_{(a_5 a_4 a_3 a_2 a_1 a_0)}&= [D_{a_5}D_{a_4}D_{a_3} D_{a_2} D_{a_1}D_{a_0} \Sigma] - [D_{a_5}D_{a_4}D_{a_3} D_{a_2} D_{a_1}D_{a_0} \sigma ] \nonumber\\
	&\qquad - 15 D_{(a_5} D_{a_4}  \overset{4}{S}_{a_3 a_2 a_1 a_0)} - 6 D_{(a_5}\overset{5}{S}_{a_4 a_3 a_2 a_1 a_0)},
\end{align}
where substituting \eqref{4thOrderTermSigma}, \eqref{5thOrderTermSigma}, the surface version of \eqref{LimD6Sigma} and the limit \eqref{MonsterLim6}, we obtain
\begin{align}
	{\overset{6}{S}}_{(a_5 a_4 a_3 a_2 a_1 a_0)} =& -3 K_{(a_5 a_4 }K_{a_3 a_2}\dbar K_{a_1 a_0 )}+9K_{(a_5a_4 } D_{a_3} D_{a_2} K_{ a_1 a_0)} +8\left(D_{(a_5 }K_{a_4 a_3} \right) \left( D_{a_2} K_{ a_1a_0)}\right) \nonumber\\
	&\, + 3 K_{(a_5a_4 }K_{a_3a_2} D_{a_1} u_{a_0)} + 4 K_{(a_5 a_4 } K_{a_3 a_2} K_{a_1}\,^b K_{ b a_0)} - 3 K_{(a_5 a_4 } K_{ a_3 a_2 } u_{a_1} u_{a_0)}. \label{6thOrderTermSigma}
\end{align}
This means that the required approximation for $\tilde{\Sigma}$ includes at most second order \textit{time} derivatives of the intrinsic metric $h_{ab}$ (cf. Appendix \ref{App:3p1LieTime}).

\subsection{Surface expansion for normal derivatives of \texorpdfstring{$\Sigma$}{S}}\label{sec:SigmaDerivatives}

As previously discussed, the biscalars $\dbar \Sigma$ and $\dbar^2 \Sigma$ cannot be computed as \textit{formal normal derivatives} of \eqref{HypApproxS2} because $\sigma$ is not defined for points in different hypersurfaces. However, both $\dbar \Sigma$ and $\dbar^2 \Sigma$ are biscalar functions that, if evaluated at points within the same hypersurface, admit a surface Taylor expansion just like $\Sigma$. In this section we compute such approximations.

We propose that the approximation for $\dbar \Sigma$ be given by the following surface Taylor expansion,
\begin{align}
	\dbar\tilde{\Sigma} =& \overset{0}{\mathsf{A}} + \overset{1}{\mathsf{A}}^{a} D_a \sigma + \frac{1}{2!}\overset{2}{\mathsf{A}}^{a_1 a_0} (D_{a_1}\sigma) (D_{a_0}\sigma) + \frac{1}{3!}\overset{3}{\mathsf{A}}^{a_2 a_1 a_0} (D_{a_2}\sigma)(D_{a_1}\sigma) (D_{a_0}\sigma) \nonumber\\ 
	&+\frac{1}{4!}\overset{4}{\mathsf{A}}^{a_3 a_2 a_1 a_0}(D_{a_3}\sigma) (D_{a_2}\sigma)(D_{a_1}\sigma) (D_{a_0}\sigma) +\frac{1}{5!}\overset{5}{\mathsf{A}}^{a_4 a_3 a_2 a_1 a_0}(D_{a_3}\sigma) (D_{a_2}\sigma)(D_{a_1}\sigma)(D_{a_0}\sigma).\label{ExpansiondbarSigma}
\end{align}

Following a procedure completely analog to the one we used to compute the approximation $\tilde{\Sigma}$, we find 
\begin{equation}
	\overset{0}{\mathsf{A}}=0,
\end{equation}
due to \eqref{LimDNSigma}. Similarly, due to \eqref{DdbarSigma},
\begin{equation}
	\overset{1}{\mathsf{A}}_a = 0.
\end{equation}
The first nonvanishing term is
\begin{equation}
	\overset{2}{\mathsf{A}}_{a_1 a_0} = - K_{a_1 a_0}, \label{A2dsigma}
\end{equation}
due to \eqref{LimD2dbarSigma}. 

From the surface analogue of \eqref{Taylor3} and \eqref{LimD3dbarSigma} we have
\begin{equation}
	\overset{3}{\mathsf{A}}_{a_2 a_1 a_0} =D_{(a_2} K_{a_1 a_0)}. \label{A3dsigma}
\end{equation}
For the fourth order term we have
\begin{equation}
	\overset{4}{\mathsf{A}}_{a_3 a_2 a_1 a_0} = [D_{a_3} D_{a_2} D_{a_1} D_{a_0} \dbar \Sigma] - 6 D_{(a_3} D_{a_2} \overset{2}{\mathsf{A}}_{a_1 a_0)} - 4 D_{(a_3} \overset{3}{\mathsf{A}}_{a_2 a_1 a_0)}. \label{Taylor4dS}
\end{equation}
Substituting \eqref{A2dsigma}, \eqref{A3dsigma} and \eqref{D4dbarSigma} on \eqref{Taylor4dS} we obtain 
\begin{align} 
	\overset{4}{\mathsf{A}}_{a_3 a_2 a_1 a_0} =& K_{(a_3 a_2} \dbar K_{a_1 a_0)} -D_{(a_3} D_{a_2} K_{a_1 a_0)} - K_{(a_3 a_2} D_{a_1} u_{a_0)}\nonumber\\
	&\quad - 2 K_{b(a_3} K^{b}\,_{a_2} K_{a_1 a_0)} + K_{(a_3 a_2} u_{a_1} u_{a_0)}. \label{A4dsigma}
\end{align}

For the fifth coefficient, we have 
\begin{align}
	{\overset{5}{\mathsf{A}}}_{a_4 a_3 a_2 a_1 a_0} = &[D_{a_4} D_{a_3} D_{a_2}D_{a_1}D_{a_0} \dbar \Sigma] -10 D_{(a_4} D_{a_3} D_{a_2} \overset{2}{\mathsf{A}}_{a_1 a_0)} -10 D_{(a_4}D_{a_3} \overset{3}{\mathsf{A}}_{a_2 a_1 a_0)} \nonumber\\
	&\, - 5D_{(a_4}\overset{4}{\mathsf{A}}_{a_3 a_2 a_1 a_0)} ,\label{PREdbarSigma5}
\end{align}
where substituting \eqref{A2dsigma}, \eqref{A3dsigma} and \eqref{A4dsigma} yields
\begin{align}
	{\overset{5}{\mathsf{A}}}_{a_4 a_3 a_2 a_1 a_0} = & -2 K_{ (a_4 a_3 } D_{a_2}\dbar K_{a_1 a_0 )} - \frac{7}{3} \dbar K_{ (a_4 a_3 } D_{a_2}K_{ a_1 a_0) } + D_{( a_4 }D_{a_3}D_{a_2}K_{ a_1 a_0 )} \nonumber\\
	&\quad + 2 K_{( a_4 a_3 } D_{a_2}D_{a_1}u_{ a_0 ) } + \frac{16}{3} K_{ (a_4 }\,^b K_{|b|a_3} D_{a_2}K_{ a_1 a_0) } - \frac{5}{3} K_{( a_4 a_3 } K_{a_2}\,^b D_{|b|}K_{ a_1 a_0) } \nonumber\\
	&\quad + \frac{26}{3} K_{ (a_4 a_3 } K_{a_2}\,^b D_{a_1} K_{ a_0) b} + \frac{7}{3} (D_{ (a_4 }u_{a_3})(D_{a_2} K_{ a_1 a_0) }) - \frac{7}{3} u_{ (a_4 }u_{a_3} D_{a_2} K_{ a_1 a_0) }\nonumber\\
	&\quad - 4 K_{ (a_4 a_3 }u_{a_2}D_{a_1}u_{ a_0) }. \label{dbarSigma5}
\end{align}

Now we propose as a fourth-order approximation for $\dbar^2 \Sigma$ the following surface expansion,
\begin{align}
	\dbar^2\tilde{\Sigma} =& {\overset{0}{\mathsf{B}}} + {\overset{1}{\mathsf{B}}}^a D_a \sigma + \frac{1}{2!}{\overset{2}{\mathsf{B}}}^{a_1 a_0} (D_{a_1} \sigma)(D_{a_0} \sigma) + \frac{1}{3!}{\overset{3}{\mathsf{B}}}^{a_2 a_1 a_0} (D_{a_2} \sigma)(D_{a_1} \sigma)(D_{a_0} \sigma) \nonumber\\
	&\quad + \frac{1}{4!}{\overset{4}{\mathsf{B}}}^{a_3 a_2 a_1 a_0} (D_{a_3} \sigma)(D_{a_2} \sigma)(D_{a_1} \sigma)(D_{a_0} \sigma). \label{ExpansionddbarSigma}
\end{align}

The first term is
\begin{equation}
	{\overset{0}{\mathsf{B}}} = -1,
\end{equation}
due to \eqref{ddbarSigma}. The next term involves the coincidence limit \eqref{LimDddbarSigma},
\begin{equation}
	{\overset{1}{\mathsf{B}}}_a = -u_a. \label{B1}
\end{equation}
The second order Taylor coefficient comes from \eqref{LimDDddbarSigma},
\begin{equation}
	{\overset{2}{\mathsf{B}}}_{a_1 a_0} = \frac{2}{3}\left( - \dbar K_{a_1 a_0} + K_{a_1}\,^{b} K_{b a_0} - u_{a_1} u_{a_0}  + D_{a_1} u_{a_0}\right) .
\end{equation}
By taking the coincidence limit of the third derivative of \eqref{ExpansionddbarSigma} we get from (the surface analog of) \eqref{Taylor3},\eqref{B1}, and \eqref{LimD3dbar2Sigma},
\begin{equation}
	{\overset{3}{\mathsf{B}}}_{a_2 a_1 a_0} = [D_{a_2}D_{a_1}D_{a_0}\dbar^2 \Sigma] - 3 D_{(a_2}D_{a_1} {\overset{1}{\mathsf{B}}}_{ a_0)} - 3 D_{(a_2}  {\overset{2}{\mathsf{B}}}_{a_1 a_0)},
\end{equation}
Substituting our previous results,
\begin{align}
	{\overset{3}{\mathsf{B}}}_{a_2 a_1 a_0} = & \frac{1}{2} D_{(a_2} \dbar K_{a_1 a_0)} - \frac{1}{2} D_{(a_2} D_{a_1} u_{a_0)} + u_{(a_2} D_{a_1} u_{a_0)} + (D_{b}K_{(a_2 a_1})K^b\,_{a_0)} - 2 K^b\,_{(a_2} D_{a_1} K_{a_0) b} \nonumber\\
	&\,- K_{(a_2 a_1} K^b\,_{a_0)} u_b.
\end{align}

At fourth order we have 
\begin{align}
	{\overset{4}{\mathsf{B}}}_{a_3 a_2 a_1 a_0} =& \frac{17}{5} K_{(a_3 a_2} \dbar^2 K_{a_1 a_0)}+ \frac{12}{5} D_{(a_3} D_{a_2} \dbar K_{a_1 a_0)}- \frac{17}{5} K_{(a_3 a_2} D_{a_1} \dbar u_{a_0)}+ \frac{24}{5} K_{b(a_3} K^b\,_{a_2} \dbar K_{a_1 a_0)} \nonumber\\
	&\,- \frac{214}{15} K_{(a_3 a_2}K_{a_1}\,^b \dbar K_{a_0) b}+ \frac{51}{5} K_{(a_3 a_2}u_{a_1} \dbar u_{a_0)}+ 28 K_{(a_3}\,^{b_1} K_{a_2}\,^{b_2} R^{(3)}_{a_1 |b_1| a_0 ) b_0} \nonumber\\
	&\,-\frac{72}{5} K_{(a_3}\,^b D_{a_2} D_{a_1} K_{a_0)b}+\frac{48}{5} K_{b(a_3} D^{b} D_{a_2} K_{a_1 a_0)}-\frac{48}{5} (D_{(a_3}K_{a_2}\,^b) (D^{a_1} K_{a_0)b}) \nonumber\\
	&\,+\frac{24}{5} (D_{(a_3}K_{a_2}\,^{b}) (D_{|b|} K_{a1 a_0)})+\frac{46}{5} K_{(a_3 a_2} u^b D_{|b|}K_{a_1 a_0)}-\frac{41}{5} K_{(a_3 a_2} u^b D_{a_1}K_{a_0)b} \nonumber\\
	&\,+2 K_{b(a_3} u^b D_{a_2}K_{a_1 a_0)}-\frac{12}{5} D_{(a_3} D_{a_2} D_{a_1} u_{a_0)}+\frac{24}{5} u_{(a_3} D_{a_2} D_{a_1} u_{a_0)} \nonumber\\
	&\,+\frac{24}{5} (D_{(a_3} u_{a_2}) (D_{a_1} u_{a_0)})-\frac{24}{5} K_{b(a_3} K^b\,_{a_2} D_{a_1} u_{a_0)}+\frac{61}{15} K^b\,_{a_3} k_{a_2 a_1} D_{|b|} u_{a_0)} \nonumber\\
	&\,-4 K^b\,_{(a_3} K_{a_2 |b|} R^{(3)}\,_{a_1 a_0)}+\frac{24}{5} K^b\,_{(a_3} K_{a_2 |b|} u_{a_1} u_{a_0)}-\frac{61}{15} K^b\,_{(a_3} K_{a_2 a_1} u_{a_0)} u_{b} \nonumber\\
	&\,-\frac{17}{5} u^2 K_{(a_3 a_2} K_{a_1 a_0)}+\frac{28}{5} K^{b_1}\,_{b_0} K_{b_1 (a_3} K^{b_0}\,_{a_2} K_{a_1 a_0)}-\frac{48}{5} K^{b_1}\,_{(a_3} K_{|b_1| a_2} K^{b_0}\,_{a_1} K_{|b_0| a_0)} \nonumber\\
	&\,+\frac{20}{3} K K^{b}\,_{(a_3} K_{a_2 a_1} K_{a_0)b}.
\end{align}

Following Theorem \ref{th:SigmaSurfaceApprox}, we also need approximations for $\dbar'\Sigma$ and $\dbar \dbar' \Sigma$ up to orders $5$ and $4$, respectively. In order to build these, we follow the same procedure used in the case of  $\dbar\Sigma$ and $\dbar^2\Sigma$, just considering the coincidence limits for tangent derivatives of $\dbar'\Sigma$ and $\dbar \dbar' \Sigma$ up to fifth and fourth order. In order to get these limits, it is necessary to compute first the coincidence limits for up to five covariant derivatives of $\nabla_{c'}\Sigma$ at $\x$. This computation is subtle, and is carried on in appendix \ref{LimitsMixedDerivativesSigma}. Afterwards, these limits need to be expanded in $3+1$ form to finally obtain the required derivatives, which is done in appendix \ref{3p1Hybrid}. 

Let us denote by $\zeta$ the surface approximation for $\dbar'\Sigma$ to be given by
\begin{align}
	\zeta =& \frac{1}{2!}\overset{2}{\mathsf{z}}^{a_1 a_0} (D_{a_1}\sigma) (D_{a_0}\sigma) + \frac{1}{3!}\overset{3}{\mathsf{z}}^{a_2 a_1 a_0} (D_{a_2}\sigma)(D_{a_1}\sigma) (D_{a_0}\sigma) \nonumber\\ 
	&+\frac{1}{4!}\overset{4}{\mathsf{z}}^{a_3 a_2 a_1 a_0}(D_{a_3}\sigma) (D_{a_2}\sigma)(D_{a_1}\sigma) (D_{a_0}\sigma) +\frac{1}{5!}\overset{5}{\mathsf{z}}^{a_4 a_3 a_2 a_1 a_0}(D_{a_3}\sigma) (D_{a_2}\sigma)(D_{a_1}\sigma)(D_{a_0}\sigma).\label{ApproxDbarPrimeSigma}
\end{align}
By taking the coincidence limits $[\zeta]=0$ and $[D_{a_1}\zeta]=0$ we verify they match those of $[\dbar'\Sigma]=0$ and $[D_{a_1}\dbar'\Sigma]=0$ (cf. \eqref{DprimeSigma} and \eqref{3p1HybridDerivativesSigma1}). The coefficients then are,
\begin{subequations}
	\begin{align}
		\overset{2}{\mathsf{z}}_{a_1 a_0} &= -K_{a_1 a_0},\\
		\overset{3}{\mathsf{z}}_{a_2 a_1 a_0} &= 2 D_{(a_2} K_{a_1 a_0)} ,\\
		\overset{4}{\mathsf{z}}_{a_3 a_2 a_1 a_0} &= \frac{1}{2} K_{(a_3 a_2} \left(-3 K^{b_0}\,_{a_1} K_{|b_0|a_0)}+u_{a_1} u_{a_0)} +\dbar K_{a_1 a_0)} - D_{a_1} u_{a_0)} \right)-3
		D_{(a_3} D_{a_2} K_{a_1 a_0)},\\
		\overset{5}{\mathsf{z}}_{a_4 a_3 a_2 a_1 a_0} &= -\frac{4}{3} \dbar K_{(a_4 a_3} D_{a_2} K_{a_1 a_0)} + \mathfrak{Z}_{a_4 a_3 a_2 a_1 a_0},
	\end{align}
\end{subequations}
where $\mathfrak{Z}_{a_4 a_3 a_2 a_1 a_0}$ is a symmetric tensor that does not include normal derivatives. 

For $\dbar\dbar'\Sigma$, let us consider the surface approximation
\begin{align}
	\chi =& \overset{0}{\mathsf{C}}+ \overset{1}{\mathsf{C}}^{a_0} D_{a_0}\sigma + \frac{1}{2!}\overset{2}{\mathsf{C}}^{a_1 a_0} (D_{a_1}\sigma) (D_{a_0}\sigma) + \frac{1}{3!}\overset{3}{\mathsf{C}}^{a_2 a_1 a_0} (D_{a_2}\sigma)(D_{a_1}\sigma) (D_{a_0}\sigma) \nonumber\\ 
	&+\frac{1}{4!}\overset{4}{\mathsf{C}}^{a_3 a_2 a_1 a_0}(D_{a_3}\sigma) (D_{a_2}\sigma)(D_{a_1}\sigma) (D_{a_0}\sigma) . \label{ApproxDbarDbarPrimeSigma}
\end{align}
The coefficents in this case are
\begin{subequations}
	\begin{align}
		\overset{0}{\mathsf{C}} =& 1,\\
		\overset{1}{\mathsf{C}}_{a_0} =& 0,\\
		\overset{2}{\mathsf{C}}_{a_1 a_0} =& \frac{1}{6}\left(7 K_{b_0 (a_1}K^{b_0}\,_{a_0)} - u_{a_1} u_{a_0}-\dbar K_{a_1 a_0} + D_{a_1} u_{a_0} \right),\\
		\overset{3}{\mathsf{C}}_{a_2 a_1 a_0} =&\frac{1}{6} \left(4 D_{(a_2} \dbar K_{a_1 a_0)} - 31 K_{b_0 (a_2} D_{a_1} K^{b_0}\,_{a_0)} +8 u_{(a_2} D_{a_1} u_{a_0)} \right.\nonumber\\ 
		&\qquad \left.-4 D_{(a_2} D_{a_1} u_{a_0)} +5 D_{b_0} K_{(a_2 a_1} K^{b_0}\,_{a_0)} \right) ,\\
		\overset{4}{\mathsf{C}}_{a_3 a_2 a_1 a_0} =& -\frac{1}{15} \Big(K_{(a_3 a_2} \dbar^2 K_{a_1 a_0)} + 22 K_{b_0 (a_3} K^{b_0}\,_{a_2} \dbar K_{a_1 a_0)} +K_{(a_3 a_2} K^{b_0}\,_{a_1} \dbar K_{a_0) b_0} \nonumber\\
		&\qquad\qquad +3 K_{(a_3 a_2} u_{a_1} \dbar u_{a_0)} \Big) + \mathfrak{C}_{a_3 a_2 a_1 a_0}
	\end{align}
\end{subequations}
where $\mathfrak{C}_{a_3 a_2 a_1 a_0}$ is a symmetric tensor that does not include normal derivatives. 

Therefore, according to Theorem \ref{th:SigmaSurfaceApprox}, we have provided all the pieces necessary to compute the exact expectation value of the renormalized stress-energy tensor at the \textit{initial surface} $\mathcal{C}$.

\section{Connection to the initial value problem in semiclassical gravity} 
\label{sec:SurfaceHadamard}

We now turn to discuss the relation between the present work and Conjecture 3.7 of \cite{JKMS2023:Programmatic} on the well-posedness of the initial value problem for semiclassical gravity. We begin by providing the following refined version of Theorem \ref{th:SigmaSurfaceApprox}.

\begin{thm}\label{th:SigmaSurfaceApproxTight}
	Let $\x$ and $\x'$ be points contained on a spacelike surface $\mathcal{C}$ of the spacetime $(M,g_{ab})$ and also contained in a convex normal neighborhood $\mathcal{D}$. One can construct approximations {\bf (a)} $\tilde{U}(\x,\x')$, $\dbar \tilde{U}(\x,\x')$, $\dbar \dbar' \tilde{U}(\x,\x')$ and $\dbar^2 \tilde{U}(\x,\x')$ for the Hadamard coefficient $U(\x,\x')$ and its corresponding normal derivatives, {\bf (b)} $\tilde{V}(\x,\x')$, $\dbar \tilde{V}(\x,\x')$, $\dbar \dbar' \tilde{V}(\x,\x')$ and $\dbar^2 \tilde{V}(\x,\x')$ for the Hadamard coefficient $V(\x,\x')$ and its corresponding normal derivatives, and {\bf (c)}  $\tilde{\Sigma}$, $\dbar \tilde{\Sigma}$, $\dbar \dbar'\tilde{\Sigma}$ and $\dbar^2 \tilde{\Sigma}$ for $\Sigma(\x,\x')$, $\dbar \Sigma (\x,\x')$, $\dbar \dbar' \Sigma (\x,\x')$ and $\dbar^2 \Sigma(\x,\x')$, using the intrinsic metric on $\mathcal{C}$ and up to four ``time derivatives" of the metric off the surface, such that the surface-defined approximate $\tilde{H}^\ell$ defines correctly the expectation value of the renormalized stress-energy tensor on $\mathcal{C}$ by Eq. \eqref{TrenByApprox2}.
\end{thm}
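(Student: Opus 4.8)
The plan is to assemble the explicit constructions of Sections \ref{sec:ApproxGen}, \ref{sec:HypSigma} and \ref{sec:SigmaDerivatives} into a single package of approximations and to check two things: that this package satisfies all the order-of-approximation hypotheses of Theorem \ref{th:SigmaSurfaceApprox}, and that every coefficient occurring in it is a polynomial expression in the induced metric $h_{ab}$, the extrinsic curvature $K_{ab}$, the acceleration $u_a$, their tangential derivatives $D_{a_1}\cdots D_{a_k}$, and normal derivatives $\dbar^{j}$ of these that, via the dictionary of Appendix \ref{App:3p1LieTime} and \eqref{DtHDk}, correspond to no more than four time derivatives of $h_{ab}$ off $\mathcal{C}$. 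Once both are verified, substituting the package into \eqref{TrenByApprox2} of Theorem \ref{th:SigmaSurfaceApprox} gives the exact $\omega(T_{ab})$ on $\mathcal{C}$ and the statement follows.

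For part (a), I would take $\tilde U$ to be the order-$4$ covariant Taylor truncation \eqref{TaylorU} with the Decanini--Folacci coefficients $\mathcal{U}_0,\dots,\mathcal{U}_4$; its $3+1$ rearrangement \eqref{TildeU} exhibits it through $\sigma$, the tangential derivatives $D_a\Sigma$, the normal derivative $\dbar\Sigma$, and the tangential/normal projections of the $\mathcal{U}_n$. The required $\dbar\tilde U$, $\dbar\dbar'\tilde U$ and $\dbar^2\tilde U$ are then produced by normal differentiation of \eqref{TildeU}, which brings in at most two normal derivatives of the projections of $\mathcal{U}_2$ and one of those of $\mathcal{U}_3$; as the explicit formulas in Section \ref{sec:ApproxGen} show, the highest normal derivative of the extrinsic curvature appearing there is $\dbar^{3}K_{ab}$, and every $\,^{(3)}\mathsf{R}$-type term together with its normal derivatives reduces, by the Gauss--Codazzi and Bianchi identities \eqref{BianchiDbarR}, to a polynomial in $h_{ab},K_{ab},u_a$ and their tangential derivatives. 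Part (b) is strictly parallel: $\tilde V$ of \eqref{TildeVPwr}, built from $\mathcal{V}_0,\mathcal{V}_1,\mathcal{V}_2$ and $\tilde V_1$, is an order-$2$ approximation of $V$, its normal derivatives of the orders demanded by Theorem \ref{th:SigmaSurfaceApprox} are again produced by normal differentiation of the $3+1$ expansions, and the $3+1$ projections of $\mathcal{V}_2$ and $\tilde V_1$ already display $\dbar^{3}K_{ab}$ as the worst term.

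For part (c), I would take $\tilde\Sigma$ to be the order-$6$ surface Taylor series \eqref{HypApproxS2} with coefficients $\overset{3}{S},\dots,\overset{6}{S}$ as computed in Section \ref{sec:HypSigma} (the worst term being $\dbar K_{ab}$), and $\dbar\tilde\Sigma$, $\dbar^2\tilde\Sigma$, $\dbar'\tilde\Sigma$, $\dbar\dbar'\tilde\Sigma$ to be the surface expansions \eqref{ExpansiondbarSigma}, \eqref{ExpansionddbarSigma}, \eqref{ApproxDbarPrimeSigma}, \eqref{ApproxDbarDbarPrimeSigma} whose coefficients are fixed in Section \ref{sec:SigmaDerivatives} from the coincidence limits of mixed derivatives of $\Sigma$ (Appendices \ref{AppSigma}, \ref{LimitsMixedDerivativesSigma}, \ref{3p1Hybrid}); these are surface approximations of orders $5$, $4$, $5$, $4$ respectively, matching the hypotheses of Theorem \ref{th:SigmaSurfaceApprox}, and their coefficients involve at most $\dbar^{2}K_{ab}$. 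Since $\dbar^{j}K_{ab}$ encodes, through Appendix \ref{App:3p1LieTime} and \eqref{DtHDk}, the $(j+1)$-th time derivative of $h_{ab}$, the combined data of (a), (b), (c) is controlled by $h_{ab}$ and its first four time derivatives off $\mathcal{C}$, the worst case $\dbar^{3}K_{ab}$ being reached only within the $U$- and $V$-expansions.

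The main obstacle is bookkeeping rather than conceptual. One must track, through each normal differentiation of the $3+1$-projected covariant Taylor expansions of $U$ and $V$ and through the recursive solution for the surface Taylor coefficients of $\Sigma,\dbar\Sigma,\dbar^2\Sigma,\dbar'\Sigma,\dbar\dbar'\Sigma$, that no normal derivative of $K_{ab}$ beyond $\dbar^{3}K_{ab}$ is ever generated --- in particular that the substitutions for $\dbar\,^{(3)}\mathsf{R}$ and $\dbar^{2}\,^{(3)}\mathsf{R}$ coming from \eqref{BianchiDbarR} do not secretly raise the order --- and that the order counts of all the pieces are exactly the ones Theorem \ref{th:SigmaSurfaceApprox} requires, so that the corresponding $\delta$-errors vanish in the coincidence limit and \eqref{TrenByApprox2} is exact on $\mathcal{C}$.
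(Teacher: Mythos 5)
Your proposal is correct and follows essentially the same route as the paper: the paper gives no separate formal proof of Theorem \ref{th:SigmaSurfaceApproxTight}, but treats it as the summary of the explicit constructions of Sections \ref{sec:ApproxGen}--\ref{sec:SigmaDerivatives}, whose orders match the hypotheses of Theorem \ref{th:SigmaSurfaceApprox} and whose worst coefficients involve $\dbar^{3}K_{ab}$ (in the $U$- and $V$-expansions) and $\dbar^{2}K_{ab}$ (in the $\Sigma$-type expansions), i.e.\ at most four time derivatives of the induced metric via \eqref{DtHDk}. Your bookkeeping of the approximation orders and of the normal-derivative count, including the reduction of $\dbar\,^{(3)}\mathsf{R}$-terms through \eqref{BianchiDbarR}, is exactly the argument the paper relies on.
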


Theorem \ref{th:SigmaSurfaceApproxTight} is a precise statement of the ``second issue" discussed for the initial value problem of semiclassical gravity in \cite{JKMS2023:Programmatic} (on p.\ 31 of the published version).   Following Conjecture 3.7 in that paper, the point for the initial value problem of semiclassical gravity is that the ``missing" piece of initial data, i.e.\ the fourth ``time" derivative off the initial surface, ought to be obtained by imposing the semiclassical Einstein equations on the initial surface. Indeed, obtaining this piece of data is precisely the ``zeroth stage" alluded to in Conjecture 3.7 in \cite{JKMS2023:Programmatic}, and yields the ``crudest" approximation to the surface Hadamard condition, Def. 3.5 in \cite{JKMS2023:Programmatic}, which is in turn defined by all stages ($n = 0, 1, \ldots$) of that conjecture being satisfied.\footnote{One could stop at the crudest zeroth (or any later finite) stage at the cost of obtaining semiclassical gravity solutions with lower differentiability, see Footnote 5 in \cite{JKMS2023:Programmatic}. One would expect that the spacetimes in these solutions would be $C^4$ in the crudest approximation, and the states would be adiabatic states of finite order, in the sense of \cite{Junker-Schrohe}, with adiabatic order $k = 2$ in the crudest approximation.}

In terms of the developments in the present paper, Conjecture 3.7 in \cite{JKMS2023:Programmatic} can be recast in the following way:

\begin{conj} \label{IntrinsicInitialData}
	Self-consistent initial data for the fourth order \textit{time derivatives} of the metric can be obtained out of the \textit{classical} initial data $h_{ab}$, $K_{ab}$, $\dbar K_{ab}$ and $\dbar^2 K_{ab}$ on $\mathcal{C}$ (Equivalently in terms of \textit{time derivatives} of the metric, $g_{ab}|_{\mathcal{C}}$, $\dot g_{ab}|_{\mathcal{C}}$, $g^{(2)}_{ab}|_{\mathcal{C}}$, $g^{(3)}_{ab}|_{\mathcal{C}}$), by constructing the approximations for $\Sigma$, $\dbar \Sigma$, $\dbar^2 \Sigma$, $U$ and $V$, solving the fourth order terms (ie, $\nabla_{n}^{3} K_{a'b'}\sim g^{(4)}_{ab}|_{\mathcal{C}}$) using the semiclassical Einstein equation \eqref{SemiclassEinsteinEq}. Furthermore, higher order time derivatives of the induced metric can be obtained by formal derivation of the semiclassical Einstein equation by the same procedure.
\end{conj}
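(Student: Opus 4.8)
The plan is to extract $\nabla_n^3 K_{ab}$ --- equivalently $g^{(4)}_{ab}|_{\mathcal{C}}$ --- from the semiclassical Einstein equation \eqref{SemiclassEinsteinEq} evaluated on $\mathcal{C}$. First I would perform the full $3+1$ split of the left-hand side $G_{ab}-\Lambda g_{ab}$ on $\mathcal{C}$: it decomposes into the Hamiltonian constraint (the $n^a n^b$ part, a polynomial in $h_{ab}$ and $K_{ab}$ only), the momentum constraint (the $n^a h^b{}_c$ part, again only $h_{ab}$ and $K_{ab}$), and the six dynamical components $G_{ij}$, which by the Gauss--Codazzi relations reach $\dbar K_{ij}$ but no higher normal derivative; in particular the left-hand side never contains $\dbar^3 K_{ab}$.

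On the right-hand side I would insert the explicit surface expression for $\omega(T_{ab})$ supplied by Theorem \ref{th:SigmaSurfaceApproxTight}, i.e.\ \eqref{TrenByApprox2} together with the $3+1$ projections of $\tilde U$, $\tilde V$, $\tilde\Sigma$, $\dbar\tilde\Sigma$ and $\dbar^2\tilde\Sigma$ computed in Section \ref{sec:ApproxGen}. The structural point is that $\dbar^3 K_{ab}$ enters $\omega(T_{ab})$ \emph{only} through purely local geometric pieces --- the $\Box R$ term of $[V_1]$ in \eqref{v1c}, the $\Box R\,g_{ab}$, $\Box R_{ab}$ and $\nabla_a\nabla_b R$ terms of $\Theta_{ab}$ in \eqref{GeneralTheta}, and, via \eqref{ddbarTildeW}, the same $V_1$ contribution inside $\overset{2}{\tilde w}_{ab}$ --- and that it enters \emph{affine-linearly}, being the top normal derivative of the metric in a single curvature monomial in each case (cf.\ the $\dbar^3 K$ terms of $\overset{0}{\mathcal V}_{2\,a_1a_0}$, $\tilde V_1$, $\overset{3}{\mathsf{B}}_{a_2a_1a_0}$). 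Hence the dynamical components of \eqref{SemiclassEinsteinEq} on $\mathcal{C}$ collapse to
\begin{equation}
	\mathcal{M}_{ab}{}^{cd}\,\dbar^3 K_{cd} \;=\; \mathcal{F}_{ab}\bigl[h_{ef},K_{ef},\dbar K_{ef},\dbar^2 K_{ef};\,G^+_\psi,\dbar G^+_\psi,\dbar'G^+_\psi,\dbar\dbar'G^+_\psi\bigr],
\end{equation}
where $\mathcal{M}$ is read off from the numerical coefficients of \eqref{v1c} and \eqref{GeneralTheta} --- the identity on symmetric surface tensors (coefficient $\propto\alpha_2$, from $\Box R_{ab}$) plus trace projectors (coefficients involving $\xi-\tfrac15$ and $4\alpha_1+\alpha_2$, from the $\Box R\,g_{ab}$ pieces). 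For generic renormalization parameters $\mathcal{M}$ is invertible, so $\dbar^3 K_{ab}$ is determined; the mild circularity --- the approximations of Section \ref{sec:ApproxGen} themselves contain $\dbar^3 K$ --- is harmless precisely because that dependence is affine.

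The hard part will be, first, proving invertibility of $\mathcal{M}$ while controlling the degenerate loci in parameter space (the conformal value $\xi=\tfrac16$, the point $\xi=\tfrac15$ where the $\Box R$ coefficient of $[V_1]$ vanishes, and $\alpha_2=0$), where $\mathcal{M}$ can drop rank and the missing components would have to be recovered from the sub-leading structure of the $3+1$ expansions of Section \ref{sec:ApproxGen} or from further $\dbar$-differentiated equations; and, second, handling the fact that the system is \emph{overdetermined}, since the Hamiltonian and momentum components of \eqref{SemiclassEinsteinEq} also carry $\dbar^3 K_{ab}$ (through $\Box R$ and $n^a n^b\Box R_{ab}$-type terms), giving ten equations for six unknowns. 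The natural resolution is that the twice-contracted Bianchi identity $\nabla^a G_{ab}=0$, together with covariant conservation of the Hadamard-renormalized stress tensor $\nabla^a\omega(T_{ab})=0$ (which holds for the prescription \eqref{TRenWald}--\eqref{v1c}), forces the constraint components of \eqref{SemiclassEinsteinEq} to be propagated by the dynamical ones, so that the system is consistent --- this is exactly the ``self-consistency'' asserted in the conjecture, and I expect this propagation argument, rather than the algebra of $\mathcal{M}$, to be the genuine obstacle. Finally, the closing sentence of the conjecture follows by induction: applying $\dbar^{\,k-2}$ to \eqref{SemiclassEinsteinEq} on $\mathcal{C}$ produces at top order the same operator $\mathcal{M}$ acting on $\dbar^{\,k}K_{ab}$ with a source built from already-determined lower-order data, so every $\dbar^{\,k}K_{ab}$ with $k\ge 3$ is obtained recursively by the same inversion.
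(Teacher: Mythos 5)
The first thing to note is that the paper does not prove this statement at all: it is deliberately stated as a \emph{conjecture} (a recasting of Conjecture 3.7 of \cite{JKMS2023:Programmatic}), and the Discussion section explicitly concedes that the proposal to obtain the missing fourth-order data ``relies on (yet) unproven conjectures.'' So there is no proof in the paper against which your argument can be matched; what you have written is a strategy for settling an open problem, and it should be judged on whether it closes the gap. It does not, and you say so yourself: the invertibility of the coefficient operator $\mathcal{M}_{ab}{}^{cd}$ (including its behaviour at the degenerate parameter values $\xi=\tfrac16$, $\xi=\tfrac15$, $\alpha_2=0$ that you read off from \eqref{v1c} and \eqref{GeneralTheta}), and the consistency of the overdetermined system, are precisely the ``hard parts'' you defer. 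Deferring them leaves the conjecture exactly as open as the paper leaves it; in particular your concluding induction for higher $\dbar^{\,k}K_{ab}$ inherits all of these unproven ingredients.

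Two of your structural claims also need more care than you give them. First, the assertion that $\dbar^{3}K_{ab}$ enters $\omega(T_{ab})$ on $\mathcal{C}$ only affinely and only through $[V_1]$, $\Theta_{ab}$ and the $V_1$ term of \eqref{ddbarTildeW} is not verified anywhere: the fourth time derivative also sits inside the subtraction term $\tilde H^\ell$ through the coefficients $\overset{0}{\mathcal V}_{2}$, $\overset{3}{\mathcal V}_{2}$ and $\tilde V_1$ of Section \ref{sec:ApproxGen}, hence inside $\overset{0}{\tilde w}$, $\overset{2}{\tilde w}_{ab}$ and $\mathcal{W}_{ab}$ of \eqref{TrenByApprox2}; each appearance is linear, but establishing that the resulting total coefficient is the purely numerical operator you describe (and nonzero) is itself a computation the paper has not done and you have not done. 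Second, your resolution of the overdetermination via $\nabla^a G_{ab}=0$ together with $\nabla^a\omega(T_{ab})=0$ is a spacetime (propagation) argument, whereas at this stage one only has data on $\mathcal{C}$ and no solution off it; whether the constraint components of \eqref{SemiclassEinsteinEq} are compatible with the dynamical ones on the initial surface is part of the very well-posedness question (Conjecture 3.7 of \cite{JKMS2023:Programmatic}) that this conjecture is meant to feed, so it cannot simply be invoked. In short: your outline is a reasonable and recognisably the intended route (impose \eqref{SemiclassEinsteinEq} on $\mathcal{C}$, use Theorem \ref{th:SigmaSurfaceApproxTight} to make the right-hand side a function of surface data plus $\dbar^{3}K_{ab}$, and solve for the latter), but it is a plan with the decisive steps missing, not a proof, and the paper offers none either.
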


\section{Discussion}\label{sec:Discuss}

The initial value problem for semiclassical gravity is substantially more complicated than its classical counterpart  due to the fact  that renormalization is required for defining the expectation value of the stress-energy tensor, as one simultaneously solves for the semiclassical Einstein equations. In the Hadamard subtraction scheme -- suitable in curved spacetimes -- the renormalization procedure relies on removing the singular structure of the two-point function, which is typically characterized covariantly in terms of the {\it spacetime} geometry (and field parameters). This renders the initial-value formulation of semiclassical gravity highly non-trivial, since one only has {\it a priori} finitely many pieces of functional components  of the initial data on a spacelike surface, rather than the full spacetime geometry.

Motivated by the former, in this  paper we have studied in the context of a fixed background curved spacetime how the Hadamard property restricts to a (Cauchy) spacelike surface in terms of intrinsic geometric properties of the surface. While it is clear that the Hadamard property on an initial surface depends on the induced geometry and all possible normal derivatives of the induced metric off the surface, here we have provided a sufficiently accurate approximation to the Hadamard property that depends only on finitely many normal derivatives, such that a surface-renormalization prescription yields the exact expectation value of the stress-energy tensor in the surface (for states satisfying this approximate property).

In order to achieve this, we introduced a notion of \textit{order}-$n$ approximation for biscalars, and determined a sufficient order of approximation for the Hadamard coefficients, half the squared geodesic distance and suitable normal derivatives thereof in the initial surface in terms of the intrinsic surface geometry and only {\it finitely many normal derivatives} of the induced metric off the surface. This analysis relies crucially on obtaining a surface Taylor series for the spacetime geodesic distance between two points on the spacelike surface in terms of the  geodesic distance,   within the   surface,  between the said points.

The sufficient approximations computed in this work were expansions of order $4$ and $2$, respectively, for the regular biscalars $U(\x,\x')$ and $V(\x,\x')$ appearing on the Hadamard singular structure, and surface Taylor expansions for half the squared geodesic distance $\Sigma(\x,\x')$ and its normal derivatives,
$\dbar \Sigma (\x,\x')$, $\dbar' \Sigma (\x,\x')$, $\dbar \dbar' \Sigma (\x,\x')$ and $\dbar^2 \Sigma (\x,\x')$, of orders $6$, $5$, $5$, $4$ and $4$, respectively. The higher order terms of these expansions are fourth order normal derivatives of the metric. This can be troublesome given that semiclassical gravity, seen as an initial value problem, would seem to require initial data of the same derivative orders than the semiclassical Einstein equation, which is a fourth order system. In order to deal with this potential issue, we made contact with the \textit{surface Hadamard} condition given in \cite{JKMS2023:Programmatic}, which implies not only a property for data of the quantum field in the initial surface, but also the requirement that, if we are given data for a fourth order problem (ie., up to third order time derivatives), the \textit{missing} data for the metric normal derivatives can be obtained implicitly from the semiclassical Einstein equation. This sets a preliminary but necessary foundation for the study of the initial value problem of semiclassical gravity. Nevertheless, the initial value problem for semiclassical gravity remains unsolved, and it must be noted that our proposals to sort some of the immediate difficulties rely on (yet) unproven conjectures at this point.

In \cite{JKMS2023:Programmatic}, we have discussed other topics in semiclassical gravity, such as the possibility to implement a \textit{perturbative} expansion in $\hbar$ satisfying a classical correspondence limit criteria that might enable to compute solutions and corrections, order by order, out of \textit{classical} initial data $h_{ab}$ and $K_{ab}$. There we also incorporate a scheme to model quantum collapse in semiclassical gravity, as a possible path to address some conceptual issues of the theory, and even established some continuity conditions at order $\hbar^0$ for such setting, i.e., if quantum matter has \textit{jumps}, what jumps in the gravitational sector?

Further research is required in order to establish the general conditions for which semiclassical gravity represent a valid approach to describe the interaction between quantum matter and spacetime geometry.

\section{Acknowledgements}
B.A.J-A. was supported by a CONAHCYT (formerly CONACYT) Postdoctoral Fellowship held at the Instituto de Ciencias Nucleares of UNAM, Mexico.  The author is currently supported by \linebreak EPSRC, through the EPSRC Open Fellowship EP/Y014510/1, held at the Department of Mathematics of the University of York, UK. B.A.J.A, T.M. and D.S. acknowledge the support of CONAHCYT project 140630. B.A.J.-A. and D.S. acknowledge partial financial support from  PAPIIT-UNAM, Mexico (Grant No. IG100124 ). D.S. also thanks the Foundational Questions Institute (Grant No. FQXi-MGB-1928) and the Fetzer Franklin Fund, a donor advised by the Silicon Valley Community Foundation.

\appendix

\section{Half the Squared Geodesic Distance}\label{AppSigma}
In this section we reproduce a brief version of the procedure by DeWitt \& Brehme in \cite{DeWittBrehme}. Readers familiar with these techniques may jump straight to our results \eqref{LimD5Sigma} and \eqref{LimD6Sigma}.

Consider a globally hyperbolic spacetime $(M,g_{ab})$, and let $\mathcal{D}\subset M$ be an open convex normal neighborhood, i.e., for every pair of points $\x,\x' \in \mathcal{D}$, there is a unique geodesic from $\x$ to $\x'$ contained in $\mathcal{D}$. We denote by $s(\x,\x')$ the geodesic distance for points $\x,\x'\in \mathcal{D}$. Fixing $\x'=\x_0\in\mathcal{D}$, a field $s_{\x_0}(\x) \equiv s(\x,\x_0)$ can be defined on $\mathcal{D}$. Except for the lightcone of $\x_0$, the vector field $g^{ab}\nabla_b s_{\x_0}$ is normalized, given that $s_{\x_0}$ is the affine parameter for every geodesic emanating from $\x_0$, i.e.,
\begin{equation}
	g^{ab} \nabla_a s_{\x_0} \nabla_b s_{\x_0} = \pm 1,\label{UnitS}
\end{equation}
where the positive sign is for spacelike geodesics, and the negative sign is for timelike geodesics. We define half the squared geodesic distance by
\begin{equation}
	\Sigma (\x,\x') \equiv \pm \frac{1}{2} s^2(\x,\x'), \label{SigmaDef}
\end{equation}
where the plus sign is used for spacelike related $\x,\x'$, and the minus sign is used for timelike related $\x,\x'$. $\Sigma$ vanishes when the geodesic distance vanishes, in particular
\begin{equation}
	\lim_{\x'\to\x} \Sigma(\x,\x')=0. \label{coincidenceSigma}
\end{equation}
By fixing $\x'=\x_0$, a scalar field $\Sigma_{\x_0}(\x)\equiv \Sigma (\x,\x_0)$ in $\mathcal{D}$ is defined. In the following we will omit the explicit reference to $\x$ and $\x'=\x_0$. Derivation on $\x$ yields
\begin{equation}
	\nabla_a \Sigma = \pm s \nabla_a s. \label{dSigma}
\end{equation}
It then follows 
\begin{equation}
	\lim\limits_{\x \rightarrow \x_0} \nabla_a \Sigma = 0.\label{LimDSigma}
\end{equation}
From \eqref{dSigma}, \eqref{SigmaDef} and \eqref{UnitS} one can readily obtain the following differential equation for $\Sigma$,
\begin{equation}
	g^{ab} (\nabla_a \Sigma) (\nabla_b \Sigma) = 2 \Sigma, \label{pdeSigmaApp}
\end{equation}
which we recall as \eqref{pdeSigma}.
This equation allows to extract information on the derivatives of $\Sigma$. Using the abbreviated notation\footnote{Note that this notation is similar to the semicolon notation for derivatives, although in an inverse order: $$\nabla_a \nabla_b \phi=\phi_{ab}=\phi_{;ba}$$} $\Sigma_a = \nabla_a \Sigma$, $\Sigma_{ab} = \nabla_a \nabla_b \Sigma$, etc, one obtains
\begin{align}
	\Sigma_a &= \Sigma^b \Sigma_{ab},\\
	\Sigma_{ba} &= \Sigma_b\,^c \, \Sigma_{ac} + \Sigma^c \, \Sigma_{bac} \label{DDSigma},
\end{align}
so that Eq. \eqref{DDSigma} in the coincidence limit reads
\begin{equation}
	\lim\limits_{\x\rightarrow\x_0} \Sigma_{ba} = \lim\limits_{\x\rightarrow\x_0} \Sigma_b\,^c \, \Sigma_{ac},
\end{equation}
where it is inferred
\begin{equation}
	\lim\limits_{\x\rightarrow\x_0} \Sigma_{ab} = g_{ab}. \label{LimDDSigma}
\end{equation}
In the following we compute the coincidence limits of up to six derivatives of $\Sigma$ with help of Eq. \eqref{pdeSigmaApp} and the limits \eqref{coincidenceSigma}, \eqref{LimDSigma} and \eqref{LimDDSigma}, following DeWitt \& Brehme method devised in \cite{DeWittBrehme}. From now on, we will use square brackets $[\quad ]$ to indicate the coincidence limit $\x\rightarrow\x_0$. 

The coincidence limit of the covariant derivative of \eqref{DDSigma} yield
\begin{align}
	[\Sigma_{cba}] &= [\Sigma_{cba}] + [\Sigma_{cab}] + [\Sigma_{bac}]. \label{Master3}
\end{align}
Substituting in \eqref{Master3} the no-torsion condition $[\Sigma_{cba}]=[\Sigma_{cab}]$ and the limit  $[\Sigma_{bca}] = [\Sigma_{cba}]$ of the identity (cf. \eqref{RiemannDef})
\begin{align}
	\Sigma_{bca} &= R_{bca}\,^d \Sigma_d + \Sigma_{cba}, \label{RiemSig3}
\end{align}
yield
\begin{equation}
	[\Sigma_{cba}] = 0. \label{LimDDDSigma}
\end{equation}

In general, computing $m$ derivatives of \eqref{pdeSigmaApp}, and taking the coincidence limit yields an equation containing a combination of terms $[\Sigma_{a_m \dots a_0}]$ with switched indexes. Let us call this the \textit{master equation of order $m$}. In order to \textit{exchange} those indices, it is necessary to use the general rule in terms of the Riemann tensor,
\begin{equation}
	\nabla_{a_n} \nabla_{a_{n-1}} \omega_{a_{n-2}\dots a_1} -\nabla_{a_{n-1}} \nabla_{a_{n}} \omega_{a_{n-2}\dots a_1} = \sum_{k=1}^{n-2} R_{a_{n} a_{n-1} a_k }\,^{b} \, \omega_{a_{n-2} \dots a_{k+1} b\, a_{k-1} \dots a_1},
\end{equation}
at every order $2 < n \leq m$, and take $m-n$ more derivatives, resulting in $m-2$ index exchange equations (exchange of the first two indexes is \textit{free} given there is no torsion). Then, take the coincidence limit for every index exchange equation, obtaining a set of rules for the exchange of every two adjacent indexes for the terms $[\Sigma_{a_m \dots a_0}]$. These index exchange rules are then used to solve the limit $[\Sigma_{a_m \dots a_0}]$ from the master equation of order $m$. This summarize the method devised by DeWitt \& Brehme in \cite{DeWittBrehme}.
They have already computed this limits up to the fourth derivative, obtaining
\begin{equation}
	[\Sigma_{dbca}] = \frac{2}{3} R_{d (c a) b}. \label{limSigma4}
\end{equation}

Following their procedure, we get the following limits,
\begin{align}
	[\Sigma_{edcba}] &= \frac{1}{2}\left(\nabla_e R_{bcda} +\nabla_d R_{beca} + \nabla_c R_{bdea}\right) \nonumber\\
	&= \frac{3}{2} \nabla_{(e}R_{ |b| d c) a} \label{LimD5Sigma}
\end{align}

\begin{align}
	[\Sigma_{a_5 a_4 a_3 a_2 a_1 a_0}] &= -\frac{12}{5} \nabla_{(a_5} \nabla_{a_4} R_{a_3 | a_1 |a_2) a_0} + \frac{1}{45} \mathfrak{F}_{a_5 a_4 a_3 a_2 a_1 a_0}, \label{LimD6Sigma}
\end{align}
where
\begin{align}
	\mathfrak{F}_{a_5 a_4 a_3 a_2 a_1 a_0} =& 20 \Big( 2 R^{b}\,_{(a_1 a_0) (a_5} R_{a_4) (a_3 a_2) b} + R^{b}\,_{(a_1 a_0) a_3} R_{a_5 (a_2 a_4) b} + R^{b}\,_{(a_1 a_0) a_2} R_{a_5 (a_3 a_4) b} \Big) \nonumber  \\
	&+ \Bigg( 
	R_{a_5 a_4 a_0 }\,^b \Big(16 R_{b a_1 a_3 a_2 } + 17 R_{b a_3 a_2 a_1 } \Big) + R_{a_4 a_0 a_5 }\,^b \Big(17 R_{b a_1 a_3 a_2 } +  4 R_{b a_3 a_2 a_1 } \Big) \nonumber \\
	&\quad + R_{a_5 a_3 a_0 }\,^b \Big(16 R_{b a_1 a_4 a_2 } + 17 R_{b a_4 a_2 a_1 } \Big) + R_{a_3 a_0 a_5 }\,^b \Big(17 R_{b a_1 a_4 a_2 } +  4 R_{b a_4 a_2 a_1 } \Big) \nonumber \\
	&\quad + R_{a_5 a_2 a_0 }\,^b \Big(16 R_{b a_1 a_4 a_3 } + 17 R_{b a_4 a_3 a_1 } \Big) + R_{a_2 a_0 a_5 }\,^b \Big(17 R_{b a_1 a_4 a_3 } +  4 R_{b a_4 a_3 a_1 } \Big) \nonumber \\
	&\quad + R_{a_3 a_0 a_4 }\,^b \Big(17 R_{b a_1 a_5 a_2 } +  4 R_{b a_5 a_2 a_1 } \Big) + R_{a_4 a_3 a_0 }\,^b \Big(16 R_{b a_1 a_5 a_2 } + 17 R_{b a_5 a_2 a_1 } \Big) \nonumber \\
	&\quad + R_{a_2 a_0 a_4 }\,^b \Big(17 R_{b a_1 a_5 a_3 } +  4 R_{b a_5 a_3 a_1 } \Big) + R_{a_4 a_2 a_0 }\,^b \Big(16 R_{b a_1 a_5 a_3 } + 17 R_{b a_5 a_3 a_1 } \Big) \nonumber \\
	&\quad + R_{a_2 a_0 a_3 }\,^b \Big(17 R_{b a_1 a_5 a_4 } +  4 R_{b a_5 a_4 a_1 } \Big) + R_{a_3 a_2 a_0 }\,^b \Big(16 R_{b a_1 a_5 a_4 } + 17 R_{b a_5 a_4 a_1 } \Big)
	\Bigg).
\end{align}

\subsection{Limits of mixed derivatives} \label{LimitsMixedDerivativesSigma}
Since $\Sigma(\x,\x')$ is a two-point function (defined within a convex normal neighborhood), it is possible to compute derivatives on either points, which yield \textit{hybrid} tensors defined by sums of products of tensors defined at different points, like, for example, $\nabla_{a} \nabla_{c'} \Sigma(\x,\x') = \Sigma_{a c'}(\x,\x')$. We shall  compute the coincidence limits of this hybrid derivatives using  a procedure  analog to  that  employed when analysing the  derivatives of $\Sigma$ at a  single point.  Before   that one   must  establish  the  order of coincidence  limits that are required.  As $\Sigma(\x,\x')$ is symmetric under the  exchange of $\x$ and $\x'$,  we have  at hand the coincidence limits of the first six derivatives of $\Sigma$ at one point, regardless of whether it is at $\x$ or $\x'$. This allow us to set 
\begin{subequations}\label{FirstLimitHybrids}
	\begin{equation}
		[\Sigma_{c'}]=0.
	\end{equation}
Application of Synge's rule for $\nabla_a [\nabla_c \Sigma]$ yields,
\begin{equation}
	[\Sigma_{ac'}] = - g_{ac}.\label{limDDHybridSigma}
\end{equation}
\end{subequations}
We note that \textit{commutation} of derivatives at $\x$ and $\x'$ is trivial, as the respective tensor components are defined in their own tangent spaces: relative to the tangent space at $\x$, $\Sigma_{c'}$ can be regarded as a scalar, and only in the coincidence limit it becomes a vector at that space. This also means that the first two derivatives of $\Sigma_{c'}$ at $\x$ commute, i.e.
\begin{equation}
	\Sigma_{a_1 a_2 c'} = \Sigma_{a_2 a_1 c'}. \label{TorsionlessHybrid}
\end{equation}
 With this idea in mind, the relevant differential equation for $\Sigma_{c'}$ is obtained by taking a derivative of \eqref{pdeSigmaApp} at $\x'$,
\begin{equation} \label{pdeSigmaHybrid}
	\Sigma_{c'} = \Sigma_{b c'}\Sigma^{b}.
\end{equation}
While the direct computation of the limit of \eqref{pdeSigmaHybrid} and its first derivative is consistent with \eqref{FirstDerivativeLimits}, at the second derivative we get
\begin{equation}
	[\Sigma_{a_2 a_1 c'}] = [\Sigma_{a_2 a_1 c'}] + [\Sigma_{a_1 a_2 c'}].
\end{equation}
With \eqref{TorsionlessHybrid} we get
\begin{equation}
	[\Sigma_{a_2 a_1 a'_0}] = 0.
\end{equation}
Following the usual procedure of taking derivatives of \eqref{pdeSigmaHybrid}, and substituting the commutation rules (obtained from \eqref{TorsionlessHybrid}), as well as the limits for derivatives of $\Sigma$ at a single point, we get the following limits,
\begin{subequations}
\begin{align}
	[\Sigma_{a_3 a_2 a_1 a'_0}] &= -\frac{2}{3}R_{a_3 (a_2 a_1) a_0},\\
	[\Sigma_{a_4 a_3 a_2 a_1 a'_0}] &= -\frac{1}{6} \left(\nabla _{a_2}R_{a_4a_0a_3a_1}+\nabla _{a_3}R_{a_4a_0a_2a_1}+3 \nabla _{a_3}R_{a_4a_2a_1a_0}+\nabla _{a_4}R_{a_3a_0a_2a_1}+2 \nabla
	_{a_4}R_{a_3a_2a_1 a_0}\right)
\end{align}
\begin{align}
	[\Sigma_{a_5 a_4 a_3 a_2 a_1 a'_0}] = \frac{1}{90} (&2 R^{b_0}\,_{a_0a_5a_2} R_{b_0a_1a_4a_3}+11 R^{b_0}\,_{a_0a_4a_3} R_{b_0a_1a_5a_2}+11 R^{b_0}\,_{a_0a_4a_2} R_{b_0a_1a_5a_3} \nonumber\\
	&+11 R^{b_0}\,_{a_0a_3a_2} R_{b_0a_1a_5a_4}+11 R^{b_0}\,_{a_0a_5a_1} R_{b_0a_2a_4a_3}+11 R^{b_0}\,_{a_0a_4a_1}	R_{b_0a_2a_5a_3} \nonumber\\
	&+11 R^{b_0}\,_{a_0a_3a_1} R_{b_0a_2a_5a_4}-14 R^{b_0}\,_{a_2a_5a_4} R_{b_0a_3a_1a_0}-14 R^{b_0}\,_{a_1a_5a_4} R_{b_0a_3a_2a_0} \nonumber\\
	&+R^{b_0}\,_{a_0a_5a_4} \left(2 R_{b_0a_1a_3a_2}+R_{b_0a_3a_2a_1}\right)+11 R^{b_0}\,_{a_0a_2a_1} R_{b_0a_3a_5a_4} \nonumber\\
	&+22 R^{b_0}\,_{a_2a_1a_0} R_{b_0a_3a_5a_4}-14 R^{b_0}\,_{a_2a_5a_3} R_{b_0a_4a_1a_0}-14 R^{b_0}\,_{a_1a_5a_3} R_{b_0a_4a_2a_0} \nonumber\\
	&+R^{b_0}\,_{a_0a_5a_3} \left(2 R_{b_0a_1a_4a_2}+R_{b_0a_4a_2a_1}\right)-14 R^{b_0}\,_{a_1a_5a_2} R_{b_0a_4a_3a_0} \nonumber\\
	&+R^{b_0}\,_{a_0a_5a_2} R_{b_0a_4a_3a_1}+R^{b_0}\,_{a_0a_5a_1} R_{b_0a_4a_3a_2}-5 R^{b_0}\,_{a_2a_4a_3} R_{b_0a_5a_1a_0} \nonumber\\
	&-7 R^{b_0}\,_{a_4a_3a_2} R_{b_0a_5a_1a_0}-14 R^{b_0}\,_{a_1a_4a_3} R_{b_0a_5a_2a_0}-7 R^{b_0}\,_{a_4a_3a_1} R_{b_0a_5a_2a_0} \nonumber\\
	&+R^{b_0}\,_{a_0a_4a_3} R_{b_0a_5a_2a_1}-7 R^{b_0}\,_{a_4a_3a_0} R_{b_0a_5a_2a_1}-14 R^{b_0}\,_{a_1a_4a_2} R_{b_0a_5a_3a_0} \nonumber\\
	&-7 R^{b_0}\,_{a_4a_2a_1} R_{b_0a_5a_3a_0}+R^{b_0}\,_{a_0a_4a_2} R_{b_0a_5a_3a_1}-7 R^{b_0}\,_{a_4a_2a_0} R_{b_0a_5a_3a_1} \nonumber\\
	&+R^{b_0}\,_{a_0a_4a_1} R_{b_0a_5a_3a_2}-7 R^{b_0}\,_{a_4a_1a_0} R_{b_0a_5a_3a_2}-14 R^{b_0}\,_{a_1a_3a_2} R_{b_0a_5a_4a_0} \nonumber\\
	&-7 R^{b_0}\,_{a_3a_2a_1} R_{b_0a_5a_4a_0}+R^{b_0}\,_{a_0a_3a_2} R_{b_0a_5a_4a_1}-7 R^{b_0}\,_{a_3a_2a_0} R_{b_0a_5a_4a_1} \nonumber\\
	&+R^{b_0}\,_{a_0a_3a_1} R_{b_0a_5a_4a_2}-7 R^{b_0}\,_{a_3a_1a_0} R_{b_0a_5a_4a_2}+R^{b_0}\,_{a_0a_2a_1} R_{b_0a_5a_4a_3} \nonumber\\
	&-7 R^{b_0}\,_{a_2a_1a_0} R_{b_0a_5a_4a_3}-9 \nabla_{a_3}\nabla _{a_2}R_{a_5a_0a_4a_1}-9 \nabla _{a_4}\nabla _{a_2}R_{a_5a_0a_3a_1} \nonumber\\
	&-9 \nabla_{a_4}\nabla _{a_3}R_{a_5a_0a_2a_1}-36 \nabla _{a_4}\nabla _{a_3}R_{a_5a_2a_1a_0}-9 \nabla _{a_5}\nabla _{a_2}R_{a_4a_0a_3a_1} \nonumber\\
	&-9 \nabla_{a_5}\nabla _{a_3}R_{a_4a_0a_2a_1}-27 \nabla _{a_5}\nabla _{a_3}R_{a_4a_2a_1a_0}-9 \nabla _{a_5}\nabla _{a_4}R_{a_3a_0a_2a_1} \nonumber\\
	&-18 \nabla _{a_5}\nabla_{a_4}R_{a_3a_2a_1a_0} )
\end{align}
\end{subequations}
\section{3+1 formalism}\label{App:3p1}
We use the $3+1$ formulation introduced by Wald in \cite{WaldGR} and extended in \cite{Miramontes3p1}, which we briefly resume here.

Let $t:M\rightarrow \R$ a time function for the globally hyperbolic spacetime $(M,g_{ab})$, i.e., a function such that
	\begin{equation}
		g^{ab} \nabla_a t \nabla_b t < 0,
	\end{equation}
everywhere, and such that $\mathcal{C}_\tau=\{ p \in M : t(p)=\tau \}$ are Cauchy surfaces for $(M,g_{ab})$.

A foliation $\mathscr{F}$ of $(M,g_{ab})$ is defined by the set of Cauchy surfaces $\mathcal{C}_{t}$ for all $t\in \R$, so that the whole spacetime is covered by $\mathscr{F}$, identifying every $p\in M$ with a single surface $\mathcal{C}_{t(p)}$ that contains $p$. The vector field $n^a$ given by
	\begin{equation}
		n^a \equiv - \frac{g^{ab}\nabla_b t}{\sqrt{-g^{cd}(\nabla_c t)(\nabla_d t) }},
	\end{equation}
is normalized and normal to every $\mathcal{C}_t\in\mathscr{F}$. This can be directly verified given that in any tangent direction to $\mathcal{C}_t$ the derivative of $t$ vanishes. Let $t^a$ be a timelike vector field on $M$ such that
\begin{equation}
	t^a \nabla_a t =1. \label{CondTat1}
\end{equation}
Every point on each surface $\mathcal{C}_t$ can then be mapped to exactly one point at another surface $\mathcal{C}_{t'}$ by means of the one-parameter group of diffeomorphisms generated by $t^a$. This shows that all the surfaces of the foliation are diffeomorphic, and therefore $M$ is diffeomorphic to $\mathcal{C}_t\times \R$.
 
 Let $h_{ab}$ be a symmetric rank $(0,2)$ tensor on $M$ defined by
	\begin{equation}
		h_{ab} \equiv g_{ab} + n_a n_b. \label{InducedMetric}
	\end{equation}
It can be seen that $h_{ab}$ satisfies the properties of an euclidean $3-$metric on $\mathcal{C}_t$ for every $t$. The tensor $h_{ab}$ is called the \textit{induced metric} by $g_{ab}$ on $\mathcal{C}_t$. Note that for every $t$, $h_{ab}$ admits a representation as a $3-$dimensional symmetric matrix with entries $h_{ij}$ corresponding to the coordinate components $h_{\mu\nu}$, with $\mu>0$ in gaussian coordinates $x^\mu$ adapted to $\mathcal{C}_t$ such that $x^0=t$. These coordinates can be carried to every other surface $\mathcal{C}_t$ by the one-parameter group of diffeomorphisms generated by $t^a$. In the following we avoid such representation and we will not use any properties of the aforementioned coordinates.

The \textit{projectors} of vectors in the tangent and normal directions relative to $\mathcal{C}_t$ are respectively,
	\begin{align}
		h^a\,_{a'} &\equiv \delta^a\,_{a'} + n^a n_{a'},\\
		P_{\bot}^{a}\,_{a'} &\equiv -n^a n_{a'}.
	\end{align}
These objects allow to split the space of tangent vectors in $M$ in \textit{surface-tangent} and normal subspaces, which can be generalized directly to tensors of arbitrary rank. For example, the normal projection of the vector field $t^a$ 
	\begin{align}
		N\equiv -n_a t^a,
	\end{align}
is called \textit{lapse}, and the \textit{surface-}tangent projection,
\begin{align}
		\beta^a\equiv h^a\,_{a'} t^{a'}, \label{shift}
	\end{align}
is called \textit{shift vector}. Due to \eqref{CondTat1}, it follows that
	\begin{equation}
		N \equiv \frac{1}{n^a \nabla_a t} = \frac{1}{\sqrt{-g^{ab}(\nabla_a t)(\nabla_b t)}}.
	\end{equation}

The derivative operator associated with the intrinsic metric $h_{ab}$ will be denoted by $D_a$ and is given by \cite{WaldGR} (Lemma 10.2.1),
	\begin{equation}
		D_a T^{b_1 \dots b_k}\,_{c_1 \dots c_l} = h^{b_1}\,_{b'_1}\dots h^{b_1}\,_{b'_1} h_a\, ^{a'} h_{c_1}\,^{c'_1} \dots h_{c_l}\,^{c'_l}\nabla_{a'} T^{b'_1 \dots b'_k}\,_{c'_1 \dots c'_l}.
	\end{equation}
It is useful to define the \textit{surface-}tangent projection of the \textit{normal derivative} of a tensor by
	\begin{equation}
		\dbar T^{b_1 \dots b_k}\,_{c_1 \dots c_l} \equiv h^{b_1}\,_{b'_1}\dots h^{b_1}\,_{b'_1} h_{c_1}\,^{c'_1} \dots h_{c_l}\,^{c'_l}(-n^a \nabla_{a}) T^{b'_1 \dots b'_k}\,_{c'_1 \dots c'_l}.
		\label{NormalDerivative}
	\end{equation}
 We will show that the covariant derivative of any tensor field $T^{b_1 \dots b_k}\,_{c_1 \dots c_l}$ can be expressed in terms of \textit{surface-}tangent ($D_a T^{b_1 \dots b_k}\,_{c_1 \dots c_l}$), normal ($\dbar T^{b_1 \dots b_k}\,_{c_1 \dots c_l}$) and \textit{hybrid} projections of derivatives, along with surface-tangent tensors to be defined bellow. In the following we omit the prefix \textit{surface} regarding the tangent projections. 

For the normal vector itself we have
\begin{equation}
	\nabla_a n_b = K_{ab} + n_a u_b,
\end{equation}
where
	\begin{equation}
		K_{ab} \equiv D_a n_b
	\end{equation}
and 
	\begin{equation}
		u_a \equiv \dbar n_a.
	\end{equation}
The tensor $K_{ab}$ is tangent by construction and symmetric, and can be identified as the \textit{extrinsic curvature} of $\mathcal{C}_t$. In terms of the induced metric, it will be seen in the following subsection to be half the Lie derivative along $n^a$ of $h_{ab}$, i.e.,
\begin{equation}
	\Lie_n h_{ab} = 2 K_{ab}. \label{LieNH}
\end{equation}
It can also be seen that $u_a$ is related to $N$ by
	\begin{equation}
		u_a = - D_a \ln N. \label{DlnN}
	\end{equation}
From \eqref{DlnN} it follows that
	\begin{equation}
		D_a n_b = D_b n_a.
	\end{equation}

We say a tensor has been expanded in $3+1$ form when it is expressed in the form
	\begin{align}
		T^{a_1 \dots a_k}\,_{b_1\dots b_l} &= h^{a_1}\,_{a_1'} \dots h^{a_k}\,_{a_k'} h_{b_1}\,^{b_1'} \dots h_{b_l}\,^{b_l'} T^{a_1' \dots a_k'}\,_{b_1'\dots b_l'} \nonumber \\
		&\quad + h^{a_1}\,_{a_1'} \dots h^{a_{k-1}}\,_{a_{k-1}'} h_{b_1}\,^{b_1'} \dots h_{b_l}\,^{b_l'} n^{a_k} (-n_{a_k'}) T^{a_1' \dots a_k'}\,_{b_1'\dots b_l'} \nonumber\\
		&\quad + \dots \nonumber \\
		&\quad + n^{a_1} \dots n^{a_k} n_{b_1} \dots n_{b_l} (-n_{a_1'})\cdots (-n_{a_k'}) (-n^{b_1'}) \cdots (-n^{b_l'}) T^{a_1' \dots a_k'}\,_{b_1'\dots b_l'}. \label{31splitt}
	\end{align}

We now introduce a key aspect of our formulation of the $3+1$ formalism: the systematic labeling of the terms that constitute the complete $3+1$ expansion of tensors.

The $3+1$ expansion of a $(k,l)$ tensor $T^{a_1 \dots a_k}\,_{b_1\dots b_l}$ has $2^{k+l}$ terms. Each term contains products of the following types of elements: \textbf{a)} normal vectors $n^a$ and dual vectors $n_b$ adding up a total of $m$ \textit{free} normal \textit{elements} ie,
\begin{equation}
	n^{a_{q_{m}}}n^{a_{q_{m-1}}} \dots n_{b_{q_{2}}}n_{b_{q_{1}}},
\end{equation}
and \textbf{b)} a tensor with only tangent \textit{free} indexes, 
\begin{equation}
	\overset{\mathrm{m}}{T}^{a_{Q_{k+l-m}}a_{Q_{k+l-m-1}}\dots}\,_{\dots b_{Q_{2}} b_{Q_{1}}}.
\end{equation}
We will call these tensors \textit{tangent projection elements}. Therefore, every term is linearly independent of the other: there is only one complete $3+1$ expansion for every tensor in $M$ given a foliation.

We will number the indexes from right to left beginning at $0$, taking the original tensor as reference. We don't distinguish \textit{raised} or \textit{lowered} indexes in this labeling. For a given projection term of the $3+1$ expansion of a tensor we take all the numeric labels $q_{i}$ of every \textit{free} normal index so that summing all the terms of the form $2^{q_i}$ yield a unique label for each projection term. That is, if the normal dual vectors of some projection are $n_{a_{q_1}}, n_{a_{q_2}}\dots, n_{a_{q_m}}$, the numeric label we assing to the projection term is
\begin{equation}
\mathrm{m}\equiv \sum_{i=1}^{m} 2^{q_i}.
\end{equation}
This is a natural way to span all the possible projection terms based on an intuitive reasoning: the projection terms are labeled according to the binary positional representation of the indexes of their normal vectors (or dual vectors). For example, given a $(0,2)$ tensor $T_{ab} $ we have
	\begin{equation}
		T_{a_1 a_0} = \overset{0}{T}_{a_1 a_0} + n_{a_0} \overset{1}{T}_{a_1} + n_{a_1} \overset{2}{T}_{a_0} + n_{a_1} n_{a_0} \overset{3}{T},
	\end{equation}
where the tangent projection elements are
\begin{subequations} \label{ProjectionElements2}
	\begin{align}
		\overset{0}{T}_{a_1 a_0} &\equiv h_{a_1}\,^{a_1'} h_{a_0}\,^{a_0'} T_{a_1' a_0'}, & \sum_{q\in \emptyset} 2^q &= 0, \\
		\overset{1}{T}_{a_1} &\equiv h_{a_1}\,^{a_1'} (-n^{a_0'})T_{a_1' a_0'}, & \sum_{q\in \{0\}} 2^q &= 1,\\
		\overset{2}{T}_{a_0} &\equiv (-n^{a_1'}) h_{a_0}\,^{a_0'} T_{a_1' a_0'}, & \sum_{q\in \{1\}} 2^q &= 2,\\
		\overset{3}{T} &\equiv (-n^{a_1'}) (-n^{a_0'}) T_{a_1' a_0'}, & \sum_{q\in \{0,1\}} 2^q &= 3.
	\end{align}
\end{subequations}

Consider a scalar function $\phi: M\rightarrow \R$. The covariant derivative of $\phi$ expanded in $3+1$ form is
	\begin{equation}
		\nabla_{a_0} \phi = \phi_{a_0} = D_{a_0} \phi + n_{a_0} \dbar \phi. \label{DCovPhi}
	\end{equation}

If we consider a dual vector $v_a$, the $3+1$ expansion of its covariant derivative yield
\begin{align}
	\nabla_{a_1} v_{a_0} =&\, K_{a_1 a_0} \,^1 v + D_{a_1} \,^0 v_{a_0} + \left(K_{a_1}\,^b \,^0 v_b + D_{a_1} \,^1 v \right) n_{a_0} \nonumber\\
	 &+ \left(\dbar \,^0 v_{a_0} + u_{a_0} \,^1 v \right) n_{a_1} + \left(u^b \,^0 v_b \right) n_{a_1} n_{a_0}. \label{DcovV3p1}
\end{align}

By putting $v_a \equiv \nabla_a \phi$ in this expression we get the $3+1$ expansion for the second covariant derivative of $\phi$ (using the abbreviated notation for derivatives introduced in appendix \ref{AppSigma}),
\begin{align}
	\phi_{a_1 a_0} &= (K_{a_1 a_0} \dbar \phi + D_{a_1} D_{a_0} \phi) + n_{a_0} \left(D_{a_1} \dbar \phi + K_{a_1}\,^{b} D_b \phi \right) \nonumber \\
	& \quad + n_{a_1} \left(u_{a_0} \dbar \phi + \dbar D_{a_0} \phi \right) + n_{a_0} n_{a_1} \left(\dbar^2 \phi + u^b D_b \phi \right). \label{SecondDerPhi3p1}
\end{align}
Note that the tangent component of $\phi_{a_1 a_0}$ differs from the tangent derivative $D_{a_1} D_{a_0} \phi$ by a normal derivative of $\phi$ times the extrinsic curvature. This will keep happening at any arbitrary order $k$ of derivatives, which will include normal derivatives of $\phi$ up to order $k-1$.

The no-torsion condition ($\phi_{ab}=\phi_{ba}$) yields the identity
\begin{equation}
	\dbar D_a \phi - D_a \dbar \phi = K_a\,^b D_b \phi - u_a \dbar \phi. \label{NormalDerivativesExchange}
\end{equation}

Recall that for any dual vector $\omega_a$, the Riemann tensor is defined by
\begin{equation}
	R_{abc}\,^d \omega_d = \nabla_a \nabla_b \omega_c - \nabla_b \nabla_a \omega_c. \label{RiemannDef}
\end{equation}
Substituting the $3+1$ expansion of the second derivative of a dual vector $v_a$ in \eqref{RiemannDef}, and taking into account all its symmetries, we get that the nonvanishing tensor projection elements of the Riemann tensor are
\begin{subequations}
	\begin{align}
		\,^0 R_{a_3 a_2 a_1 a_0} &= \,^{(3)}R_{a_3 a_2 a_1 a_0} + K_{a_3 a_1} K_{a_2 a_0} - K_{a_2 a_1}K_{a_3 a_0}, \label{0Riem}\\
		\,^1 R_{a_3 a_2 a_1} = \,^2 R_{a_2 a_3 a_1} = \,^4 R_{a_1 a_2 a_2} = \,^8 R_{a_1 a_2 a_3} &= D_{a_2} K_{a_3 a_1} - D_{a_3} K_{a_2 a_1},\\
		\,^5 R_{a_3 a_1} = - \,^6 R_{a_3 a_1} = - \,^9 R_{a_3 a_1} = \,^{10} R_{a_3 a_1} &= u_{a_3} u_{a_1} + \dbar K_{a_3 a_1} - K_{a_3}\,^b K_{b a_1} -D_{a_3} u_{a_1}.
	\end{align}\label{RiemannComponents}
\end{subequations}
where $\,^{(3)}R_{a_3 a_2 a_1 a_0}$ is the Riemann tensor defined by the induced metric $h_{ab}$ on $\mathcal{C}_t$. We will call it the \textit{intrinsic} Riemann tensor.

Implementing \eqref{RiemannDef} for any \textit{tangent dual vector} $\omega_b$, we get
\begin{subequations}
	\begin{equation}
		\dbar D_a \omega_b - D_a \dbar \omega_b = K_a\,^{c} D_c \omega_b - u_a \dbar \omega_b + \left(K_{ab} u^c - K_{a}\,^{c} u_b + D_b K_{a}\,^{c} - D^c K_{ab}\right) \omega_c, \label{ConmutaDersNormTan1}
	\end{equation}
	and for any \textit{tangent tensor} $A_{a_1 a_0}$,
	\begin{align}
		\dbar D_{a_2} A_{a_1 a_0} - D_{a_2} \dbar A_{a_1 a_0} =& K_{a_2}\,^b D_b A_{a_1 a_0} - u_{a_2} \dbar A_{a_1 a_0} \nonumber\\
		&+ \left(K_{a_2 a_1} u^b - K_{a_2}\,^b u_{a_1} + D_{a_1}K_{a_2}\,^b-D^b K_{a_2 a_1}\right) A_{b a_0}\nonumber\\
		&+ \left(K_{a_2 a_0} u^b - K_{a_2}\,^b u_{a_0} + D_{a_0}K_{a_2}\,^b-D^b K_{a_2 a_0}\right) A_{a_1 b}, \label{ConmutaDersNormTan2}
	\end{align} 
\end{subequations}
and so on for higher order tensors. 

From \eqref{RiemannComponents} it is possible to compute the projection components for the Ricci tensor $\mathsf{R}_{ab}$ in terms of their intrinsic counterparts, the extrinsic curvature, the vector $u^a$ and their derivatives:
\begin{subequations}
	\begin{align}
		\,^0 \mathsf{R}_{a_1 a_0} &= \,^{(3)}\mathsf{R}_{a_1 a_0} + K\, K_{a_1 a_0}  + D_{a_1} u_{a_0} - u_{a_1} u_{a_0} - \dbar K_{a_1 a_0}, \\
		\,^1 \mathsf{R}_{a_1} = \,^2 \mathsf{R}_{a_1} &= D_{a_1} K - D_b K^b\,_{a_1},\label{GaussCodazzi2}\\
		\,^3 \mathsf{R} &= \dbar K + u^2 - K^{b_1 b_0} K_{b_1 b_0} - D_b u^b.
	\end{align}\label{Ricc3p1}
\end{subequations}
as well as the Ricci scalar $\mathscr{R}$,
\begin{equation}
	\mathscr{R} = \,^{(3)}\mathscr{R} + K^2 + K^{b_1 b_0} K_{b_1 b_0} + 2 \left( D_b u^b - u^2 - \dbar K \right), \label{RiccScalar}
\end{equation}
where
\begin{subequations}\begin{align}
		K &\equiv K^b\,_b,\\
		u^2 &\equiv u^b u_b.
\end{align}\end{subequations}
Equations \eqref{0Riem} and \eqref{GaussCodazzi2} are known as the Gauss-Codazzi relations.

This formalism allows to systematically express derivatives of arbitrary tensors. General formulas for covariant derivatives, products, contractions, symmetrization and anti-symmetrization of indexes can be given and programmed within a computer algebra system. However, those expressions require further definitions that are out of the scope of this exposition. An exposition of these formulas is available in \cite{Miramontes3p1}.

This $3+1$ formalism is a complete representation of $4-$dimensional tensors in terms of tangent tensor components on each of the surfaces $\mathcal{C}_t$ of a given foliation of spacetime.

In this formalism, Bianchi identity allows to solve $\dbar \,^{(3)}R_{a_3 a_2 a_1 a_0}$ in terms of $K_{a_1 a_0}$, $u_{a_0}$ and their tangent derivatives:
\begin{subequations} \label{BianchiDbarR}
	\begin{align}
		\dbar \,^{(3)}R_{a_3 a_2 a_1 a_0} =& 2 \Big( K_{b_0 [a_3} \,^{(3)}R^{b_0}\,_{a_2] a_1 a_0} + D_{a_3} D_{[a_1} K_{a_0] a_2} - D_{a_2} D_{[a_1} K_{a_0] a_3} - K_{a_1 [a_3} D_{a_2]} u_{a_0} \nonumber\\
		&\quad + K_{a_0 [a_3} D_{a_2]} u_{a_1}+ 2\big( (D_{[a_3} K_{a_2] [a_1}) u_{a_0]} + (D_{[a_1} K_{a_0] [a_3}) u_{a_2]} - u_{[a_3} K_{a_2] [a_1} u_{a_0]} \big)\Big). \label{dbarRiemann3}
	\end{align}
	Upon contraction, this also yield relations for the intrinsic Ricci tensor and scalar,
	\begin{align}
		\dbar \,^{(3)}\mathsf{R}_{a_1 a_0} =& 2 \big( K_{b_0 (a_1} \,^{(3)}\mathsf{R}^{b_0}\,_{a_0)} - D_{b_0} D_{(a_1} K_{a_0)}\,^{b_0} \big) + D_{a_1} D_{a_0}K + D^b D_b K_{a_1 a_0} \nonumber\\
		&\quad + 2 \Big( u^{b_0} \big(D_{(a_1} K_{a_0) b_0} - D_{b_0} K_{a_1 a_0} - K_{b_0 (a_1} u_{a_0)}\big) + u_{(a_1} \big(D^{b_0} K_{a_0)b_0} - D_{a_0)}K\big) \Big)   \nonumber\\
		&\quad - K D_{a_1} u_{a_0} +2 (D_{b_0} u_{(a_1}) K_{a_0)}\,^{b_0} - K_{a_1 a_0} D_{b_0} u^{b_0} + u^2 K_{a_1 a_0} + K u_{a_0} u_{a_1}. \label{dbarRicciTensor3}
	\end{align}
	\begin{align}
		\dbar \,^{(3)}\mathscr{R} = 2 \Big(& K^{b_1 b_0} \,^{(3)}\mathsf{R}_{b_1 b_0} + D^{b_0}\big(D_{b_0} K - D_{b_1} K^{b_1}\,_{b_0}\big) + 2 u^{b_0} \big(D_{b_1} K^{b_1}\,_{b_0} - D_{b_0}K\big) \nonumber\\
		&- K D_{b_0} u^{b_0} + K^{b_1 b_0} D_{b_1} u_{b_0} + K u^2 - K^{b_1 b_0} u_{b_1} u_{b_0} \Big) . \label{dbarRicciScalar3}
	\end{align}
\end{subequations}

As an example, we can compute the \textit{failure} of a geodesic on $\mathcal{C}_t$, defined by the tangent derivative operator $D_a$, to be a geodesic vector on $M$. From \eqref{DcovV3p1} we have that for a \textit{surface}-tangent vector $\xi^a$ which is tangent to an affine geodesic in $\mathcal{C}_t$ such that
\begin{equation}
	\xi^b D_b \xi^a =0,
\end{equation}
then 
\begin{equation}
	\xi^b \nabla_b \xi^a = n^a K^{b_1 b_0} \xi_{b_1} \xi_{b_0}. \label{3p1GeodesicId}
\end{equation}
This justifies the statement made in Section \ref{sec:HypSigma} that in general, geodesics intrinsically defined on $\mathcal{C}$, for points contained within a convex normal neighborhood $\mathscr{D}\subset\mathcal{C}$ will not coincide generically with the \textit{spacetime} geodesics for the same points. This also proves that they will coincide if the extrinsic curvature vanishes in $\mathscr{D}$.

\subsection{Lie derivatives and \textit{time} derivatives.} \label{App:3p1LieTime}
The Lie derivative $\Lie_v T^{b_{l-1}\dots b_0}\,_{a_{k-1}\dots a_0}$ represent the rate of change of the tensor $T^{b_{l-1}\dots b_0}\,_{a_{k-1}\dots a_0}$ along the uniparametric group of diffeomorphisms generated by the vector field $v^a$. In terms of the covariant derivative of the spacetime it can be seen \cite{WaldGR} that
\begin{align}
	\Lie_v T^{b_{l-1}\dots b_0}\,_{a_{k-1}\dots a_0} = v^c\nabla_c T^{b_{l-1}\dots b_0}\,_{a_{k-1}\dots a_0} -& \sum_{i=0}^{l-1} T^{b_{l-1}\dots c \dots b_0}\,_{a_{k-1}\dots a_0} \nabla_c v^{b_i} \nonumber\\
	&+ \sum_{j=0}^{k-1} T^{b_{l-1}\dots b_0}\,_{a_{k-1}\dots c \dots a_0} \nabla_{a_j} v^c.\label{LieCov}
\end{align}
For a scalar field $\phi$ this reduces to
\begin{equation}
	\Lie_v \phi = v^b \nabla_b \phi,
\end{equation}
which in $3+1$ form is given by the contraction of \eqref{DCovPhi} with $v$ expressed in $3+1$ form, that is,
\begin{equation}
	\Lie_v \phi =\,^0 v^{b_0} D_{b_0} \phi - \,^1 v \dbar \phi.
\end{equation}
Similarly, for a dual vector $\omega_a$ we have
\begin{equation}
	\Lie_v \omega_{a_0} = v^b \nabla_b \omega_{a_0} + \omega_{b} \nabla_{a_0} v^b,
\end{equation}
which in $3+1$ form reads
\begin{align}
	\Lie_v \omega_{a_0} = &- \overset{1}{v} \dbar \overset{0}{\omega}_{a_0} + \overset{1}{v} K_{a_0}\,^b \overset{0}{\omega}_b - \overset{1}{v}\overset{1}{\omega} u_{a_0} + \overset{0}{\omega}_{b} D_{a_0} \overset{0}{v}\,^b - \overset{1}{\omega}D_{a_0} \overset{1}{v} + \overset{0}{v}\,^b D_b \overset{0}{\omega}_{a_0} \nonumber\\
	&+n_{a_0}\left(- \overset{1}{v} \dbar \overset{1}{\omega} + \overset{0}{\omega}_b \dbar \overset{0}{v}\,^b + K^{b_1 b_0} \overset{0}{v}_{b_1} \overset{0}{\omega}_{b_0} - \overset{1}{\omega} \dbar \overset{1}{v}-\overset{1}{\omega}u_b \overset{0}{v}\,^b+ \overset{0}{v}\,^b D_b \overset{1}{\omega}\right). \label{Lie1Forma}
\end{align}
We use the vector $t^a$ defined in \eqref{CondTat1} in order to represent time derivatives by taking $t^a$ to be the generator of \textit{time flow} in this formalism. The $3+1$ expansion of $t^a$ is given in terms of the lapse function $N$ and the shift tangent vector $\beta^a$,
\begin{equation}
	t^{a_0} = \beta^{a_0} + n^{a_0} N.
\end{equation}
The time derivative of a tensor $T^{b_{l-1}\dots b_0}\,_{a_{k-1}\dots a_0}$ is then taken as the tangent projection of its Lie derivative along $t^a$, that is,
\begin{equation}
	\dot T^{b_{k-1}\dots b_k}\,_{a_{k-1}\dots a_k} \equiv h^{b_{l-1}}\,_{d_{l-1}} \dots h^{b_0}\,_{d_0} h_{a_{k-1}}\,^{c_{k-1}} \dots h_{a_0}\,^{c_0} \Lie_t T^{d_{k-1}\dots d_k}\,_{c_{k-1}\dots c_k}.
\end{equation}
Expanding this definition with aid of formula \eqref{LieCov}, we get
\begin{align}
	\dot T^{b_{l-1}\dots b_0}\,_{a_{k-1}\dots a_k} =& - N \dbar T^{b_{l-1}\dots b_0}\,_{a_{k-1}\dots a_k} + \beta^c D_c T^{b_{l-1}\dots b_0}\,_{a_{k-1}\dots a_k}\nonumber\\
	&- \sum_{i=0}^{l-1} T^{b_{l-1}\dots c \dots b_0}\,_{a_{k-1}\dots a_0} \left(N K_c\,^{b_i} + D_c \beta^{b_i}\right) \nonumber\\
	&+ \sum_{j=0}^{k-1} T^{b_{l-1}\dots b_0}\,_{a_{k-1}\dots c \dots a_0} \left(N K_{a_j}\,^c + D_{a_j} \beta^{c}\right).\label{LieTime}
\end{align}
For example, we can compute the time derivative of the induced metric,
\begin{equation}
	\dot h_{a_1 a_0} = 2 \left(N K_{a_1 a_0} + D_{(a_1} \beta_{a_0)}\right). \label{DotH}
\end{equation}
Consider the case where $\beta^a=0$ and $N=1$, then $n^a = t^a$ and \eqref{DotH} reduces to \eqref{LieNH}. This lies behind the intuitive notion that the extrinsic curvature is essentially (half) the \textit{time} derivative of the induced metric, even if such relationship is formally given by \eqref{DotH}. We note that in this expression, the operator $\dbar$ does not appear given that $\dbar h_{ab}=0$, however, for tensors other than $h_{ab}$, the order of time derivatives of tensors will coincide with the order of $\dbar$ derivatives. 

We can make the substitution 
\begin{align}
	\dbar T^{b_{l-1}\dots b_0}\,_{a_{k-1}\dots a_k} =& - \frac{1}{N}\dot T^{b_{l-1}\dots b_0}\,_{a_{k-1}\dots a_k} + \frac{1}{N}\beta^c D_c T^{b_{l-1}\dots b_0}\,_{a_{k-1}\dots a_k}\nonumber\\
	&- \sum_{i=0}^{l-1} T^{b_{l-1}\dots c \dots b_0}\,_{a_{k-1}\dots a_0} \left(K_c\,^{b_i} + \frac{1}{N} D_c \beta^{b_i}\right) \nonumber\\
	&+ \sum_{j=0}^{k-1} T^{b_{l-1}\dots b_0}\,_{a_{k-1}\dots c \dots a_0} \left(K_{a_j}\,^c + \frac{1}{N} D_{a_j} \beta^{c}\right),\label{DbarLie}
\end{align}
and for the case of $h_{ab}$, we can write first
\begin{equation}
	K_{a_1 a_0} = \frac{1}{N} \left(\frac{1}{2}\dot h_{a_1 a_0} - D_{(a_1} \beta_{a_0)}\right),
\end{equation}
and then use \eqref{DbarLie} to translate normal derivatives of $K_{ab}$ in terms of time derivatives of $h_{ab}$. That results in an expression of the form
\begin{equation}
	\dbar^n K_{a_1 a_0} = \frac{1}{2N (-N)^n} (\partial_t)^{n+1} h_{a_1 a_0} + \dots \label{DtHDk}
\end{equation}
where $(\partial_t)^n h_{a_1 a_0}$ represent the $n$-th time derivative of $h_{a_1 a_0}$. Therefore, the number of normal derivatives of the extrinsic curvature equals the number of time derivatives of the induced metric, plus one.

\subsection{3+1 expansion of \texorpdfstring{$[\Sigma_{a_k\dots a_0}]$}{Sak...a0}.}\label{App:3p1SigmaLimits}

We are interested in the explicit expression for the coincidence limits of every tangent, normal and hybrid derivatives of $\Sigma$ in terms of the $3+1$ expansion of data on any surface $\mathcal{C}_t$.

For the first derivative, we have from \eqref{LimDSigma}, 
\begin{equation}
	[\Sigma_{a_0}]= 0, \label{LimDSigmaShort}
\end{equation}
and given \eqref{DCovPhi}, we readily verify that
\begin{subequations}
	\begin{align}
		[D_a \Sigma] &= 0, \label{LimDtanSigma}\\
		[\dbar \Sigma]&= 0. \label{LimDNSigma}
	\end{align}
\end{subequations}

For the second order derivatives we have from \eqref{SecondDerPhi3p1}, \eqref{LimDtanSigma}, \eqref{LimDNSigma}, \eqref{LimDDSigma} and \eqref{InducedMetric},
\begin{align}
	[\Sigma_{a_1 a_0}] = h_{a_1 a_0} + n_{a_0} n_{a_1} (-1),
\end{align}
	which implies 
\begin{subequations}
\begin{align}
		[D_a D_b \Sigma] &= h_{ab},\\
		[D_{a} \dbar \Sigma] &= [\dbar D_{a} \Sigma] = 0, \label{DdbarSigma}\\
		[\dbar^2 \Sigma] &= -1. \label{ddbarSigma}
\end{align}
\end{subequations}
	
In general, the procedure for computing a derivative order $k > 2$ is the following: 
\begin{enumerate}
	\item Compute the $3+1$ expansion for the covariant derivative of $\Sigma_{a_{k-1}\dots a_0}$.
	\item Take the limit of said expansion and replace the previously found limits of all the derivatives of order $k-1$, i.e., $[D_{a_{k-1}} \dots D_{a_{0}} \Sigma]$, $[D_{a_{k-1}} \dots D_{a_{1}} \dbar \Sigma]$, $[D_{a_{k-1}} \dots D_{a_{2}} \dbar D_{a_0}\Sigma]$, $\dots$, $[\dbar^{k-1} \Sigma]$.
	\item Identify every projection element of the expansion of the derivative with the corresponding projection element of the $3+1$ expansion of the limit $[\Sigma_{a_k\dots a_0}]$ computed in Appendix \ref{AppSigma}.
	\item Solve for the limits $[D_{a_{k}} \dots D_{a_{0}} \Sigma]$, $[D_{a_{k}} \dots D_{a_{1}} \dbar \Sigma]$, $\dots$, $[\dbar^{k} \Sigma]$ in the equation for the tangent projection component $0, 1, \dots, 2^k-1$, respectively.
\end{enumerate}
We will denote this algorithm as $\star$.

\begin{subequations}
For $k=3$, the relevant coincidence limit is trivially $[\Sigma_{abc}]=0$, and following $\star$ we get
	\begin{align}
		[D_{a_2} D_{a_1} D_{a_0} \Sigma] &=0, \label{LimD3Sigma}\\
		[D_{a_2} D_{a_1} \dbar \Sigma] &= -K_{a_2 a_1}, \label{LimD2dbarSigma}\\
		[D_{a_2} \dbar D_{a_0}] &= 0,\\
		[D_{a_2} \dbar^2 \Sigma] &= - u_{a_2}, \label{LimDddbarSigma}\\
		[\dbar D_{a_1} D_{a_0} \Sigma] &= K_{a_1 a_0},\\
		[\dbar D_{a_1} \dbar \Sigma] &= 0, \\
		[\dbar^2 D_{a_0} \Sigma] &= u_{a_0},\\
		[\dbar^3\Sigma] &= 0.
	\end{align}
\end{subequations}
For $k=4$ we have a nontrivial coincidence limit, a combination of Riemann tensors with various index orderings, cf. Eq. \eqref{limSigma4}. The Riemann tensor has already been completely expressed in $3+1$ form in the previous section and in \cite{Miramontes3p1}, which also describes the techniques required for dealing with the exchange of indexes. Following the algorithm $\star$ we get
	\begin{subequations}
		\begin{align}
			[D_{a_3} D_{a_2} D_{a_1} D_{a_0} \Sigma] &= \frac{1}{3} \left(\,^{(3)} R_{a_3 a_1 a_0 a_2} + \,^{(3)}R_{a_3 a_0 a_1 a_2} + K_{a_3 a_0} K_{a_2 a_1} + K_{a_3 a_1} K_{a_2 a_0} + K_{a_3 a_2} K_{a_1 a_0} \right) \label{LimD4Sigma} \\
			[D_{a_3} D_{a_2} D_{a_1} \dbar \Sigma] &= -\frac{2}{3} \left(D_{a_1} K_{a_3 a_2} + D_{a_2} K_{a_3 a_1} + D_{a_3} K_{a_2 a_1}\right) \label{LimD3dbarSigma}\\
			[D_{a_3} D_{a_2} \dbar D_{a_0} \Sigma] &= \frac{1}{3} \left(3 K_{a_3 a_2} u_{a_0} - 2 D_{a_0} K_{a_3 a_2} + D_{a_2} K_{a_3 a_0} + D_{a_3} K_{a_2 a_0} \right)\\
			[D_{a_3} D_{a_2} \dbar^2 \Sigma] &= \frac{1}{3}\left(2 K_{a_3}\,^{b} K_{b a_2} - 2 u_{a_3} u_{a_2} - 2 \dbar K_{a_3 a_2} - 4 D_{a_3} u_{a_2}\right) \label{LimDDddbarSigma}
			\end{align}
			\begin{align}
			[D_{a_3} \dbar D_{a_1} D_{a_0} \Sigma] &= \frac{1}{3} \left(3 u_{a_3} K_{a_1 a_0} + D_{a_0} K_{a_3 a_1} + D_{a_1} K_{a_3 a_0} + D_{a_3} K_{a_1 a_0}\right)\\
			[D_{a_3} \dbar D_{a_1} \dbar \Sigma] &= \frac{1}{3} \left(u_{a_3} u_{a_1} -K_{a_3}\,^b K_{b a_1} -2\dbar K_{a_3 a_1} - D_{a_3} u_{a_1} \right)\\
			[D_{a_3} \dbar^2 D_{a_0} \Sigma] &= \frac{1}{3} \left(4 u_{a_3} u_{a_0} -K_{a_3}\,^b K_{b a_0} + \dbar K_{a_3 a_0} + 2 D_{a_3} u_{a_0} \right)\\
			[D_{a_3} \dbar^3 \Sigma] &= -\dbar u_{a_3}
		\end{align}
		\begin{align}
			[\dbar D_{a_2} D_{a_1} D_{a_0} \Sigma] &= \frac{1}{3} \left(D_{a_2} K_{a_1 a_0} + D_{a_1} K_{a_2 a_0} + D_{a_0} K_{a_2 a_1}\right)\\
			[\dbar D_{a_2} D_{a_1} \dbar \Sigma] &= \frac{1}{3}\left(u_{a_2} u_{a_1} - 4 K_{a_2}\,^{b} K_{b a_1} -2 \dbar K_{a_2 a_1} - D_{a_1} u_{a_2} \right)\\
			[\dbar D_{a_2} \dbar D_{a_0}\Sigma]&= \frac{1}{3} \left(u_{a_2} u_{a_0} - K_{a_2}\,^b K_{b a_0} + 2 D_{a_2} u_{a_0} \right) \\
			[\dbar D_{a_2} \dbar^2 \Sigma] &= -K_{a_2}\,^b u_b - \dbar u_{a_2}
			\end{align}
			\begin{align}
			[\dbar^2 D_{a_1} D_{a_0} \Sigma] &= \frac{1}{3}\left(2 K_{a_1}\,^b K_{b a_0} - 2 u_{a_1} u_{a_0} + 4 \dbar K_{a_1 a_0} + 2 D_{a_1} u_{a_0} \right)\\
			[\dbar^2 D_{a_1} \dbar \Sigma] &= -K_{a_1}\,^b u_b\\
			[\dbar^3 D_{a_0} \Sigma] &= 2 \dbar u_{a_0}\\
			[\dbar^4 \Sigma] &= -u^b u_b
		\end{align}
	\end{subequations}
We recall that $\,^{(3)}R_{a_3 a_2 a_1}\,^{a_0}$ is the Riemann tensor defined by the induced metric $h_{ab}$ on the surface.

The order $k=5$ involves the coincidence limit of a combination of derivatives of the Riemann tensor, Eq. \eqref{LimD5Sigma}, where we obtain $2^5=32$ limits of derivatives. It serves no practical purpose to write all these limits down here, so we only reproduce the limit of derivatives that will be used later,
\begin{align}
	[D_{a_4} D_{a_3} D_{a_2} D_{a_1} D_{a_0} \Sigma] &= \frac{1}{2} \left(D_{a_4} \,^{(3)}R_{a_1 a_2 a_3 a_0} +D_{a_3} \,^{(3)}R_{a_1 a_4 a_2 a_0} + D_{a_2} \,^{(3)}R_{a_1 a_3 a_4 a_0} \right) \nonumber\\
	&\quad + 5 K_{(a_4 a_3 } D_{a_2 } K_{a_1 a_0)}, \label{Lim5}\\
	[D_{a_4} D_{a_3} D_{a_2} D_{a_1} \dbar \Sigma] &= K_{(a_4 a_3} \dbar K_{a_2 a_1)} - 3 D_{(a_4} D_{a_3} K_{a_2 a_1)} - K_{(a_4 a_3} D_{a_2} u_{a_1)}\nonumber\\
	&\quad - 2 K_{b(a_4} K^{b}\,_{a_3} K_{a_2 a_1)} + K_{(a_4 a_3} u_{a_2} u_{a_1)} \nonumber\\
	&\quad +\frac{1}{6}\Big(- K_{a_4}\,^{b} [ 2 \,^{(3)}R_{a_3 b a_2 a_1} + \,^{(3)}R_{a_3 a_2 a_1 b} ] \nonumber\\
	&\qquad + K_{a_3}\,^{b} [ 5 \,^{(3)}R_{a_4 b a_2 a_1} + 4 \,^{(3)}R_{a_4 a_2 a_1 b} ] \nonumber\\
	&\qquad + K_{a_2}\,^{b} [ \,^{(3)}R_{a_4 b a_3 a_1} - 4 \,^{(3)}R_{a_4 a_3 a_1 b} ] \nonumber\\
	&\qquad - 2 K_{a_1}\,^{b} [ \,^{(3)}R_{a_4 b a_3 a_2} + 2 \,^{(3)}R_{a_4 a_3 a_2 b} ] \Big),\label{D4dbarSigma}
\end{align}
\begin{align}
	[D_{a_4} D_{a_3} D_{a_2} \dbar^2 \Sigma] &= -\frac{3}{2} D_{(a_4} \dbar K_{a_3 a_2)}-3 u_{(a_4} D_{a_3} u_{a_2)}-\frac{3}{2}D_{a_4} D_{a_3} u_{a_2} \nonumber\\
	&\quad + \frac{1}{6}\left(\,^{(3)}R_{a_3 a_2 a_4}\,^b + 2 \,^{(3)}R_{a_4 a_3 a_2}\,^b \right) u_b + K_{b(a_4} D^b K_{a_3 a_2)}\nonumber\\
	&\quad + 2 K_{(a_4}\,^{b} D_{a_3} K_{a_2) b} - K_{(a_4 a_3} K_{a_2)}\,^b u_b. \label{LimD3dbar2Sigma}
\end{align}

The coincidence limit for the order $6$ covariant derivative of $\Sigma$ is given by \eqref{LimD6Sigma}. The all-tangent derivative limit is given by:
\begin{align}
	[D_{a_5} D_{a_4} D_{a_3} D_{a_2} D_{a_1} D_{a_0} \Sigma] &= \frac{1}{15} \Big(11 K_{a_2 a_3} K_{a_4 a_5} \dbar K_{a_0 a_1} -K_{a_3 a_4} K_{a_1 a_5} \dbar K_{a_0 a_2} -K_{a_1 a_4} K_{a_3 a_5} \dbar K_{a_0 a_2} \nonumber \\
	&\, -K_{a_1 a_3} K_{a_4 a_5} \dbar K_{a_0 a_2} -K_{a_3 a_4} K_{a_0 a_5} \dbar K_{a_1 a_2} -K_{a_0 a_4} K_{a_3 a_5} \dbar K_{a_1 a_2} \nonumber \\
	&\,-K_{a_0 a_3}	K_{a_4 a_5} \dbar K_{a_1 a_2} -K_{a_1 a_2} K_{a_4 a_5} \dbar K_{a_0 a_3} -K_{a_0 a_2} K_{a_4 a_5} \dbar K_{a_1 a_3} \nonumber \\
	&\,-K_{a_1 a_4} K_{a_0 a_5} \dbar K_{a_2 a_3} -K_{a_0 a_4} K_{a_1 a_5} \dbar K_{a_2 a_3} -K_{a_0 a_1} K_{a_4 a_5} \dbar K_{a_2 a_3} \nonumber \\
	&\,-7 K_{a_2 a_3} K_{a_1 a_5} \dbar K_{a_0 a_4} -7 K_{a_1 a_2} K_{a_3 a_5} \dbar K_{a_0 a_4} -7 K_{a_2 a_3} K_{a_0 a_5} \dbar K_{a_1 a_4} \nonumber \\
	&\,-7 K_{a_0 a_2} K_{a_3 a_5} \dbar K_{a_1 a_4} -K_{a_1 a_3} K_{a_0 a_5} \dbar K_{a_2 a_4} -K_{a_0 a_3} K_{a_1 a_5} \dbar K_{a_2 a_4} \nonumber \\
	&\,-K_{a_0 a_1} K_{a_3 a_5} \dbar K_{a_2 a_4} -K_{a_1 a_2} K_{a_0 a_5} \dbar K_{a_3 a_4} -K_{a_0 a_2} K_{a_1 a_5} \dbar K_{a_3 a_4} \nonumber \\
	&\,-7 K_{a_2 a_3} K_{a_1 a_4} \dbar K_{a_0 a_5} -7 K_{a_1 a_2} K_{a_3 a_4} \dbar K_{a_0 a_5} -7 K_{a_2 a_3} K_{a_0 a_4} \dbar K_{a_1 a_5} \nonumber \\
	&\,-7 K_{a_0 a_2} K_{a_3 a_4} \dbar K_{a_1 a_5} -K_{a_1 a_3} K_{a_0 a_4} \dbar K_{a_2 a_5} -K_{a_0 a_3} K_{a_1 a_4} \dbar K_{a_2 a_5} \nonumber \\
	&\,-K_{a_0 a_1} K_{a_3 a_4} \dbar K_{a_2 a_5} -K_{a_1 a_2} K_{a_0 a_4} \dbar K_{a_3 a_5} -K_{a_0 a_2} K_{a_1 a_4} \dbar K_{a_3 a_5} \nonumber \\
	&\,+11 K_{a_1 a_2} K_{a_0 a_3} \dbar K_{a_4 a_5} +11 K_{a_0 a_2} K_{a_1 a_3} \dbar K_{a_4 a_5} \nonumber \\
	&\,+K_{a_2 a_5} \big[11 K_{a_3 a_4} \dbar K_{a_0 a_1} -K_{a_1 a_4} \dbar K_{a_0 a_3} -K_{a_0 a_4} \dbar K_{a_1 a_3} \nonumber \\
	&\qquad-7 K_{a_1 a_3} \dbar K_{a_0 a_4} -7 K_{a_0 a_3} \dbar K_{a_1 a_4} -K_{a_0 a_1} \dbar K_{a_3 a_4} \big]\nonumber \\
	&\,+K_{a_2 a_4} \big[11 K_{a_3 a_5} \dbar K_{a_0 a_1} -K_{a_1 a_5} \dbar K_{a_0 a_3} -K_{a_0 a_5} \dbar K_{a_1 a_3} \nonumber \\
	&\qquad-7 K_{a_1 a_3} \dbar K_{a_0 a_5} -7 K_{a_0 a_3} \dbar K_{a_1 a_5} -K_{a_0 a_1} \dbar K_{a_3 a_5} \big]\nonumber \\
	&\,+11 K_{a_0 a_1} K_{a_2 a_3} \dbar K_{a_4 a_5} \Big)+ \,^0 F_{a_5 a_4 a_3 a_2 a_1 a_0} \label{MonsterLim6}
\end{align}
where $\,^0 F_{a_5 a_4 a_3 a_2 a_1 a_0}$ is a tangent tensor which does not involve normal derivatives of the extrinsic curvature.

At least two other projections are of explicit interest,

\begin{align}
	[D_{a_5} D_{a_4} D_{a_3} D_{a_2} D_{a_1} \dbar \Sigma] &= \frac{1}{5} (\dbar K_{a_5 a_1}) D_{a_2} K_{a_4 a_3} - \frac{16}{45} (\dbar K_{a_4 a_1}) D_{a_2} K_{a_5 a_3} + (\dbar K_{a_4 a_2}) D_{a_3} K_{a_5 a_1} \nonumber \\
	&\,+ \frac{19}{15} (\dbar K_{a_5 a_4}) D_{a_3} K_{a_2 a_1} - \frac{2}{5} (\dbar K_{a_5 a_2}) D_{a_3} K_{a_4 a_1} - \frac{2}{9} (\dbar K_{a_5 a_1}) D_{a_3} K_{a_4 a_2}\nonumber \\
	&\,+ \frac{4}{9} (\dbar K_{a_4 a_1}) D_{a_3} K_{a_5 a_2} - \frac{41}{45} (\dbar K_{a_2 a_1}) D_{a_3} K_{a_5 a_4} + \frac{2}{3} (\dbar K_{a_5 a_3}) D_{a_4} K_{a_2 a_1} \nonumber \\
	&\,+ \frac{2}{3} (\dbar K_{a_5 a_2}) D_{a_4} K_{a_3 a_1} - \frac{14}{45} (\dbar K_{a_5 a_1}) D_{a_4} K_{a_3 a_2} + \frac{4}{45} (\dbar K_{a_3 a_1}) D_{a_4} K_{a_5 a_2} \nonumber \\
	&\,+ \frac{22}{45} (\dbar K_{a_2 a_1}) D_{a_4} K_{a_5 a_3} + \frac{2}{3} (\dbar K_{a_4 a_3}) D_{a_5} K_{a_2 a_1} + \frac{5}{3} (\dbar K_{a_4 a_2}) D_{a_5} K_{a_3 a_1} \nonumber \\
	&\,+ \frac{46}{15} (\dbar K_{a_3 a_2}) D_{a_5} K_{a_4 a_1} - \frac{28}{45} (\dbar K_{a_4 a_1}) D_{a_5} K_{a_3 a_2} - \frac{20}{9} (\dbar K_{a_3 a_1}) D_{a_5} K_{a_4 a_2}\nonumber \\
	&\,- \frac{113}{45} (\dbar K_{a_2 a_1}) D_{a_5} K_{a_4 a_3} + \,^1 F_{a_5 a_4 a_3 a_2 a_1} \label{MonsterLim61}
\end{align}
\begin{align}
	[D_{a_5} D_{a_4} D_{a_3} D_{a_2} \dbar^2 \Sigma] &= \frac{29}{10} K_{a_5 a_4} \dbar^2 K_{a_3 a_2} + \frac{5}{2} K_{a_5 a_3} \dbar^2 K_{a_4 a_2} + \frac{1}{10} K_{a_4 a_3} \dbar^2 K_{a_5 a_2} \nonumber \\
	&\quad +\frac{1}{2} K_{a_5 a_2} \dbar^2 K_{a_4 a_3} + \frac{1}{2} K_{a_4 a_2} \dbar^2 K_{a_5 a_3} + \frac{11}{10} K_{a_3 a_2} \dbar^2 K_{a_5 a_4}\nonumber\\
	&\quad + \frac{8}{45} \,^{(3)}R_{a_4 a_3 a_5}\,^b \dbar K_{b a_2} + \frac{52}{45} \,^{(3)}R_{a_5 a_4 a_3}\,^b \dbar K_{b a_2} - \frac{1}{5} \,^{(3)}R_{a_3 a_2 a_5}\,^b \dbar K_{b a_4} \nonumber\\
	&\quad + \frac{3}{5} \,`{(3)}R_{a_5 a_3 a_2}\,^b \dbar K_{b a_4} + K_{a_4}\,^b K_{b a_3} \dbar K_{a_5 a_2} + K_{a_5}\,^b K_{b a_2} \dbar K_{a_4 a_3} \nonumber\\
	&\quad + K_{a_4}\,^b K_{b a_2} \dbar K_{a_5 a_3} + 3 K_{a_5}\,^b K_{b a_3} \dbar K_{a_4 a_2} - \frac{509}{90} K_{a_5}\,^b K_{a_4 a_3} \dbar K_{b a_2} \nonumber\\
	&\quad -\frac{167}{99} K_{a_4}\,^b K_{a_5 a_3} \dbar K_{b a_2} -\frac{329}{90} K_{a_3}\,^b K_{a_5 a_4} \dbar K_{b a_2} -\frac{39}{10} K_{a_5}\,^b K_{a_4 a_2} \dbar K_{b a_3} \nonumber\\
	&\quad -\frac{5}{6} K_{a_4}\,^b K_{a_4 a_2} \dbar K_{b a_3} - \frac{97}{30} K_{a_5}\,^b K_{a_5 a_4} \dbar K_{b a_3} - \frac{37}{10} K_{a_5}\,^b K_{a_3 a_2} \dbar K_{b a_4} \nonumber\\
	&\quad -\frac{49}{30} K_{a_3}\,^b K_{a_5 a_2} \dbar K_{b a_4} -\frac{91}{30} K_{a_2}\,^b K_{a_5 a_3} \dbar K_{b a_4} -\frac{21}{10} K_{a_4}\,^b K_{a_3 a_2} \dbar K_{b a_5} \nonumber\\
	&\quad -\frac{31}{30} K_{a_3}\,^b K_{a_4 a_2} K_{b a_5} - \frac{13}{30} K_{a_2}\,^b K_{a_4 a_3} \dbar K_{b a_5} + \frac{29}{5} K_{a_5}\,^b K_{b a_4} \dbar K_{a_3 a_2} \nonumber\\
	&\quad -\frac{1}{3}K_{a_4}\,^b K_{b a_5} \dbar K_{a_3 a_2} + \frac{5}{3} K_{a_5}\,^b K_{b a_3} \dbar K_{a_4 a_2} - \frac{17}{15} K_{a_4}\,^b K_{b a_3} \dbar K_{a_5 a_2} \nonumber\\
	&\quad -\frac{1}{3} K_{a_5}\,^b K_{b a_2} \dbar K_{a_4 a_3} - \frac{1}{3} K_{a_4}\,^b K_{b a_2} \dbar K_{a_5 a_3} + \frac{11}{5} K_{a_3}\,^b K_{b a_2} \dbar K_{a_5 a_4} \nonumber\\
	&\quad -\frac{1}{3} K_{a_3}\,^b K_{b a_2} \dbar K_{a_5 a_4} + \,^3 F_{a_5 a_4 a_3 a_2} \label{MonsterLim63}
\end{align}
where $\,^1 F_{a_5 a_4 a_3 a_2 a_1}$ and $\,^3 F_{a_5 a_4 a_3 a_2}$ are tangent tensors which do not involve normal derivatives of the extrinsic curvature. Note that both the sixth order tangent derivative of $\Sigma$ and the fifth order tangent derivative of $\dbar \Sigma$ involve normal derivatives of the extrinsic curvature $K_{ab}$ of first order, i.e., second order \textit{normal Lie derivatives} of the intrinsic metric $h_{ab}$, but the fourth order derivative of $\dbar^2 \Sigma$ already includes second order derivatives of $K_{ab}$, implying third order \textit{normal Lie derivatives} of the intrinsic metric $h_{ab}$. This fact represents a complication in treating the initial value formulation of semiclassical gravity, a subject that was extensively discussed in \cite{JKMS2023:Programmatic}.

\subsection{Projection of hybrid derivatives of \texorpdfstring{$\Sigma$}{S}} \label{3p1Hybrid}
Just as explained in Appendix \ref{LimitsMixedDerivativesSigma}, for many purposes $\nabla_{a'_0} \Sigma$ can be regarded as a scalar. We can assume this literally in the case of the $3+1$ decomposition of its derivatives, working with the \textit{scalar} $\zeta=\nabla_{c'} \Sigma$ within our $3+1$ formalism, and only when limits are taken, does the vector nature of $\zeta_{c'}$ emerge, \textit{shifting} the labeling of the components by \textit{one bit left}. For example, we will represent $\zeta$ by the scalar
\begin{equation}
	\zeta = \mathscr{D} + \mathscr{N},
\end{equation}
where $\mathscr{D}$ is identified with $D_{c'}\Sigma$ and $\mathscr{N}$ is identified with $n_{a'}\dbar'\Sigma=-n_{c'} n^{b'}\nabla_{b'} \Sigma$. In the coincidence limit, $\zeta \to [\Sigma_{a'_0}]$, which is a vector, so in the representation of the zeroth-order tensor (ie., scalar) $\zeta$, the $3+1$ projections go from $2^0$ to $2^1$, that is,
\begin{align}
		\lim\limits_{\x'\to\x} \zeta \mapsto \,^0[\zeta]_{a_0}+ n_{a_0}\,^1[\zeta],
\end{align}
with
\begin{subequations}\label{DprimeSigma}
	\begin{align}
		\,^0[\zeta]_{a_0} &= [D_{a'_0}\Sigma] = 0, \\
		\,^1[\zeta] &= [\dbar'\Sigma]= 0.
	\end{align}
\end{subequations}
The \textit{one-bit left-shift} multiplies the components by $2$, so for each component labeled $m$, there will be two components, $2m$ and $2m +1$. Component $2m$ will result from the substitutions $\mathscr{D}\to D_{a'_0} \Sigma$ and $\mathscr{N}\to 0$ followed by taking the coincidence limit, while the component $2m+1$ will result from the substitution of $\mathscr{N}\to \dbar'\Sigma$ and $\mathscr{D}\to 0$, followed by taking the coincidence limit. Then, for the derivative of $\zeta$ we have
\begin{align}
		\nabla_{a_0} \zeta &= D_{a_0}\zeta + n_{a_0} \dbar \zeta,
\end{align}
that is, tangent projection component $m=0$ is $D_{a_0}\zeta$ and the tangent projection component $m=1$ is $\dbar \zeta$. When we compute the coincidence limit we will have:
\begin{subequations} \label{3p1HybridDerivativesSigma1}
	\begin{align}
		m&=0, \quad [D_{a_1} D_{a'_0} \Sigma] = -h_{a_1 a_0},\\
		m&=1, \quad [D_{a_1} \dbar' \Sigma] = 0,\\
		m&=2, \quad [\dbar D_{a'_0} \Sigma] = 0,\\
		m&=3, \quad [\dbar \dbar' \Sigma] = 1,
	\end{align}
\end{subequations}
where the right hand comes from the $3+1$ projection of \eqref{limDDHybridSigma}. Note that the label of each index increased one unit to fit the \textit{new} index $a_0$ at the right. By following this procedure, and considering the $3+1$ projections of the limits obtained in Appendix \ref{LimitsMixedDerivativesSigma}, we obtain the following limits for hybrid derivatives in $3+1$ form, for the second derivative of $\zeta$,
\begin{subequations} \label{3p1HybridDerivativesSigma2}
\begin{align}
	[D_{a_2} D_{a_1}D_{a_0}]&=0, \\
	[D_{a_2} D_{a_1}\dbar'\Sigma]&=-K_{a_2a_1}, \\
	[D_{a_2} \dbar D_{a_0}\Sigma] &=K_{a_2a_0}, \\
	[D_{a_2} \dbar\dbar'\Sigma] &=0, \\
	[D_{a_1} \dbar D_{a_0}\Sigma] &=0, \\
	[D_{a_1} \dbar\dbar'\Sigma] &=-u_{a_1}, \\
	[\dbar^2 D_{a_0}\Sigma]&=u_{a_0}, \\
	[\dbar^2 \dbar'\Sigma]&=0,
\end{align}
\end{subequations}
for the third derivative of $\zeta$,
\begin{subequations} \label{3p1HybridDerivativesSigma3}
\begin{align}
	[D_{a_3} D_{a_2} D_{a_1}D_{a'_0} \Sigma]  &=-\frac{2}{3} K_{a_2a_1} K_{a_3a_0}-\frac{1}{6} K_{a_2a_0} K_{a_3a_1}-\frac{1}{6} K_{a_1a_0} K_{a_3a_2}-\frac{1}{6} \,^{(3)}R_{a_3a_0a_2a_1}-\frac{1}{3} \,^{(3)}R_{a_3a_2a_1a_0}, \\
	[D_{a_3} D_{a_2} D_{a_1}\dbar' \Sigma]  &=-\frac{1}{6} D_{a_1} K_{a_3a_2} -\frac{1}{6} D_{a_2} K_{a_3a_1} -\frac{2}{3} D_{a_3} K_{a_2a_1}, \\
	[D_{a_3} D_{a_2} \dbar D_{a'_0} \Sigma]  &=-\frac{1}{6} D_{a_0} K_{a_3a_2} +\frac{1}{3} D_{a_2} K_{a_3a_0} +\frac{5}{6} D_{a_3} K_{a_2a_0}, \\
	[D_{a_3} D_{a_2} \dbar \dbar' \Sigma]  &=\frac{7}{6} K^{b_0}\,_{a_2} K_{b_0a_3}-\frac{1}{6} u_{a_2} u_{a_3}-\frac{1}{6} \dbar  K_{a_3a_2} +\frac{1}{6} D_{a_3} u_{a_2}, \\
	[D_{a_3} D_{a_1} \dbar D_{a'_0} \Sigma]  &=-K_{a_3a_0} u_{a_1}-\frac{1}{6} D_{a_0} K_{a_3a_1} +\frac{1}{3} D_{a_1} K_{a_3a_0} -\frac{1}{6} D_{a_3} K_{a_1a_0}, \\
	[D_{a_3} D_{a_1} \dbar \dbar' \Sigma]  &=\frac{1}{6} K^{b_0}\,_{a_1} K_{b_0a_3}-\frac{1}{6} u_{a_1} u_{a_3}-\frac{1}{6} \dbar  K_{a_3a_1} -\frac{5}{6} D_{a_3} u_{a_1}, \\
	[D_{a_3} \dbar ^2D_{a'_0} \Sigma] &=-\frac{1}{3} K^{b_0}\,_{a_0} K_{b_0a_3}+\frac{1}{3} u_{a_0} u_{a_3}+\frac{1}{3} \dbar  K_{a_3a_0} +\frac{2}{3} D_{a_3} u_{a_0}, \\
	[D_{a_3} \dbar ^2\dbar' \Sigma] &=K^{b_0}\,_{a_3} u_{b_0}\\
	[D_{a_2} D_{a_1} \dbar D_{a'_0} \Sigma]  &=-K_{a_2a_1} u_{a_0}+\frac{1}{3} D_{a_0} K_{a_2a_1} -\frac{1}{6} D_{a_1} K_{a_2a_0} -\frac{1}{6} D_{a_2} K_{a_1a_0}, \\
	[D_{a_2} D_{a_1} \dbar \dbar' \Sigma]  &=-\frac{1}{3} K^{b_0}\,_{a_1} K_{b_0a_2}+\frac{1}{3} u_{a_1} u_{a_2}-\frac{2}{3} \dbar  K_{a_2a_1} -\frac{1}{3} D_{a_2} u_{a_1}, \\
	[D_{a_2} \dbar ^2D_{a'_0} \Sigma] &=\frac{1}{6} K^{b_0}\,_{a_0} K_{b_0a_2}-\frac{1}{6} u_{a_0} u_{a_2}+\frac{5}{6} \dbar  K_{a_2a_0} +\frac{1}{6} D_{a_2} u_{a_0}, \\
	[D_{a_2} \dbar ^2\dbar' \Sigma] &=K^{b_0}\,_{a_2} u_{b_0}\\
	[\dbar ^2 D_{a_1}D_{a'_0} \Sigma] &=\frac{1}{6} K^{b_0}\,_{a_0} K_{b_0a_1}-\frac{7}{6} u_{a_0} u_{a_1}-\frac{1}{6} \dbar  K_{a_1a_0} +\frac{1}{6} D_{a_1} u_{a_0}, \\
	[\dbar ^2 D_{a_1}\dbar' \Sigma] &=-\dbar  u_{a_1}, \\
	[\dbar ^3D_{a'_0} \Sigma]&=\dbar  u_{a_0}, \\
	[\dbar ^3\dbar' \Sigma]&=u^2,
\end{align}
\end{subequations}
for the fourth derivatives of $\zeta$, we will only write the terms relevant for the construction of the surface Taylor series of $\dbar'\Sigma$ and $\dbar \dbar' \Sigma$:
\begin{subequations}
	\begin{align}
		[D_{a_4} D_{a_3} D_{a_2} D_{a_1}\dbar' \Sigma] =& -\frac{1}{3} K_{a_4 a_3} K^{b_0}\,_{a_1} K_{b_0 a_2}+\frac{1}{6} K_{a_4 a_2} K^{b_0}\,_{a_1} K_{b_0 a_3}-\frac{5}{6} K_{a_4 a_1} K^{b_0}\,_{a_2} K_{b_0 a_3} \nonumber\\
		&+\frac{1}{2} K_{a_3 a_2} K^{b_0}\,_{a_1} K_{b_0 a_4} -\frac{1}{2} K_{a_3 a_1} K^{b_0}\,_{a_2} K_{b_0 a_4}-\frac{1}{2} K_{a_2 a_1} K^{b_0}\,_{a_3} K_{b_0 a_4} \nonumber\\
		&+\frac{1}{6} K_{b_0 a_4} \,^{(3)}R^{b_0}\,_{a_1 a_3 a_2}+\frac{1}{2} K_{b_0 a_3} \,^{(3)}R^{b_0}\,_{a_1 a_4 a_2} +\frac{1}{2} K_{b_0 a_2} \,^{(3)}R^{b_0}\,_{a_1 a_4 a_3} \nonumber\\
		&+\frac{1}{2} K_{b_0 a_1} \,^{(3)}R^{b_0}\,_{a_2 a_4 a_3}-\frac{1}{6} K_{b_0 a_4} \,^{(3)}R^{b_0}\,_{a_3 a_2 a_1}+\frac{1}{3} K_{a_4 a_3} u_{a_1} u_{a_2}-\frac{1}{6} K_{a_4 a_2} u_{a_1} u_{a_3} \nonumber\\
		&+\frac{1}{3} K_{a_4 a_1} u_{a_2} u_{a_3}-\frac{1}{3} K_{a_3 a_2} u_{a_1} u_{a_4}+\frac{1}{6} K_{a_3 a_1} u_{a_2} u_{a_4}+\frac{1}{6} K_{a_2 a_1} u_{a_3} u_{a_4} \nonumber\\
		&+\frac{1}{3} K_{a_4 a_3} \dbar K_{a_2 a_1} -\frac{1}{6} K_{a_4 a_2} \dbar K_{a_3 a_1} +\frac{1}{3} K_{a_4 a_1} \dbar K_{a_3 a_2} -\frac{1}{3} K_{a_3 a_2} \dbar K_{a_4 a_1}  \nonumber\\
		&+\frac{1}{6} K_{a_3 a_1} \dbar K_{a_4 a_2} +\frac{1}{6} K_{a_2 a_1} \dbar K_{a_4 a_3} -\frac{1}{3} K_{a_4 a_3} D_{a_2} u_{a_1} +\frac{1}{6} K_{a_4 a_2} D_{a_3} u_{a_1}  \nonumber\\
		&-\frac{1}{3} K_{a_4 a_1} D_{a_3} u_{a_2} -\frac{1}{2} D_{a_3} D_{a_2} K_{a_4 a_1} +\frac{1}{3} K_{a_3 a_2} D_{a_4} u_{a_1} -\frac{1}{6} K_{a_3 a_1} D_{a_4} u_{a_2} \nonumber\\
		&-\frac{1}{6} K_{a_2 a_1} D_{a_4} u_{a_3}+\frac{1}{6} D_{a_4} D_{a_1} K_{a_3 a_2} -\frac{1}{3} D_{a_4} D_{a_2} K_{a_3 a_1} -\frac{1}{3} D_{a_4} D_{a_3} K_{a_2 a_1}, \\
		[D_{a_4} D_{a_3} D_{a_2}\dbar \dbar' \Sigma] =& \frac{1}{6} K^{b_0}\,_{a_3} D_{a_2} K_{b_0 a_4} +\frac{1}{2} K^{b_0}\,_{a_4} D_{a_3} K_{b_0 a_2} +\frac{1}{6} K^{b_0}\,_{a_2} D_{a_3} K_{b_0 a_4} +\frac{1}{2} K^{b_0}\,_{a_3} D_{a_4} K_{b_0 a_2}  \nonumber\\
		&+\frac{1}{2} K^{b_0}\,_{a_2} D_{a_4} K_{b_0 a_3} +\frac{1}{6} u_{a_3} D_{a_4} u_{a_2} +\frac{1}{6} u_{a_2} D_{a_4} u_{a_3} +\frac{1}{6} D_{a_4} \dbar K_{a_3 a_2}  \nonumber\\
		&-\frac{1}{6} D_{a_4} D_{a_3} u_{a_2} +\frac{1}{2} K^{b_0}\,_{a_4} D_{b_0} K_{a_3 a_2} +\frac{1}{6} K^{b_0}\,_{a_3} D_{b_0} K_{a_4 a_2} +\frac{1}{6} K^{b_0}\,_{a_2} D_{b_0} K_{a_4 a_3}.
	\end{align}
\end{subequations}
For the fifth derivatives of $\zeta$, we will only write the leading order terms in normal derivatives relevant for the construction of the surface Taylor series of $\dbar'\Sigma$ and $\dbar \dbar' \Sigma$:
\begin{subequations}
	\begin{align}
		[D_{a_5} D_{a_4} D_{a_3} D_{a_2} D_{a_1}\dbar' \Sigma] = &-\frac{1}{6} \dbar  K_{a_5 a_1}  D_{a_2} K_{a_4 a_3} -\frac{11}{90} \dbar  K_{a_4 a_1}  D_{a_2} K_{a_5 a_3} +\frac{1}{6} \dbar  K_{a_5 a_4}  D_{a_3} K_{a_2 a_1} \nonumber\\
		&+\frac{2}{9} \dbar  K_{a_5 a_2}  D_{a_3} K_{a_4 a_1} +\frac{2}{9} \dbar  K_{a_5 a_1}  D_{a_3} K_{a_4 a_2} -\frac{1}{10} \dbar  K_{a_4 a_2}  D_{a_3} K_{a_5 a_1} \nonumber\\
		&-\frac{2}{5} \dbar  K_{a_4 a_1}  D_{a_3} K_{a_5 a_2} +\frac{11}{18} \dbar  K_{a_2 a_1}  D_{a_3} K_{a_5 a_4} +\frac{1}{6} \dbar  K_{a_5 a_3}  D_{a_4} K_{a_2 a_1} \nonumber\\
		&+\frac{2}{45} \dbar  K_{a_5 a_2}  D_{a_4} K_{a_3 a_1} +\frac{1}{9} \dbar  K_{a_5 a_1}  D_{a_4} K_{a_3 a_2} +\frac{1}{90} \dbar  K_{a_3 a_2}  D_{a_4} K_{a_5 a_1} \nonumber\\
		&-\frac{4}{45} \dbar  K_{a_2 a_1}  D_{a_4} K_{a_5 a_3} +\frac{1}{6} \dbar  K_{a_4 a_3}  D_{a_5} K_{a_2 a_1} +\frac{1}{6} \dbar  K_{a_4 a_2}  D_{a_5} K_{a_3 a_1} \nonumber\\
		&-\frac{1}{90} \dbar  K_{a_4 a_1}  D_{a_5} K_{a_3 a_2} +\frac{23}{90} \dbar  K_{a_3 a_2}  D_{a_5} K_{a_4 a_1} -\frac{1}{30} \dbar  K_{a_3 a_1}  D_{a_5} K_{a_4 a_2} \nonumber\\
		&-\frac{1}{18} \dbar  K_{a_2 a_1}  D_{a_5} K_{a_4 a_3} + \,^1\mathfrak{Q}_{a_5 a_4 a_3 a_2 a_1},
	\end{align}
	\begin{align}
		[D_{a_5} D_{a_4} D_{a_3} D_{a_2}\dbar \dbar' \Sigma] =
		&-\frac{1}{15} K_{a_5 a_4} \dbar^2 K_{a_3 a_2} -\frac{1}{10} K_{a_5 a_3} \dbar^2 K_{a_4 a_2} +\frac{1}{10} K_{a_4 a_3} \dbar^2 K_{a_5 a_2} \nonumber\\
		&-\frac{3}{10} K^{b_0}\,_{a_4} K_{b_0 a_5} \dbar K_{a_3 a_2} -\frac{8}{15} K^{b_0}\,_{a_3} K_{b_0 a_5} \dbar K_{a_4 a_2} -\frac{1}{3} K^{b_0}\,_{a_2} K_{b_0 a_5} \dbar K_{a_4 a_3} \nonumber\\
		&+\frac{1}{30} K^{b_0}\,_{a_3} K_{b_0 a_4} \dbar K_{a_5 a_2} -\frac{1}{6} K^{b_0}\,_{a_2} K_{b_0 a_4} \dbar K_{a_5 a_3} -\frac{1}{6} K^{b_0}\,_{a_2} K_{b_0 a_3} \dbar K_{a_5 a_4} \nonumber\\
		&-\frac{17}{45} K_{a_5 a_4} K_{b_0 a_3} \dbar K^{b_0}\,_{a_2} -\frac{11}{45} K_{a_5 a_3} K_{b_0 a_4} \dbar K^{b_0}\,_{a_2} -\frac{26}{45} K_{a_4 a_3} K_{b_0 a_5} \dbar K^{b_0}\,_{a_2} \nonumber\\
		&+\frac{1}{45} \,^{(3)}R^{b_0}\,_{a_3 a_5 a_4} \dbar K^{b_0}\,_{a_2} +\frac{1}{9} \,^{(3)}R^{b_0}\,_{a_5 a_4 a_3} \dbar K_{b_0 a_2} -\frac{4}{15} K_{a_5 a_4} K^{b_0}\,_{a_2} \dbar K_{b_0 a_3} \nonumber\\
		&+\frac{1}{6} K_{a_5 a_2} K^{b_0}\,_{a_4} \dbar K_{b_0 a_3} -\frac{1}{10} K_{a_4 a_2} K^{b_0}\,_{a_5} \dbar K_{b_0 a_3} +\frac{4}{15} K_{a_5 a_3} K^{b_0}\,_{a_2} \dbar K_{b_0 a_4} \nonumber\\
		&+\frac{4}{15} K_{a_5 a_2} K^{b_0}\,_{a_3} \dbar K_{b_0 a_4} +\frac{1}{6} K_{a_3 a_2} K^{b_0}\,_{a_5} \dbar K_{b_0 a_4} -\frac{1}{10} \,^{(3)}R^{b_0}\,_{a_2 a_5 a_3} \dbar K_{b_0 a_4} \nonumber\\
		&+\frac{7}{30} K_{a_4 a_3} K^{b_0}\,_{a_2} \dbar K_{b_0 a_5} +\frac{7}{30} K_{a_4 a_2} K^{b_0}\,_{a_3} \dbar K_{b_0 a_5} +\frac{1}{6} K_{a_3 a_2} K^{b_0}\,_{a_4} \dbar K_{b_0 a_5} \nonumber\\
		&+\frac{1}{10} \,^{(3)}R^{b_0}\,_{a_2 a_4 a_3} \dbar K_{b_0 a_5} -\frac{1}{10} K_{b_0 a_3} \dbar \,^{(3)}R^{b_0}\,_{a_2 a_5 a_4} -\frac{2}{15} K_{a_5 a_4} u_{a_3} \dbar u_{a_2} \nonumber\\
		&-\frac{1}{5} K_{a_5 a_3} u_{a_4} \dbar u_{a_2} +\frac{1}{5} K_{a_4 a_3} u_{a_5} \dbar u_{a_2} -\frac{1}{15} K_{a_5 a_4} u_{a_2} \dbar u_{a_3} -\frac{1}{10} K_{a_5 a_3} u_{a_2} \dbar u_{a_4} \nonumber\\
		&+\frac{1}{10} K_{a_4 a_3} u_{a_2} \dbar u_{a_5} + \,^2\mathfrak{Q}_{a_5 a_4 a_3 a_2}
	\end{align}
\end{subequations}
where $\,^1\mathfrak{Q}_{a_5 a_4 a_3 a_2 a_1}$ and $\,^2\mathfrak{Q}_{a_5 a_4 a_3 a_2}$ are tangent tensors which do not involve normal derivatives.

\end{document}